\pretocmd{\chapter}{\addtocontents{toc}{\protect\addvspace{15\p@}}}{}{}
\pretocmd{\section}{\addtocontents{toc}{\protect\addvspace{5\p@}}}{}{}
\let\oldtocsection=\tocsection
\let\oldtocsubsection=\tocsubsection
\let\oldtocsubsubsection=\tocsubsubsection
\renewcommand{\tocsection}[2]{\hspace{0em}\oldtocsection{#1}{#2}}
\renewcommand{\tocsubsection}[2]{\hspace{1.8em}\oldtocsubsection{#1}{#2}}
\renewcommand{\tocsubsubsection}[2]{\hspace{4.4em}\oldtocsubsubsection{#1}{#2}}
\definecolor{linkcolor}{HTML}{e88d67} 
\definecolor{citecolor}{HTML}{e88d67} 
\definecolor{urlcolor}{HTML}{e88d67} 
\DeclareMathOperator{\vsp}{\phantom{\dfrac{\dfrac{}{}}{\dfrac{}{}}}}
\DeclareMathOperator{\ad}{ad}
\newcommand{\Ai}[1]{\text{Ai} \left( #1 \right)}
\newcommand{\brackets}[1]{\left( #1 \right)}
\newcommand{\PoissonBrackets}[1]{\left\{ #1 \right\}}
\newcommand{\LieBrackets}[1]{\left[ #1 \right]}
\newcommand{\angleBrackets}[1]{\langle #1 \rangle}
\newcommand{\abs}[1]{\left| #1 \right|}
\newcommand{\typeA}[1]{A_{#1}^{\brackets{1}}}
\newcommand{\PIIn}[1]{\text{P}_\text{II}^{\brackets{ #1 }}}
\newcommand{\HIIn}[1]{\text{H}_\text{II}^{\brackets{ #1 }}}
\newcommand{\dsum}{\displaystyle \sum}
\newcommand{\quot}[2]{{\raisebox{.3em}{$#1\!$}\bigl/\raisebox{-.3em}{$\!#2$}}}
\newcommand{\Painleve}{Painlev{\'e} }
\newcommand{\Backlund}{B{\"a}cklund }
\newcommand{\Mobius}{M{\"o}bius }
\theoremstyle{plain}
\newtheorem{thm}{Theorem}[]
\newtheorem{lem}{Lemma}[]
\newtheorem{prop}{Proposition}[]
\newtheorem{corl}{Corollary}[]
\theoremstyle{definition}
\newtheorem{defn}{Definition}[]
\newtheorem{exmp}{Example}[]
\theoremstyle{remark}
\newtheorem{rem}{Remark}
\begin{document}

    \title[On symmetries of the non-stationary $\PIIn{n}$ hierarchy and~their~applications]{On symmetries of the non-stationary $\PIIn{n}$ hierarchy and~their~applications}

    \author{Irina Bobrova}
    \noindent\address{\noindent Faculty of Mathematics, National Research University Higher School of Economics, Usacheva str. 6, Moscow, Russia}
    \email{ia.bobrova94@gmail.com}

    \subjclass{Primary 34M55. Secondary 37K35, 33E17, 34M25} 
    \keywords{\Painleve equations, \Backlund transformations, affine Weyl groups, Yablonskii-Vorobiev polynomials, polynomial $\tau$-functions, Jacobi-Trudi determinants}

    \maketitle

\begin{abstract}
In the current paper we study auto-\Backlund transformations of the non-stationary second \Painleve hierarchy $\PIIn{n}$ depending on $n$ parameters: a parameter $\alpha_n$ and \textit{times} $t_1, \dots, t_{n-1}$. Using generators $s^{(n)}$ and $r^{(n)}$ of these symmetries, we have constructed an affine Weyl group $W^{(n)}$ and its extension $\Tilde{W}^{(n)}$ associated with the $n$-th member of considered hierarchy. We determined $\PIIn{n}$ rational solutions via Yablonskii-Vorobiev-type polynomials $u_m^{(n)} \brackets{z}$. We brought out a relation between Yablonskii-Vorobiev-type polynomials and polynomial $\tau$-functions $\tau_m^{(n)} \brackets{z}$ and found their determinant representation in the Jacobi-Trudi form.
\end{abstract}

\tableofcontents

\section*{Introduction}

The \Painleve equations appeared as a result of a classification of complex second-order differential equations with no movable singular points except of poles
\begin{equation}
    \dfrac{d^2 w \brackets{z}}{d z^2}
    = P \brackets{z, w \brackets{z}, \dfrac{d w \brackets{z}}{dz}},
\end{equation}
where $P \brackets{z, w \brackets{z},  \left. d w \brackets{z}\right/dz}$ is a meromorphic function in $z$ and is a rational function in $w \brackets{z}$ and its derivative. 50 classes were found by P. \Painleve and his school \cite{painleve1900memoire, painleve1902equations}. Later B. Gambier \cite{gambier1910equations} proved that only six of them define new special functions which are called \textit{the \Painleve transcendent}. These six equations are known as \textit{the \Painleve equations}.

The constant interest in the \Painleve equations is due to their mathematical beauty in various fields of mathematics (for instance, each of the \Painleve equations arises as an isomonodromic problem on the Riemann sphere) and their frequent use in mathematical physics, especially in the applications to classical, quantum, and discrete integrable systems. Although the \Painleve equations were introduced more than 100 years ago, some areas of its applications are still poorly understood. 

Commonly known that auto-\Backlund transformations of the second \Painleve equation in the following form
\begin{equation} \label{PII1}
    \PIIn{1}
    \LieBrackets{w \brackets{z;\alpha_1}}
    :
    \quad
    \dfrac{d^2 w}{dz^2}
    = 2 w^3 
    + z w 
    + \alpha_1,
    \quad
    \alpha_1 \in \mathbb{C},
\end{equation}
have a group structure related to an affine Weyl group $W^{(1)} = \angleBrackets{s^{(1)}, r^{(1)}}$ of type $\typeA{1}$, i.e. the fundamental relations are $\brackets{s^{(1)}}^2 = 1$ and $\brackets{r^{(1)}}^2 = 1$. On the one hand, this fact is helpful to generating new solutions of the second \Painleve equation from a trivial one. In other words, using \Backlund transformations, we can construct the dynamics on the space of initial data \cite{okamoto1986studies}, thereby defining the discrete dynamics on this space \cite{sakai2001rational}. On the other hand, the second \Painleve equation has a symmetric form in some coordinates obtained by an extended affine Weyl group $\tilde{W}^{(1)} = \angleBrackets{s^{(1)}_0, s^{(1)}_1; \pi^{(1)}}$, where $s^{(1)}_0 = s^{(1)}$, $s^{(1)}_1 = r^{(1)} s^{(1)} r^{(1)}$, $\pi^{(1)} = r^{(1)}$, with an additional fundamental relation $\pi^{(1)} s^{(1)}_{i} = s^{(1)}_{i + 1} \pi^{(1)}$, $i \in \quot{\mathbb{Z}}{2 \mathbb{Z}}$. 

It is also well-known that the second \Painleve equation is a result of a self-similar reduction from the modified Korteveg-de Vries equation \cite{kudryashov1997first}. Since the latter has the $\tau$-function, the second \Painleve equation also has it. One of the amazing fact about the $\tau$-function in the \Painleve theory is that for the second \Painleve equation it has two determinant representations: in the Jacobi-Trudi form and in the Hankel form \cite{kajiwara1996determinant}. The $\tau$-function of the second \Painleve equation also appears as a solution of the bilinear system \cite{fukutani2000special} and thus is determined on a corresponding lattice. One of the equations of this system is known as o \textit{the Toda lattice equation} \cite{kajiwara2007remark}. This fact arises from a symmetric form constructed by an extended affine Weyl group \cite{noumi2004painleve}.

Before papers \cite{yablonskii1959rational} and \cite{vorobiev1965rational}, it was believed that the second \Painleve equation determines only new special functions. However, A. Yablonskii and A. Vorobiev proved that for integer parameters one can construct rational solutions by special polynomials called \textit{the Yablonskii-Vorobiev polynomials}. This phenomena appears for all \Painleve equations. So, for example, rational solutions of the third \Painleve equation are connected with \textit{the Umemura special polynomials} \cite{clarkson2006special}. Thus, one of the classical areas of the \Painleve equations theory is special polynomials and related to them topics (for instance, determinant representations, root plots on the complex plane) \cite{clarkson2003second,bertola2015zeros}. In addition to special polynomials, some \Painleve equations are associated with known special functions. For instance, for a half-integer parameter $\alpha_1$, the second \Painleve equation has a solution in terms of the Airy functions.

Since the \Painleve equations and related to them hierarchies are closely connected with the well-known integrable PDEs and their higher analogs, the investigation of them is so significant today for mathematical physics. In particular, the reason behind the omnipresent appearance of these equations is that they are innately linked to the Toda hierarchy.

There are several different generalizations of the \Painleve equations. In this paper, we study symmetries of \textit{the non-stationary hierarchy of the second \Painleve equation} obtained in work \cite{mazzocco2007hamiltonian}:
\begin{gather} \label{nonstatPIIhier}
    \PIIn{n} \LieBrackets{w \brackets{z, \overline{t}; \alpha_n}}
    :
    \quad
    \brackets{\dfrac{d}{dz} + 2 w} 
    \sum_{l = 0}^{n} 
    t_l \mathcal{L}_l \LieBrackets{w^{\prime} - w^2}
    = \alpha_n,
    \quad 
    t_0 = - z,
    \,
    t_n = 1,
    \quad 
    \alpha_n \in \mathbb{C},
    \quad
    n \geq 1,
\end{gather}
where $\mathcal{L}_l \LieBrackets{u}$ are operators defined by the Lenard recursion relation
\begin{align}
    \partial_z {\mathcal{L}}_{l+1} 
    \LieBrackets{u} 
    &= \left( \partial_z^3 + 4 u \partial_z + 2 \partial_z u \right) 
    {\mathcal{L}}_l \LieBrackets{u},
    \\
    {\mathcal{L}}_0 \LieBrackets{u} 
    &= \dfrac{1}{2},
    \quad
    {\mathcal{L}}_l \LieBrackets{0} 
    = 0, 
    \,\,
    l > 0,
\end{align}
and $\overline{t} = \brackets{t_1, \dots, t_{n - 1}}$ is a set of parameters called \textit{times}.

\begin{rem}
   This hierarchy is sometimes called \textit{the generalized \text{\rm $\PIIn{n}$} hierarchy}.
\end{rem}

\begin{rem}[about notation]
   Further in this text, we will accept the following agreements:
   \begin{itemize}
       \item $\prime = \dfrac{d}{dz}$;
       
       \item the number of a member in \eqref{nonstatPIIhier} is fixed by the index $n$, which appears as a parameter index $\alpha_n$, or as the Lenard operator index $\mathcal{L}_n \LieBrackets{u}$, or as an upper index of an object about which we discuss (e.g. $w^{(n)} \brackets{z, \overline{t}}$).
   \end{itemize}
\end{rem}

\begin{exmp}[a few members of \eqref{nonstatPIIhier}] \label{exmpPII012}
    \begin{align}
        \tag{$n=0$}
        \mathcal{L}_0 = \dfrac{1}{2}, 
        &\quad 
        \left( z - 1 \right) w = - \alpha_0 \quad 
        \Rightarrow 
        \quad 
        w = - \dfrac{\alpha_0}{z-1};
        \\
        &\quad 
        \text{(not a differential equation!)}
        \\
        \tag{$n=1$} 
        \mathcal{L}_1 = w^{\prime} - w^2,
        &\quad 
        w^{\prime\prime} = 2 w^3 + z w + \alpha_1;
        \\
        \tag{$n=2$}
        \mathcal{L}_2 = \left( w^{\prime} - w^2 \right)^{\prime\prime} + 3 \left( w^{\prime} - w^2 \right)^{2}, 
        &\quad 
        w^{\prime\prime\prime\prime} - 10 w^2 w^{\prime\prime} - 10 w {w^{\prime}}^{2} + 6 w^5 + t_1 \brackets{w^{\prime\prime} - 2 w^3} = z w + \alpha_2. 
    \end{align}
\end{exmp}

The stationary case $\overline{t} = 0$ was obtained in the paper \cite{joshi2004second} by a self-similar reduction from the modified Korteveg-de Vriez hierarchy
\begin{gather}\label{hiermKdV}
    \partial_{T_{2n+1}} v + \left( \partial_{xx} + 2 \partial_x v \right) \ell_n	\left[ v_x - v^2 \right] = 0, 
\end{gather}
where an operator $ \ell_n	\left[u\right]$ satisfies the Lenard recurrence relation. The reduction is given by the following data
\begin{table}[H]
    \centering
    \begin{tabular}{c|c}
         The symmetry & Independent coordinates \\
         \hline
         \hline
         $X_n = x \partial_x + \brackets{2n + 1} \partial_{T_{2n + 1}} - v \partial_v$
         &
         $
         \begin{matrix}
         w \left( z \brackets{x, T_{2n + 1}} \right) =
         v \brackets{x, T_{2n+1}} \left( \left( 2n + 1 \right) T_{2n + 1} \right)^{\frac{1}{2n+1}},
         \vsp
         \\
         z \brackets{x, T_{2n + 1}} = \dfrac{x}{\left( \left( 2n + 1 \right) T_{2n + 1} \right)^{\frac{1}{2n+1}}}.
         \end{matrix}
         $
    \end{tabular}
\end{table}

To find the non-stationary case \eqref{nonstatPIIhier}, M. Mazzocco and M. Mo have used the fact that the modified Korteveg-de Vriez hierarchy \eqref{hiermKdV} has the Virasoro symmetries. Thus, one can consider the following reduction data
\begin{table}[H]
    \centering
    \begin{tabular}{c|c}
         The Virasoro symmetry & Independent coordinates \\
         \hline
         \hline
         $T_n = \dsum_{k = 0}^n \brackets{2 k + 1} T_{k + 1} \partial_{T_{k + 1}}$
         &
         $
         \begin{matrix}
         t_0 = - z,
         \vsp
         \\
         t_k = \dfrac{\brackets{2 k + 1} T_{k + 1}}{\brackets{\brackets{2 n + 1} T_{n + 1}}^{\frac{2 k + 1}{2 n + 1}}},
         \end{matrix}
         $
    \end{tabular}
\end{table}
\hspace{-0.55cm} by which M. Mazzocco and M. Mo obtained the non-stationary $\PIIn{n}$ hierarchy \eqref{nonstatPIIhier}.

\begin{rem}
The modified Korteveg-de Vriez hierarchy is a result of a \Backlund transformation called \textit{the Miura transformation} of the Korteveg-de Vriez hierarchy:
\begin{table}[H]
    \centering
    \begin{tabular}{ccc|c|cc}
         $\phantom{\dfrac{\dfrac{}{}}{}}$ &
         $\partial_{T_{2n + 1}} u + \partial_x \ell_{n + 1} \LieBrackets{u} = 0$
         &
         $\Rightarrow$
         &
         $ u = v_x - v^2 $
         &
         $\Rightarrow$
         &
         $ \partial_{T_{2n + 1}} v +  \brackets{\partial_{xx} + 2 \partial_x v} \ell_{n} \LieBrackets{v_x - v^2} = 0 $
         \\
         &
         \scriptsize{
         the KdV hierarchy}
         &
         &
         \scriptsize{
         the Miura transformation}
         &
         &
         \scriptsize{
         the mKdV hierarchy}
    \end{tabular}
\end{table}
\end{rem}

As we discussed above, the \Backlund transformations are a powerful tool in the analytical theory of differential equations and have many applications in the integrable systems theory. So, we are interested in two subjects of symmetries of \eqref{nonstatPIIhier}: auto-\Backlund transformations and related to them an affine Weyl group, and rational solutions constructed by Yabloskii-Vorobiev polynomials and their determinant representations. 

Auto-\Backlund transformations of the stationary $\PIIn{n}$ hierarchy were obtained in \cite{clarkson1999backlund}. The authors used a one-to-one correspondence between the stationary $\PIIn{n}$ hierarchy and the Schwarz hierarchy (also called Schwarz-P$_{\text{II}}$ hierarchy), which solutions are invariant under the \Mobius group action. We obtained a generalization of this result to the non-stationary case \eqref{nonstatPIIhier}. 
\begin{thm}
Auto-\Backlund transformations of the $n$-th member \eqref{nonstatPIIhier} are defined for $\alpha_n \neq \frac{1}{2}$ and are~given~as
\begin{align}
    \Tilde{w} \brackets{z, \overline{t}; \tilde{\alpha}_n}
    &= w \brackets{z, \overline{t}; \alpha_n} 
    - \dfrac{2 \alpha_n - \varepsilon}{2 \dsum_{l = 0}^{n} t_l \mathcal{L}_l \LieBrackets{\varepsilon w^{\prime} - w^2}},
    \quad 
    t_0 = -z,
    \,\, 
    t_n = 1,
    \quad
    \tilde \alpha_n = 1 - {\alpha}_n,
    \quad
    \varepsilon = \pm 1.
\end{align}
\end{thm}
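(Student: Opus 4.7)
The plan is to verify directly that $\tilde w$ solves \eqref{nonstatPIIhier} with the shifted parameter $\tilde\alpha_n = 1 - \alpha_n$, following the Schwarz-hierarchy strategy of \cite{clarkson1999backlund} adapted to the non-stationary setting.

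First, set $F_\varepsilon \defEq \dsum_{l=0}^n t_l \mathcal{L}_l \LieBrackets{\varepsilon w' - w^2}$. For $\varepsilon = +1$ equation \eqref{nonstatPIIhier} is precisely $\brackets{\partial_z + 2 w} F_{+} = \alpha_n$; the analogous identity $\brackets{-\partial_z + 2 w} F_{-} = \alpha_n$ for $\varepsilon = -1$ follows from the obvious involution $w \mapsto -w$, $\alpha_n \mapsto -\alpha_n$ of the hierarchy. Both cases combine into the single equation $\brackets{\varepsilon \partial_z + 2 w} F_\varepsilon = \alpha_n$. With $\phi \defEq \tilde w - w = -(2\alpha_n - \varepsilon)/(2 F_\varepsilon)$, I would then expand $\varepsilon \tilde w' - \tilde w^2 = \brackets{\varepsilon w' - w^2} + \varepsilon \phi' - 2 w \phi - \phi^2$ and substitute $\varepsilon F_\varepsilon' = \alpha_n - 2 w F_\varepsilon$ to collapse the correction term to $\varepsilon (2\alpha_n - \varepsilon)/(4 F_\varepsilon^2)$, giving a closed-form expression for the Miura image of $\tilde w$.

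The main obstacle is to show that $\tilde F_\varepsilon \defEq \dsum_{l=0}^n t_l \mathcal{L}_l \LieBrackets{\varepsilon \tilde w' - \tilde w^2}$ coincides with $F_\varepsilon$, so that the hierarchy equation is transported along the \Backlund transformation. I would attack this either by induction on $l$, using the factorisation $\partial_z^3 + 4 u \partial_z + 2 u' = \brackets{\partial_z - 2 w} \partial_z \brackets{\partial_z + 2 w}$ of the Lenard recursion operator whenever $u = w' - w^2$, or, more transparently, by switching to the Schwarz variable $w = -\psi'/\psi$ with $\psi$ solving $-\psi'' = u \psi$; since $u$ is then half the Schwarzian derivative of the ratio of two independent solutions, the Lenard operators depend only on the \Mobius class of this ratio, and the \Backlund of the statement is realised as a specific \Mobius involution.

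Once $\tilde F_\varepsilon = F_\varepsilon$ is in place, the hierarchy for $\tilde w$ follows from
\begin{equation}
    \brackets{\varepsilon \partial_z + 2 \tilde w} \tilde F_\varepsilon = \brackets{\varepsilon \partial_z + 2 w} F_\varepsilon + 2 \phi F_\varepsilon = \alpha_n - \brackets{2\alpha_n - \varepsilon} = \varepsilon - \alpha_n,
\end{equation}
which is the hierarchy for $\tilde w$ with parameter $\tilde\alpha_n = 1 - \alpha_n$ when $\varepsilon = +1$ (the $\varepsilon = -1$ case is then deduced by composing with the involution $w \mapsto -w, \alpha_n \mapsto -\alpha_n$). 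The restriction $\alpha_n \neq 1/2$ excludes the fixed point $\tilde\alpha_n = \alpha_n$ of the parameter involution and therefore guarantees a non-trivial \Backlund transformation.
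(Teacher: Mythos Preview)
Your option (b), the Schwarz--variable argument, is exactly the paper's route, but the paper runs the logic in the opposite direction, which turns your ``main obstacle'' into a triviality. Rather than defining $\phi$ by the formula and then trying to check $\tilde F_\varepsilon = F_\varepsilon$, the paper \emph{begins} with the parametrisation $w = \tfrac12(\ln\psi')'$, applies a M\"obius transformation $\psi\mapsto -1/\varphi$ to produce $\tilde w = \tfrac12(\ln\varphi')'$, and observes that $w'-w^2 = \tfrac12 S(\psi) = \tfrac12 S(\varphi) = \tilde w'-\tilde w^2$ holds \emph{identically}. Thus the arguments of every $\mathcal L_l$ coincide and $\tilde F_+ = F_+$ is immediate; the explicit form of $\phi = \varphi'/\varphi$ is then \emph{derived} from the two hierarchy equations, and the parameter relation $\tilde\alpha_n = 1-\alpha_n$ drops out when one differentiates that relation and feeds it back.

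There is a genuine tension in your write--up that this reversal exposes. You compute a \emph{nonzero} correction $\varepsilon\tilde w'-\tilde w^2 - (\varepsilon w'-w^2) = \varepsilon(2\alpha_n-\varepsilon)/(4F_\varepsilon^2)$, and then propose to show $\tilde F_\varepsilon = F_\varepsilon$ via the Schwarz argument. But the Schwarz argument does not merely show the two $F$'s agree: it shows the \emph{Miura images themselves} agree, i.e.\ the correction is zero. So either your correction calculation or your identity $(\varepsilon\partial_z+2w)F_\varepsilon=\alpha_n$ is off by a constant (in the paper's concrete examples one finds $F_+'+2wF_+=\alpha_n-\tfrac12$, and with that the correction collapses to zero exactly). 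Once you resolve this normalisation, your induction option (a) becomes unnecessary, and your final display $(\varepsilon\partial_z+2\tilde w)\tilde F_\varepsilon = \varepsilon - \alpha_n$ finishes the proof just as in the paper.
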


\begin{rem}
   When $\alpha_n = \frac{1}{2}$, we have special integrals $I_{1/2}^{(n)}$,
   \begin{equation}
        I_{1/2}^{(n)}:
        \quad 
       \sum_{l = 0}^{n} t_l \mathcal{L}_l \LieBrackets{w^{\prime} - w^2}
       = 0,
   \end{equation}
    which satisfy the relation
    \begin{equation}
        \brackets{\dfrac{d}{dz} + 2 w} I_{1/2}^{(n)} = 0.
    \end{equation}
   $I_{1/2}^{(n)}$ can be expressed in terms of the Airy-type functions (an appropriate generalization of the Airy functions see in \cite{fujii2007higher}). Using \Backlund transformations above, we are able to define these special integrals for all half-integer parameters $\alpha_n \in \mathbb{Z} + \frac{1}{2}$.
\end{rem}

One can note that \Backlund transformations above have two generators
\begin{align}
    s^{(n)}:
    &
    \qquad 
     \Tilde{w} \brackets{z, \overline{t}; \tilde{\alpha}_n}
        = w \brackets{z, \overline{t}; \alpha_n} 
        - \dfrac{2 \alpha_n - 1}{2 \dsum_{l = 0}^{n} t_l \mathcal{L}_l \LieBrackets{w^{\prime} - w^2}},
    \\
    r^{(n)}:
    &
    \qquad 
     \Tilde{w} \brackets{z, \overline{t}; -{\alpha}_n}
        = - w\brackets{z, \overline{t}; {\alpha}_n}.
\end{align}
Their compositions $T_m^{(n)} = \brackets{r^{(n)} s^{(n)}}^m$ and $S_k^{(n)}  = \brackets{r^{(n)} s^{(n)}}^k s^{(n)}$ have a group structure associated with the $n$-th member of the non-stationary $\PIIn{n}$ hierarchy. In the case $n = 1$, this group structure is an affine Weyl group of type $\typeA{1}$ that acts on a parameter $\alpha_1$ by reflections with respect to zero and one-half. We generalized this fact to an arbitrary $n$ in \eqref{nonstatPIIhier} as follows.

Let $Q_i$, $P_i$, $i = 1, \dots, n$, be canonical coordinates for the $n$-th member of \eqref{nonstatPIIhier} defined in the work \cite{mazzocco2007hamiltonian}. By Theorem 6.1 in \cite{mazzocco2007hamiltonian}, these canonical coordinates are expressed as matrix entries $b_{2i+1}^{(n)}$ and $b_{2i}^{(n)}$ of a matrix $\mathcal{A}^{(n)}$ of isomonodromic deformation problem (15) in \cite{mazzocco2007hamiltonian}. Since canonical coordinates are expressed in terms of a solution $w \brackets{z, \overline{t}; \alpha_n}$ and its derivatives and \Backlund transformations act on them by rules $s^{(n)}$ and $r^{(n)}$, we found out an explicit form of generators of an affine Weyl group $W^{(n)}$ and its extension $\tilde{W}^{(n)}$ in terms of canonical coordinates $Q_i$, $P_i$, $i = 1, \dots, n$, for the $n$-th member of hierarchy \eqref{nonstatPIIhier}.

\begin{thm}
An affine Weyl group $W^{(n)} = \angleBrackets{s^{(n)}, r^{(n)}}$ associated with the $n$-th member \eqref{nonstatPIIhier} has a type $\typeA{1}$ and has generators $s^{(n)}$ and $r^{(n)}$ acting on canonical coordinates and a parameter $b_n = 2 \alpha_n - 1$ by the following rules
\begin{gather} 
    \tag{$s^{(n)}$}
    \begin{gathered}
    s^{(n)} \brackets{Q_k} 
    = Q_k, 
    \quad 
    1 \leq k \leq n - 1,
    \qquad
    s^{(n)} \brackets{Q_n} 
    = Q_n - \dfrac{b_n}{P_n}, 
    \\
    s^{(n)} \brackets{P_k} 
    = P_k, 
    \quad 
    1 \leq k \leq n,
    \qquad
    s^{(n)} \brackets{b_n} 
    = - b_n;
    \end{gathered}
    \\
    \tag{$r^{(n)}$}
    \begin{gathered}
    r^{(n)} \brackets{Q_k} 
    = b_{2k}^{(n)} 
    - \sum_{l = 1}^{n - k} 
    r^{(n)} \brackets{P_l} r^{(n)} \brackets{Q_{l + k}}, 
    \quad 
    1 \leq k \leq n - 1,
    \qquad
    r^{(n)} \brackets{Q_n} 
    = - Q_n, 
    \\
    r^{(n)} \brackets{P_k} 
    = P_k - \dfrac{1}{2^{2n - 1}} b_{2 \brackets{n - k} + 1}^{(n)}, 
    \quad 
    1 \leq k \leq n,
    \qquad
    r^{(n)} \brackets{b_n} 
    = - b_n - 2;
    \end{gathered}
\end{gather}
with fundamental relations
\begin{equation}
    \brackets{s^{(n)}}^2 = 1,
    \quad
    \brackets{r^{(n)}}^2 = 1,
\end{equation}
and preserves the canonicity of new variables. 

Its extension $\tilde{W}^{(n)}$ is given as 
\begin{equation}
    \tilde{W}^{(n)} = \angleBrackets{s_0^{(n)}, s_1^{(n)}; \pi^{(n)}},
    \quad 
    s_0^{(n)} = s^{(n)}, \, s_1^{(n)} = r^{(n)}s^{(n)}r^{(n)}, \, \pi^{(n)} = r^{(n)},
\end{equation}
with fundamental relations $\brackets{s_i^{(n)}}^2 = 1$, $\brackets{\pi^{(n)}}^2 = 1$, $\pi^{(n)} s_i^{(n)} = s_{i+1}^{(n)} \pi^{(n)}$, $i \in \quot{\mathbb{Z}}{2 \mathbb{Z}}$.
\end{thm}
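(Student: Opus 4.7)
The plan is to lift the scalar auto-\Backlund transformations of the previous theorem to the canonical coordinates $Q_i, P_i$ by invoking Theorem~6.1 of \cite{mazzocco2007hamiltonian}, which identifies each $Q_i$ with the matrix entry $b_{2i}^{(n)}$ of the isomonodromic matrix $\mathcal{A}^{(n)}$ and each $P_i$ with a prescribed multiple of $b_{2i+1}^{(n)}$. Every $b_k^{(n)}$ is an explicit polynomial in $w \brackets{z, \overline{t}; \alpha_n}$, its $z$-derivatives, and the times $t_l$, produced from the Lenard recursion. The first step is to record these polynomial formulas together with the differential identities they satisfy, so that the substitution of $\Tilde{w}$ from the previous theorem becomes an algebraic manipulation on the $b_k^{(n)}$.

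For the generator $s^{(n)}$, the scalar rule shifts $w$ by $-\brackets{2\alpha_n-1}/\brackets{2 \dsum_{l=0}^{n} t_l \mathcal{L}_l \LieBrackets{w'-w^2}}$, which leaves $w'-w^2$ invariant modulo a logarithmic derivative of the denominator. Plugging this into each $b_k^{(n)}$ and simplifying with the $\PIIn{n}$ equation itself, I expect all $Q_k$ with $k<n$ and all $P_k$ with $k \leq n$ to be preserved, while only the top entry $Q_n = b_{2n}^{(n)}$ absorbs a term proportional to $1/P_n$, recovering the rule $s^{(n)}\brackets{Q_n} = Q_n - b_n/P_n$.

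For $r^{(n)}$ the flip $\brackets{w,\alpha_n}\mapsto\brackets{-w,-\alpha_n}$ sends $w'-w^2$ to $-w'-w^2$, which is exactly the $\varepsilon=-1$ specialisation of the $\varepsilon$-parametrised Lenard hierarchy used in the previous theorem. Comparing the $\varepsilon=-1$ and $\varepsilon=+1$ expressions for the $b_k^{(n)}$ yields the triangular rules of the statement: $r^{(n)}\brackets{Q_n} = -Q_n$ directly, whereas for $k<n$ the entry $b_{2k}^{(n)}$ is recovered only after peeling off the already-transformed tail $\dsum_{l=1}^{n-k} r^{(n)}\brackets{P_l}\, r^{(n)}\brackets{Q_{l+k}}$. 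The main obstacle is precisely this recursive unwinding: to carry it out one must match the constant $b_{2\brackets{n-k}+1}^{(n)}/2^{2n-1}$ appearing in $r^{(n)}\brackets{P_k}$ with the coefficient created by the parity flip in the highest-degree term of the corresponding Lenard polynomial, which requires careful book-keeping of the normalisation chosen in the Lenard recursion.

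Involutivity $\brackets{s^{(n)}}^2 = 1$ follows from $\alpha_n \mapsto 1 - \alpha_n$ being an involution together with the vanishing of the shift after two applications, and $\brackets{r^{(n)}}^2 = 1$ is obvious from $w \mapsto -w$. For canonicity I would verify $\PoissonBrackets{s^{(n)}\brackets{Q_i},\, s^{(n)}\brackets{P_j}} = \delta_{ij}$ and the analogous relation for $r^{(n)}$ with respect to the canonical Poisson structure associated with the Hamiltonian of \cite{mazzocco2007hamiltonian}; since $s^{(n)}$ shifts only one coordinate and $r^{(n)}$ acts affinely in the $P$'s and by a lower-triangular polynomial substitution in the $Q$'s, the Jacobian computation is short. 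Finally, for the extension $\Tilde{W}^{(n)}$ the only nontrivial check is the braiding $\pi^{(n)} s_i^{(n)} = s_{i+1}^{(n)} \pi^{(n)}$, which with $\pi^{(n)} = r^{(n)}$, $s_0^{(n)} = s^{(n)}$, and $s_1^{(n)} = r^{(n)} s^{(n)} r^{(n)}$ reduces to the identity $\brackets{r^{(n)}}^2 = 1$ already established.
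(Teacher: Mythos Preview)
Your overall strategy matches the paper's: lift the scalar transformations to the canonical coordinates through their polynomial expressions in $w$ and its derivatives (Theorem~\ref{canCoordAsPols} here together with Theorem~6.1 of \cite{mazzocco2007hamiltonian}), exploit the invariance of $w'-w^2$ under $s^{(n)}$, and for $r^{(n)}$ compare the $\varepsilon=\pm 1$ Lenard expressions; the paper packages the latter as the identity $\mathcal{L}_k[-w'-w^2]=\mathcal{L}_k[w'-w^2]-2\tfrac{d}{dz}(\tfrac{d}{dz}+2w)\mathcal{L}_{k-1}[w'-w^2]$, which gives the $b_{2(n-k)+1}^{(n)}$ shift in $r^{(n)}(P_k)$ directly. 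Two points need tightening. First, $w'-w^2$ is \emph{exactly} invariant under $s^{(n)}$, not merely invariant modulo a logarithmic derivative --- this is precisely the Schwarzian mechanism used in the proof of Theorem~\ref{BTthm}, and it is what forces $s^{(n)}$ to fix every $P_k$ and every $Q_k$ with $k<n$ with no residual terms. Second, your canonicity argument does not go through as written: $r^{(n)}(P_k)=P_k-\tfrac{1}{2^{2n-1}}b_{2(n-k)+1}^{(n)}$ is not affine in the $P$'s, since each $b_{2(n-k)+1}^{(n)}$ is a genuine polynomial in \emph{all} of $Q_1,\dots,Q_n,P_1,\dots,P_n$, so the Jacobian is neither triangular nor short. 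The paper sidesteps this by using the closed-form Poisson brackets $\{a_{2k+1}^{(n)},b_{2l+1}^{(n)}\}$, $\{a_{2k+1}^{(n)},b_{2l}^{(n)}\}$, $\{b_{2k}^{(n)},b_{2l+1}^{(n)}\}$ of the isomonodromic matrix entries computed in \cite{mazzocco2007hamiltonian}, from which canonicity of the transformed coordinates follows without explicitly inverting the change of variables.
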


One can see that $W^{(n)}$ acts on a parameter $\alpha_n = \frac{1}{2} \brackets{b_n + 1}$ by reflections with respect to zero and one-half:
\begin{figure}[H]
    \centering
    \scalebox{1.3}{\tikzset{every picture/.style={line width=0.75pt}} 

\begin{tikzpicture}[x=0.75pt,y=0.75pt,yscale=-1,xscale=1]

\draw    (51.6,50) -- (286.72,50) ;
\draw [shift={(288.72,50)}, rotate = 180] [color={rgb, 255:red, 0; green, 0; blue, 0 }  ][line width=0.75]    (10.93,-3.29) .. controls (6.95,-1.4) and (3.31,-0.3) .. (0,0) .. controls (3.31,0.3) and (6.95,1.4) .. (10.93,3.29)   ;
\draw  [color={rgb, 255:red, 251; green, 143; blue, 103 }  ,draw opacity=1 ][fill={rgb, 255:red, 251; green, 143; blue, 103 }  ,fill opacity=1 ] (174.76,50) .. controls (174.76,47.46) and (172.7,45.4) .. (170.16,45.4) .. controls (167.62,45.4) and (165.56,47.46) .. (165.56,50) .. controls (165.56,52.54) and (167.62,54.6) .. (170.16,54.6) .. controls (172.7,54.6) and (174.76,52.54) .. (174.76,50) -- cycle ;
\draw  [color={rgb, 255:red, 251; green, 143; blue, 103 }  ,draw opacity=1 ][fill={rgb, 255:red, 251; green, 143; blue, 103 }  ,fill opacity=1 ] (255.16,50) .. controls (255.16,47.46) and (253.1,45.4) .. (250.56,45.4) .. controls (248.02,45.4) and (245.96,47.46) .. (245.96,50) .. controls (245.96,52.54) and (248.02,54.6) .. (250.56,54.6) .. controls (253.1,54.6) and (255.16,52.54) .. (255.16,50) -- cycle ;
\draw  [color={rgb, 255:red, 104; green, 193; blue, 184 }  ,draw opacity=1 ][fill={rgb, 255:red, 104; green, 193; blue, 184 }  ,fill opacity=1 ] (214.76,49.6) .. controls (214.76,47.06) and (212.7,45) .. (210.16,45) .. controls (207.62,45) and (205.56,47.06) .. (205.56,49.6) .. controls (205.56,52.14) and (207.62,54.2) .. (210.16,54.2) .. controls (212.7,54.2) and (214.76,52.14) .. (214.76,49.6) -- cycle ;
\draw  [color={rgb, 255:red, 104; green, 193; blue, 184 }  ,draw opacity=1 ][fill={rgb, 255:red, 104; green, 193; blue, 184 }  ,fill opacity=1 ] (135.36,50.2) .. controls (135.36,47.66) and (133.3,45.6) .. (130.76,45.6) .. controls (128.22,45.6) and (126.16,47.66) .. (126.16,50.2) .. controls (126.16,52.74) and (128.22,54.8) .. (130.76,54.8) .. controls (133.3,54.8) and (135.36,52.74) .. (135.36,50.2) -- cycle ;
\draw  [color={rgb, 255:red, 251; green, 143; blue, 103 }  ,draw opacity=1 ][fill={rgb, 255:red, 251; green, 143; blue, 103 }  ,fill opacity=1 ] (94.56,49.8) .. controls (94.56,47.26) and (92.5,45.2) .. (89.96,45.2) .. controls (87.42,45.2) and (85.36,47.26) .. (85.36,49.8) .. controls (85.36,52.34) and (87.42,54.4) .. (89.96,54.4) .. controls (92.5,54.4) and (94.56,52.34) .. (94.56,49.8) -- cycle ;

\draw (294.9,42.9) node [anchor=north west][inner sep=0.75pt]  [font=\footnotesize]  {$\alpha_n$};
\draw (166.35,30) node [anchor=north west][inner sep=0.75pt]  [font=\scriptsize]  {$0$};
\draw (247.16,30) node [anchor=north west][inner sep=0.75pt]  [font=\scriptsize]  {$1$};
\draw (122.45,28) node [anchor=north west][inner sep=0.75pt]  [font=\scriptsize]  {$-\frac{1}{2}$};
\draw (82.95,30) node [anchor=north west][inner sep=0.75pt]  [font=\scriptsize]  {$-1$};
\draw (206.2,28) node [anchor=north west][inner sep=0.75pt]  [font=\scriptsize]  {$\frac{1}{2}$};
\draw (166.4,63) node [anchor=north west][inner sep=0.75pt]  [font=\footnotesize]  {$r$};
\draw (206.65,63) node [anchor=north west][inner sep=0.75pt]  [font=\footnotesize]  {$s$};
\draw (240.84,63) node [anchor=north west][inner sep=0.75pt]  [font=\footnotesize]  {$srs$};
\draw (121.07,63) node [anchor=north west][inner sep=0.75pt]  [font=\footnotesize]  {$rsr$};
\draw (75.73,63) node [anchor=north west][inner sep=0.75pt]  [font=\footnotesize]  {$rsrsr$};

\end{tikzpicture}}
    \caption*{$s \coloneqq s^{(n)}: \alpha_n \mapsto - \alpha_n + 1$, \quad $r \coloneqq r^{(n)}: \alpha_n \mapsto - \alpha_n$.}
\end{figure}

An extended affine Weyl group $\tilde{W}^{(n)}$ defines appropriate variables for a symmetric form for the $n$-th member of \eqref{nonstatPIIhier}.
\begin{thm}
A symmetric form for the $n$-th member of the non-stationary \text{\rm $\PIIn{n}$} hierarchy is defined by the following variables
\begin{align}
    f_{n - k + 1,0}^{(n)}
    &= P_k,
    &
    f_{n - k + 1,1}^{(n)}
    &= \pi^{(1)} \brackets{P_k},
    \quad 
    1 \leq k \leq n,
    &
    \pi^{(n)} \brackets{f_{k, i}^{(n)}}
    &= f_{k, i + 1},
    & 
    &
    \\
    g_{n - k,0}^{(n)}
    &= Q_k,
    &
    g_{n - k,1}^{(n)}
    &= \pi^{(1)} \brackets{Q_k},
    \quad 
    1 \leq k \leq n - 1,
    &
    \pi^{(n)} \brackets{g_{k, i}^{(n)}}
    &= g_{k, i + 1},
    & 
    i &\in \quot{\mathbb{Z}}{2 \mathbb{Z}},
    \\
    \beta_0^{(n)}
    &= b_n,
    &
    \beta_1^{(n)}
    &= \pi^{(1)} \brackets{b_n},
    \quad 
    &
    \pi^{(n)} \brackets{\beta_i^{(n)}}
    &=\beta_{i + 1}^{(n)},
    & 
    &
\end{align}
with the additional equation
\begin{equation}
    Q_n^{\prime}
    = 4^n \brackets{f_{n,0}^{(n)} - f_{n,1}^{(n)}},
\end{equation}
and the condition $\beta_0^{(n)} + \beta_1^{(n)} = - 2$.
\end{thm}

Since actions of the affine Weyl group commute with taking derivatives, variables defined above represent $2n$ equation of a Hamiltonian system as a symmetric system with respect to the $\pi^{(n)}$ action. To get from this system the $n$-th member of \eqref{nonstatPIIhier}, we should use the additional equation.

The \Backlund transformations are also helpful in the generating solutions process. By actions of \Backlund transformations compositions on the unique trivial solution $w \brackets{z, \overline{t}; 0} = 0$ (also called \textit{a seed solution}), we are able to obtain new different non-trivial solutions. For an integer parameter $\alpha_1 \coloneqq m \in \mathbb{Z}$ (in the case $n = 1$), there exists a correspondence between the second \Painleve equation $\PIIn{1}$ and meromorphic functions. Thus, in papers \cite{yablonskii1959rational} and \cite{vorobiev1965rational} it was shown that rational solutions arise from special polynomials called \textit{the Yablonskii-Vorobiev polynomials} that are defined by the difference-differential equation
\begin{align}
    u_{m + 1}^{\brackets{1}} u_{m - 1}^{\brackets{1}}
    &= z \brackets{u_m^{\brackets{1}}}^2
    - 4 \brackets{u_m^{\brackets{1}} \brackets{u_m^{\brackets{1}}}^{\prime\prime}
    - \brackets{\brackets{u_m^{\brackets{1}}}^{\prime}}^2},
    \\
    u_0^{(1)} &= 1,
    \quad 
    u_1^{(1)} = z,
\end{align}
which is also known as \textit{the Toda lattice equation}.

We have shown that there exists a one-to-one correspondence between the $n$-th member of \eqref{nonstatPIIhier} and meromorphic functions when $\alpha_n \coloneqq m \in \mathbb{Z}$. In this case solutions have the following form 
\begin{equation}
    w_m^{(n)} \brackets{z} = \dfrac{d}{dz} \ln \brackets{\dfrac{u_{m-1}^{(n)}}{u_m^{(n)}}},
    \quad 
    w_{-m}^{(n)} \brackets{z} = - w_m^{(n)} \brackets{z},
\end{equation}
where $u_m^{(n)}$ are  Yablonskii-Vorobiev-type polynomials associated with the $n$-th member of \eqref{nonstatPIIhier}.

Using \Backlund transformations and some properties of Yablonski-Vorobiev-type polynomials, we obtain a difference-differential equation for \textit{the Yablonskii-Vorobiev-type polynomials}.
\begin{thm}
Yablonskii-Vorobiev-type polynomials associated with the $n$-th member of \eqref{nonstatPIIhier} are defined by the following difference-differential equation
\begin{gather}
    u_{m + 1}^{\brackets{n}} u_{m - 1}^{\brackets{n}}
    = - 2 \brackets{u_m^{\brackets{n}}}^2 \sum_{l = 0}^{n} t_l 
    \mathcal{L}_l \LieBrackets{2 \dfrac{d^2}{d z^2} \brackets{\ln{u_m^{\brackets{n}}}}},
    \quad 
    t_0 = - z,
    \,
    t_n = 1,
\end{gather}
with $u_0^{\brackets{n}} = 1$ and $u_1^{\brackets{n}} = z$.
\end{thm}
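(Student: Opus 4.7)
The plan is to combine the ascending Bäcklund transformation built from the previous theorem with the logarithmic ansatz $w_k^{(n)} = \brackets{d/dz} \ln \brackets{u_{k-1}^{(n)}/u_k^{(n)}}$, and then to isolate a bilinear recursion by comparing the two sides after logarithmic differentiation.

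First I would form the composition $s^{(n)} \circ r^{(n)}$, which shifts $\alpha_n \mapsto \alpha_n + 1$ and therefore sends the rational solution $w_m^{(n)}$ at parameter $m$ to $w_{m+1}^{(n)}$. Carrying out this composition explicitly yields the Bäcklund identity
\begin{equation}
    w_m^{(n)} + w_{m+1}^{(n)} = \dfrac{2m+1}{2 \dsum_{l=0}^n t_l \mathcal{L}_l \LieBrackets{-\brackets{w_m^{(n)}}^{\prime} - \brackets{w_m^{(n)}}^2}}.
\end{equation}
Substituting the ansatz gives two logarithmic-derivative expressions,
\begin{equation}
    w_m^{(n)} + w_{m+1}^{(n)} = \dfrac{d}{dz} \ln \dfrac{u_{m-1}^{(n)}}{u_{m+1}^{(n)}},
    \qquad
    w_m^{(n)} - w_{m+1}^{(n)} = \dfrac{d}{dz} \ln \dfrac{u_{m-1}^{(n)} u_{m+1}^{(n)}}{\brackets{u_m^{(n)}}^2},
\end{equation}
so that, after division by $\brackets{u_m^{(n)}}^2$ and logarithmic differentiation in $z$, the claimed bilinear equation is equivalent to the identity
\begin{equation}
    \dfrac{d}{dz} \ln \brackets{\dsum_{l=0}^n t_l \mathcal{L}_l \LieBrackets{2 \brackets{\ln u_m^{(n)}}^{\prime\prime}}} = w_m^{(n)} - w_{m+1}^{(n)},
\end{equation}
together with a matching of the integration constant.

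To establish this key identity I would use the hierarchy equation together with its $r^{(n)}$-image,
\begin{equation}
    \brackets{\partial_z + 2 w_m^{(n)}} F_m^{(n)} = m,
    \qquad
    \brackets{\partial_z - 2 w_m^{(n)}} G_m^{(n)} = - m,
\end{equation}
where $F_m^{(n)}$ and $G_m^{(n)}$ denote $\dsum_{l=0}^n t_l \mathcal{L}_l \LieBrackets{\pm \brackets{w_m^{(n)}}^{\prime} - \brackets{w_m^{(n)}}^2}$. Combined with the Lenard recursion, these ODEs should permit me to re-express $G_m^{(n)}$ as a differential polynomial in $\brackets{\ln u_m^{(n)}}^{\prime\prime}$ once $w_m^{(n)}$ is parametrized via the $\tau$-quotient ansatz. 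The initial data $u_0^{(n)} = 1$ and $u_1^{(n)} = z$ are then verified directly from the trivial solution $w_0^{(n)} = 0$ and its first Bäcklund image $w_1^{(n)} = -1/z$; polynomiality of all higher $u_m^{(n)}$ and the symmetry $u_{-m}^{(n)} = u_{m-1}^{(n)}$ follow by induction on $\abs{m}$ from the recursion together with the $r^{(n)}$ action.

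The main obstacle is the translation between the mKdV-type Lenard argument $\pm \brackets{w_m^{(n)}}^{\prime} - \brackets{w_m^{(n)}}^2$ appearing on the Bäcklund side and the KdV-type argument $2 \brackets{\ln u_m^{(n)}}^{\prime\prime}$ appearing on the bilinear side. This Miura-type bridge is standard in principle, coming from the Sato/Hirota $\tau$-function theory of the KdV/mKdV hierarchies, but here it must be made precise for the finite non-stationary hierarchy in which the Lenard operators are truncated at order $n$ and the times $\overline{t}$ appear explicitly in the coefficients. Once this step is in place, integrating the logarithmic-derivative identity yields the stated bilinear equation.
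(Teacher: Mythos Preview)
Your overall strategy---combine the shift B\"acklund transformation with the logarithmic ansatz $w_m^{(n)} = \brackets{\ln u_{m-1}^{(n)}/u_m^{(n)}}^{\prime}$---is exactly the paper's. The B\"acklund identity you write down and the expression $w_m^{(n)} + w_{m+1}^{(n)} = \brackets{\ln u_{m-1}^{(n)}/u_{m+1}^{(n)}}^{\prime}$ are also the paper's starting point. But the proof stalls precisely at the step you flag as the ``main obstacle'' and then dismiss as standard.

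The Miura bridge you need is the \emph{pointwise} identity
\[
    -\brackets{w_m^{(n)}}^{\prime} - \brackets{w_m^{(n)}}^2 \;=\; 2\,\dfrac{d^2}{dz^2}\ln u_m^{(n)},
\]
and a two-line computation shows that, under the ansatz, this is \emph{equivalent} to the Hirota constraint $D_z^2\, u_m^{(n)}\cdot u_{m-1}^{(n)} = 0$. This is a genuine restriction on the $u_m^{(n)}$, not an algebraic tautology; your proposed route via the pair of ODEs $\brackets{\partial_z \pm 2 w_m^{(n)}} F_m^{(n)},\,G_m^{(n)}$ does not produce it, because those equations constrain the Lenard sums, not the factorization of $w_m^{(n)}$ into $u$'s. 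The paper supplies this ingredient separately (Lemma~\ref{propYV}) by rewriting the hierarchy as the bilinear system $u u^{\prime\prime} = \brackets{u^{\prime}}^2 - v^2$ with $w = v/u$, from which $D_z^2\, u_m^{(n)}\cdot u_{m-1}^{(n)} = 0$ is immediate once one sets $u = u_{m-1}^{(n)} u_m^{(n)}$.

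There is a second, smaller gap. Your logarithmic-differentiation manoeuvre trades the target equation for a derivative identity \emph{plus} an undetermined multiplicative constant. You do not say how that constant is fixed. The paper avoids this by proving a second bilinear relation, $D_z\, u_{m+1}^{(n)}\cdot u_{m-1}^{(n)} = \brackets{2m+1}\brackets{u_m^{(n)}}^2$ (obtained from the invariance of $w^{\prime}-w^2$ under the B\"acklund map), which supplies the factor $2m+1$ directly and makes any integration unnecessary. With both Hirota identities in hand, the substitution into the B\"acklund relation is one line.
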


\begin{rem}
   The difference-differential equation in the theorem above was also found in the work \cite{kudryashov2008generalized}, where polynomials $u_m^{(n)}$ were called \textit{the generalized Yablonskii-Vorobiev polynomials}. The author of the current paper have obtained this result independently of \cite{kudryashov2008generalized} and gives the most complete proof of this remarkable fact.
\end{rem}

\begin{rem}
   In Appendix \ref{YBplotsSection} we will present plots of Yablonskii-Vorobiev-type polynomials roots on the complex plane. There is an interesting observation that for big values of parameters $|t_i|$ shapes of plots of roots "reduce" to plots of some young series of Yablonskii-Vorobiev-type polynomials. 
\end{rem}

Rational solutions obtained by actions of \Backlund transformations on a unique trivial solution and by Yablonskii-Vorobiev-type polynomials have the following relation
\begin{equation}
    \dfrac{d}{dz} \ln \brackets{\dfrac{u_{m - 1}^{(n)}}{u_m^{(n)}}}
    = \frac{1}{4^n} T_m^{(n)} \brackets{Q_n},
    \quad
    T_m^{(n)}
    = \brackets{r^{(n)}s^{(n)}}^m.
\end{equation}

\begin{rem}
Since the second \Painleve hierarchy is a result of a self-similar reduction of the Korteveg-de Vriez hierarchy, rational solutions of \eqref{nonstatPIIhier} are also rational solutions of the Korteveg-de Vriez hierarchy
\begin{equation}
    u \brackets{x, t_{2 n + 1}}
    = 2 \dfrac{\partial^2}{\partial x^2}
    \PoissonBrackets{\ln \LieBrackets{u_m^{(n)} \brackets{\dfrac{x}{\brackets{(2 n + 1) t_{2 n + 1}}^{1/(2 n + 1)}}}}}.
\end{equation}
\end{rem}

Moreover, the Yablonskii-Vorobiev polynomials are connected with the polynomial $\tau$-function of the Kadomtsev–Petviashvili hierarchy \cite{jimbo1983solitons}. Due to this fact the Yablonskii-Vorobiev polynomials have determinant representations in the Jacobi-Trudi form and in the Hankel form \cite{kajiwara1996determinant}. Since hierarchy \eqref{nonstatPIIhier} can be also reduced from the Kadomtsev–Petviashvili hierarchy, we found out a generalization of polynomial $\tau$-functions and its Jacobi-Trudi determinant representation to the non-stationary case. 
\begin{thm}
Rational solutions of \eqref{nonstatPIIhier} are given as
\begin{equation}
    w_m^{(n)} \brackets{z} 
    = \dfrac{d}{dz} \PoissonBrackets{\ln \LieBrackets{\dfrac{\tau_{m - 1}^{(n)}}{\tau_m^{(n)}}}},
    \quad
    w_{- m}^{(n)} \brackets{z} 
    = - w_m^{(n)} \brackets{z},
    \quad 
    m \geq 1,
\end{equation}
where $\tau_m^{(n)} \brackets{z}$ are $m \times m$ determinants 
\begin{equation}
    \tau_m^{(n)} \brackets{z}
    = 
    \begin{vmatrix}
    h_{m}^{(n)} \brackets{z}
    &
    h_{m + 1}^{(n)} \brackets{z}
    &
    \dots 
    &
    h_{2m - 1}^{(n)} \brackets{z}
    \vspace{0.2cm}
    \\
    h_{m - 2}^{(n)} \brackets{z}
    &
    h_{m - 1}^{(n)} \brackets{z}
    &
    \dots 
    &
    h_{2m - 3}^{(n)} \brackets{z}
    \vspace{0.2cm}
    \\
    \vdots
    &
    \vdots
    &
    \ddots
    &
    \vdots
    \vspace{0.2cm}
    \\
    h_{- m + 2}^{(n)} \brackets{z}
    &
    h_{- m + 3}^{(n)} \brackets{z}
    &
    \dots 
    &
    h_{1}^{(n)} \brackets{z}
    \end{vmatrix},
\end{equation}
which entries are the complete homogeneous polynomials $h_k^{(n)} \brackets{z}$, $h_k^{(n)} \brackets{z} = 0$ for $k < 0$ defined by a generating function in a formal parameter $\lambda$
\begin{equation}
    \sum_{k = 0}^{\infty} 
    h_k^{(n)} \brackets{z} \lambda^k
    = \exp{\brackets{- \sum_{l = 0}^n \dfrac{4^l}{2 l + 1} t_{l} \lambda^{2 l + 1}}}, 
    \quad 
    t_0 = - z,
    \quad 
    t_n = 1.
\end{equation}
\end{thm}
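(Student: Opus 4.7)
The plan is to reduce to the Yablonskii-Vorobiev-type polynomial description already obtained: I will show that the determinants $\tau_m^{(n)} \brackets{z}$ coincide with the polynomials $u_m^{(n)} \brackets{z}$, and then plug them directly into the rational-solution formula from the previous theorem. Because both sequences are determined by a three-term bilinear recursion with matching initial data $\tau_0^{(n)} = u_0^{(n)} = 1$ and $\tau_1^{(n)} = u_1^{(n)} = z$, it is enough to show that $\tau_m^{(n)}$ satisfies the same Toda-like bilinear equation as $u_m^{(n)}$.

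The first step is to extract the differential structure of the complete homogeneous polynomials $h_k^{(n)}$ directly from the generating function. Differentiating $\sum_k h_k^{(n)} \lambda^k = \exp\brackets{-\sum_l \frac{4^l}{2l+1} t_l \lambda^{2l+1}}$ in $z$ and using $t_0 = -z$ gives $\partial_z h_k^{(n)} = h_{k-1}^{(n)}$, so each row of the Jacobi-Trudi matrix is obtained from the row immediately above it by applying $\partial_z^2$; thus $\tau_m^{(n)}$ is a Wronskian-type determinant in $z$. Differentiation with respect to $t_l$ for $1 \leq l \leq n - 1$ produces a shift of the index by $2l + 1$ with coefficient $-\frac{4^l}{2l + 1}$. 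The base cases $\tau_0^{(n)} = 1$ (empty determinant) and $\tau_1^{(n)} = h_1^{(n)} \brackets{z} = z$ are immediate, matching $u_0^{(n)}$ and $u_1^{(n)}$.

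Next, I would apply the Desnanot-Jacobi (Dodgson condensation) identity to the $m \times m$ determinant $\tau_m^{(n)}$. Eliminating in turn the four corner entries produces $\brackets{m - 1} \times \brackets{m - 1}$ minors whose bilinear combination equals $\tau_{m + 1}^{(n)} \tau_{m - 1}^{(n)}$. Using the differentiation rules above, these minors are identified with $\tau_m^{(n)}$ itself and with $z$- and $t_l$-derivatives of it, and assembling the pieces leaves
\begin{equation*}
    \tau_{m + 1}^{(n)} \tau_{m - 1}^{(n)}
    = - 2 \brackets{\tau_m^{(n)}}^2 \sum_{l = 0}^{n} t_l \mathcal{L}_l \LieBrackets{2 \brackets{\ln \tau_m^{(n)}}''},
\end{equation*}
which is precisely the Toda-type equation satisfied by the Yablonskii-Vorobiev-type polynomials in the previous theorem.

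The main obstacle is this last identification. Converting the Desnanot-Jacobi combination of corner minors into the Lenard-operator sum $\sum_{l} t_l \mathcal{L}_l \LieBrackets{2 \brackets{\ln \tau_m^{(n)}}''}$ is the $\tau$-function manifestation of the fact that hierarchy \eqref{nonstatPIIhier} arises as a Virasoro-type reduction of the Kadomtsev-Petviashvili hierarchy in the sense of Mazzocco-Mo, whose polynomial $\tau$-functions are Schur polynomials in power-sum times; the coefficients $-\frac{4^l}{2l + 1}$ in the generating function are exactly those that encode this reduction and identify the $h_k^{(n)}$ with the standard Schur variables. Carrying this matching out explicitly, row by row, and accumulating the non-stationary contributions via the Lenard recurrence is where the substance of the argument lies. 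Once the bilinear identity is established, an induction on $m$ yields $\tau_m^{(n)} = u_m^{(n)}$, and the rational-solution formula follows at once from the Yablonskii-Vorobiev-type theorem.
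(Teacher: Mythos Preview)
Your approach differs from the paper's and, more importantly, contains a concrete error in its target identity. The paper does not prove the bilinear recursion for $\tau_m^{(n)}$ at all; it simply invokes the known fact that polynomial $\tau$-functions of the KP hierarchy are Schur polynomials (Jacobi--Trudi determinants in the complete homogeneous polynomials with generating function $\exp\sum_k x_k\lambda^k/k$), and then observes that the substitution $x_{2k+1}=-t_k$, $x_{2k}=0$ together with a rescaling of $\lambda$ reproduces \eqref{pkPolynomials}. The rational-solution formula then comes from the KP side, not from matching with the $u_m^{(n)}$.

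The specific gap in your argument is the claimed equality $\tau_m^{(n)}=u_m^{(n)}$ and the assertion that $\tau_m^{(n)}$ obeys \emph{the same} Toda-type recursion as $u_m^{(n)}$. Neither holds. The paper records (as a corollary) that $\tau_m^{(n)}=c_m\,u_m^{(n)}$ with $c_m=\prod_{k=1}^m(2k+1)^{k-m}$, and one checks directly that $c_{m+1}c_{m-1}/c_m^2=1/(2m+1)\neq 1$. Hence the bilinear identity you aim for, $\tau_{m+1}^{(n)}\tau_{m-1}^{(n)}=-2(\tau_m^{(n)})^2\sum_l t_l\mathcal{L}_l[\,2(\ln\tau_m^{(n)})''\,]$, is off by the factor $1/(2m+1)$; your induction with matching initial data would therefore fail already at $m=2$. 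This does not kill the theorem (constant prefactors cancel in the log-derivative), but it does kill the route ``same recursion $\Rightarrow$ same polynomials''. Separately, the step you flag as ``the main obstacle'' --- passing from the Desnanot--Jacobi corner minors to the Lenard-operator sum --- is not a bookkeeping exercise: it is essentially equivalent to the KP bilinear identity under the reduction, and you have not supplied an independent mechanism for it. If you want to salvage a self-contained argument, aim instead for $\tau_m^{(n)}=c_m\,u_m^{(n)}$ and track the constant through the Jacobi identity, or follow the paper and quote the Schur/KP description directly.
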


As a consequence, we have a relation between the Yablonskii-Vorobiev-type polynomials and the polynomial $\tau^{(n)}$-functions of \eqref{nonstatPIIhier}, which does not depend on the member number $n$.
\begin{corl}
$\tau_m^{(n)} \brackets{z}
        = c_m u_m^{(n)} \brackets{z},
        \quad 
        c_m = \prod_{k = 1}^m \brackets{2 k + 1}^{k - m}$.
\end{corl}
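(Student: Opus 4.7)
The plan is to reduce the corollary to identifying a single constant and then to pin that constant down by a leading-coefficient comparison. Both the preceding theorem on Yablonskii--Vorobiev-type polynomials and the theorem just stated express the same rational solution $w_m^{(n)}$ as a logarithmic derivative of the ratio $u_{m-1}^{(n)}/u_m^{(n)}$ (respectively $\tau_{m-1}^{(n)}/\tau_m^{(n)}$), so subtracting the two representations gives
\begin{equation}
    \frac{d}{dz}\ln\brackets{\frac{u_{m-1}^{(n)}\,\tau_m^{(n)}}{u_m^{(n)}\,\tau_{m-1}^{(n)}}} = 0.
\end{equation}
Since all four factors are nonzero polynomials in $z$, the bracketed rational function is a nonzero constant in $z$; an induction on $m$ with base case $u_0^{(n)}=\tau_0^{(n)}=1$, $u_1^{(n)}=\tau_1^{(n)}=z$ then promotes this to $\tau_m^{(n)} = c_m\, u_m^{(n)}$ for some constant $c_m\in\mathbb{C}^{\times}$.

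To identify $c_m$ I would compare leading coefficients in $z$. A degree induction on the Toda-like recursion of the previous theorem shows that $u_m^{(n)}$ is monic of degree $\tfrac{m(m+1)}{2}$, uniformly in $\overline t$: the asymptotic $2(\ln u_m^{(n)})''\sim -m(m+1)/z^2$ feeds through the Lenard operators $\mathcal{L}_l$ in a controlled way, so both sides of $u_{m+1}^{(n)} u_{m-1}^{(n)}=-2(u_m^{(n)})^2\sum_l t_l\mathcal{L}_l[\,\cdot\,]$ match in degree and leading coefficient. On the $\tau$-function side, only the $l=0$ contribution $t_0=-z$ to the exponent of the generating function produces the top power of $z$, so $h_k^{(n)}(z)=z^k/k!+O(z^{k-2})$, and a cofactor expansion of the Jacobi--Trudi determinant yields
\begin{equation}
    [z^{m(m+1)/2}]\,\tau_m^{(n)}(z)
    = \det\LieBrackets{\tfrac{1}{(m-2i+j+1)!}}_{i,j=1}^{m},
\end{equation}
with the convention $1/k!=0$ for $k<0$. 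In particular $\deg_z\tau_m^{(n)}=\deg_z u_m^{(n)}=\tfrac{m(m+1)}{2}$, which is a consistency check.

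Finally, I would recognize the right-hand determinant as the Jacobi--Trudi formula for the Schur function $s_\lambda$ of the staircase partition $\lambda=(m,m-1,\ldots,1)$ specialized to $p_1=1$, $p_k=0$ for $k\geq 2$; by the classical identity this equals $f^\lambda/|\lambda|!=1/\prod_{(i,j)\in\lambda}h(i,j)$. For the staircase shape the hook lengths in row $i$ are exactly the odd numbers $1,3,\ldots,2(m-i)+1$, so the hook product is $\prod_{k=0}^{m-1}(2k+1)!!=\prod_{k=1}^{m}(2k+1)^{m-k}$, and inverting gives $c_m=\prod_{k=1}^{m}(2k+1)^{k-m}$, as required. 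The main obstacle, in my view, is the clean degree-and-leading-coefficient induction for $u_m^{(n)}$, which requires tracking uniformly in the times $\overline t$ how $\mathcal{L}_l$ acts on $z^{-2}$-asymptotics at infinity; once that is in place, the remaining determinant evaluation via the hook-length formula is classical.
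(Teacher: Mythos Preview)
The paper states this corollary without proof; it is presented as an immediate consequence of the Jacobi--Trudi theorem together with the Yablonskii--Vorobiev construction, and the constant $c_m$ is simply announced. Your argument therefore supplies what the paper omits, and it is correct.

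Two remarks. First, the ``main obstacle'' you flag --- tracking how $\mathcal{L}_l$ acts on the $z^{-2}$ asymptotics of $2(\ln u_m^{(n)})''$ --- is in fact already handled in the paper: the proposition inside the proof of Lemma~\ref{solViaYV} computes $\mathcal{L}_l\bigl[-a(a+1)/z^2\bigr]=c_l\,z^{-2l}$ explicitly, and with $a(a+1)=m(m+1)$ this shows that only the $t_0=-z$ term contributes at top degree, so $u_{m+1}^{(n)}u_{m-1}^{(n)}\sim z\,(u_m^{(n)})^2$ and the monic degree induction $d_{m+1}+d_{m-1}=2d_m+1$ goes through uniformly in $\overline t$. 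Second, your identification of the leading-coefficient determinant with the staircase Schur function at $p_1=1$, $p_{k\geq 2}=0$, followed by the hook-length evaluation, is clean and recovers the paper's numerics $c_2=\tfrac{1}{3}$, $c_3=\tfrac{1}{45}$, $c_4=\tfrac{1}{4725}$.
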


This paper is organized as follows. In Section \ref{BTSection}, we consider auto-\Backlund transformations of the $n$-th member of the non-stationary second \Painleve hierarchy \eqref{nonstatPIIhier}, which are a generalization of results in the paper \cite{clarkson1999backlund}. In Section \ref{AWsection}, we give explicit formulas of generators in canonical coordinates defined in \cite{mazzocco2007hamiltonian} of an affine Weyl group $W^{(n)}$ associated with the $n$-th member of \eqref{nonstatPIIhier} and its extension $\tilde{W}^{(n)}$. In Section \ref{YVsection}, we recall a definition of the classical Yablonskii-Vorobiev polynomials $u_m^{(1)} \brackets{z}$ obtained in \cite{yablonskii1959rational, vorobiev1965rational} and construct their generalization $u_m^{(n)} \brackets{z}$ to the non-stationary case for the $n$-th member of \eqref{nonstatPIIhier}. In Section \ref{TFsection}, we study polynomial $\tau^{(n)}$-functions of the $n$-th member \eqref{nonstatPIIhier} and its Jacobi-Trudi determinant representation via complete homogeneous polynomials $h_k^{(n)} \brackets{z}$. As a consequence, Yablonskii-Vorobiev-type polynomials have the same determinant representation. In each section we follow all details of our results in explicit examples for the first and second members of the non-stationary $\PIIn{n}$ hierarchy \eqref{nonstatPIIhier}.

\textbf{Acknowledgements.}
The author is so grateful to her scientific advisors Vladimir Poberezhnyi and Vladimir Rubtsov, who introduced her to this amazing area of mathematical physics, constantly supported the author during her scientific work, and developed the author's interest in the \Painleve equations theory. The author would like to express her special gratitude to Vladimir Rubtsov for setting the scientific problem and his continuous attention to the author's work. The author is also grateful to Marta Mazzocco, who drew the author's attention to a different way of constructing generators of an affine Weyl group, and to Ilia Gaiur for fruitful discussions.

The current paper is a part of the author's PhD program studies at HSE University (HSE) and has been carrying out at the Faculty of Mathematics. The author would like to thank this faculty for giving her such opportunity. This paper was partially supported under the Grant RFBR 18-01-00461 A of the Russian Foundation for Basic Research and the International Laboratory of Cluster Geometry HSE, RF Government grant № 075-15-2021-608.

The author thanks the referee for the careful reading of the paper and helpful remarks, which, in particular, improved the proof of Theorem \ref{BTthm}. 

\section{Auto-\Backlund transformations} \label{BTSection}

We will call \textit{a \Backlund transformation} transforming one PDE solution to another its solution. \textit{An auto-\Backlund transformation} leaves a PDE class invariant. As an example of a \Backlund transformation, we can consider the Miura transformation $u \brackets{x, T_3} = v_x \brackets{x, T_3} - v^2 \brackets{x, T_3}$ which transforms solutions of the Korteveg-de Vriez equation into solutions of the modified Korteveg-de Vriez equation
\begin{equation}
    u_{T_3} - 6 u u_x + u_{xxx} = 0
    \quad 
    \Rightarrow
    \quad 
    v_{T_3} - 6 v^2 v_x + v_{xxx} = 0.
\end{equation}

The discrete symmetry $w \brackets{z; - \alpha_1 } = - w \brackets{z; \alpha_1 }$ of the second \Painleve equation \eqref{PII1} is an example of an auto-\Backlund transformation. The less trivial example of an auto-\Backlund transformation of \eqref{PII1} is the well-known map
\begin{equation} \label{BT1}
    \tilde{w} \brackets{z; \tilde{\alpha}_1}
    = w \brackets{z; \alpha_1}
    - \dfrac{2 \alpha_1 - \varepsilon}{2 \varepsilon w^{\prime} \brackets{z; \alpha_1} - 2 w^2 \brackets{z; \alpha_1} - z},
    \quad 
    \tilde{\alpha}_1 = 1 - \alpha_1,
    \quad 
    \varepsilon = \pm 1,
\end{equation}
where $\tilde{w} \brackets{z; \tilde{\alpha}_1}$ is a solution of $\PIIn{1} \LieBrackets{\tilde{w} \brackets{z; \tilde{\alpha}_1}}$. Note that the denominator can be rewritten in terms of the Lenard operators as $2 \varepsilon w^{\prime} - 2 w^2 - z = 2 \sum_{l = 0}^1 t_l \mathcal{L} \LieBrackets{\varepsilon w^{\prime} - w^2}$ with $t_0 = - z$ and $t_1 = 1$. In the case $\alpha_1 = \frac{1}{2}$, the second \Painleve equation \eqref{PII1} has a special integral defined as
\begin{equation}
    I_{1/2}^{(1)}:
    \quad 
    2 w^{\prime} - 2 w^2 - z = 0,
\end{equation}
which satisfies the relation
\begin{equation}
    \brackets{
    \dfrac{d}{dz} + 2 w
    } I_{1/2}^{(1)}
    = 0.
\end{equation}
Using \Backlund transformations \eqref{BT1}, one can generate special integrals for all half-integer parameters $\alpha_1~\in~\mathbb{Z}~+~\frac{1}{2}$.

In this Section we are interested in auto-\Backlund transformations of \eqref{nonstatPIIhier}.

\begin{thm} \label{BTthm}
    Auto-\Backlund transformations of the $n$-th member \eqref{nonstatPIIhier} are defined for $\alpha_n \neq \frac{1}{2}$ and are~given~as
    \begin{align} \label{nonstatBT}
        \Tilde{w} \brackets{z; \tilde{\alpha}_n}
        &= w \brackets{z; \alpha_n} 
        - \dfrac{2 \alpha_n - \varepsilon}{2 \dsum_{l = 0}^{n} t_l \mathcal{L}_l \LieBrackets{\varepsilon w^{\prime} - w^2}},
        \\
        &
        \qquad
        t_0 = -z,
        \,\, 
        t_n = 1,
        \quad 
        \tilde{\alpha}_n = 1 - \alpha_n,
        \quad
        \varepsilon = \pm 1.
    \end{align}
\end{thm}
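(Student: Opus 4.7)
My plan is to verify the transformation formula by direct substitution, following the modified-truncation approach of Clarkson--Joshi--Pickering for the stationary case and extending it to include the time parameters $t_1,\dots,t_{n-1}$. The two sign choices $\varepsilon=\pm 1$ admit parallel treatments, so I focus on $\varepsilon=+1$ and indicate the modifications for $\varepsilon=-1$ at the end, where the substitution $w'\to -w'$ interchanges the two Miura-type factorizations $(D+2w)(w'-w^2)$ and $(D-2w)(w'+w^2)$ underlying the $n=1$ case.

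Write $D=d/dz$, $u = w'-w^2$, and $F(w)=\sum_{l=0}^n t_l\,\mathcal{L}_l[u]$, so that the $n$-th member reads $(D+2w)F=\alpha_n$, i.e.\ $F'+2wF=\alpha_n$. Given a solution $w$ with $\alpha_n\ne\tfrac12$, set
\begin{equation}
    \tilde w = w - \frac{2\alpha_n-1}{2F}, \qquad \tilde u = \tilde w' - \tilde w^2.
\end{equation}
A short differentiation and application of $F'+2wF=\alpha_n$ reduces the raw expression for $\tilde u$ to a clean rational combination of $u$, $F$, and the constant $2\alpha_n-1$; crucially, the linear-in-$w$ terms cancel precisely because the coefficient in the ansatz is tuned to $\alpha_n-\tfrac12$. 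This clean form is what makes the Lenard recursion tractable in the next step.

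The core of the proof is to control $\mathcal{L}_l[\tilde u]$ for each $l$ via the recursion $\partial_z\mathcal{L}_{l+1}[u]=(\partial_z^3+4u\partial_z+2\partial_z u)\mathcal{L}_l[u]$. Proceeding by induction on $l$, I would establish
\begin{equation}
    \mathcal{L}_l[\tilde u] = \mathcal{L}_l[u] + D\Phi_l,
\end{equation}
where $\Phi_l$ is a rational function of $w$, $F$, and their derivatives, built recursively from the preceding $\Phi_k$ using the recursion operator. Summing with weights $t_l$ and exploiting $t_0=-z$ produces $F(\tilde w) = F(w) + D\Phi$, and then computing $(D+2\tilde w)F(\tilde w)$ by substituting $\tilde w = w - (2\alpha_n-1)/(2F)$ and repeatedly using $F'+2wF=\alpha_n$ should collapse the result to $1-\alpha_n$.

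The principal obstacle is the growing complexity of the Lenard polynomials: each $\mathcal{L}_l[u]$ is a differential polynomial whose degree and number of terms grow rapidly, and the inductive expression for $\Phi_l$ must be tracked with enough precision that all undesired contributions cancel under the $t$-weighted sum. As a preliminary step I would carry out the $n=1$ case by hand (recovering \eqref{BT1}) and the $n=2$ case using Example \ref{exmpPII012}, both to pin down the explicit form of $\Phi_l$ in low orders and to confirm the final parameter shift $\tilde\alpha_n = 1-\alpha_n$. The $\varepsilon=-1$ branch then follows by the same argument with the roles of $(D+2w)(w'-w^2)$ and $(D-2w)(-w'-w^2)$ exchanged, consistent with the discrete symmetry $w\mapsto -w$ already noted in the introduction.
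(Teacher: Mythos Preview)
Your direct-substitution strategy is viable in principle, but it takes a much harder road than the paper and misses the one observation that would collapse the whole induction. The paper does not attempt to track $\mathcal{L}_l[\tilde u]$ through the Lenard recursion at all. Instead it parameterizes $w=\tfrac12\,\psi''/\psi'$, notes that $w'-w^2=\tfrac12 S(\psi)$ is the Schwarzian derivative, and applies the M\"obius change $\psi\mapsto -1/\varphi$; writing $\tilde w=\tfrac12\,\varphi''/\varphi'$ and $w=\tilde w-\varphi'/\varphi$, the M\"obius invariance of the Schwarzian gives $\tilde u=\tilde w'-\tilde w^2=w'-w^2=u$ \emph{identically}. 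Hence $\mathcal{L}_l[\tilde u]=\mathcal{L}_l[u]$ for every $l$, so $F(\tilde w)=F(w)$ with no work, and the remaining computation is a two-line comparison of $(D+2w)F$ and $(D+2\tilde w)F$ that fixes both the formula for $\varphi'/\varphi$ and the parameter relation $\tilde\alpha_n=1-\alpha_n$.

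You are actually one line away from this: push your computation of $\tilde u$ to the end rather than stopping at ``a clean rational combination.'' Using $F'+2wF=\alpha_n-\tfrac12$ (the form one gets from $w''=2w^3+zw+\alpha_n$), the residual term in $\tilde u-u$ vanishes completely, not just in its $w$-linear part. Once you see $\tilde u=u$, your proposed induction $\mathcal{L}_l[\tilde u]=\mathcal{L}_l[u]+D\Phi_l$ holds with $\Phi_l\equiv 0$, and the ``principal obstacle'' you flag disappears. So your plan is not wrong, but its heavy machinery is unnecessary; the paper's Schwarzian/M\"obius route buys exactly this identity $\tilde u=u$ for free and is the intended shortcut.
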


\begin{corl}
Special integrals $I_{k/2}^{(n)}$, $k \in \mathbb{Z}$, of \eqref{nonstatPIIhier} can be obtained from the special integral $I_{1/2}^{(n)}$,
\begin{equation}
    I_{1/2}^{(n)}:
    \quad 
    \sum_{l = 0}^n t_l \mathcal{L}_l \LieBrackets{w^{\prime} - w^2}
    = 0,
    \quad 
    t_0 = -z,
    \,\, 
    t_n = 1,
\end{equation}
by actions of \Backlund transformations \eqref{nonstatBT} on it.
\end{corl}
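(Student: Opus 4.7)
The plan is to interpret the family $\{I_{k/2}^{(n)}\}_{k \in 2\mathbb{Z}+1}$ as the $\angleBrackets{s^{(n)}, r^{(n)}}$-orbit of $I_{1/2}^{(n)}$ under the \Backlund action of Theorem \ref{BTthm}, and to verify that every step of the iteration is non-singular so that the orbit is well defined.

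Starting from a solution $w_0 = w\brackets{z;1/2}$ of $I_{1/2}^{(n)}$ (which automatically solves $\PIIn{n}$ at $\alpha_n = 1/2$), I would first apply $r^{(n)}$, manifestly non-singular, to obtain $-w_0 = w\brackets{z;-1/2}$. Since $w^{\prime}-w^2 \mapsto -w_0^{\prime}-w_0^2$ under this substitution, the image integral at $\alpha_n = -1/2$ is
\begin{equation}
    I_{-1/2}^{(n)}: \quad \sum_{l=0}^n t_l \mathcal{L}_l \LieBrackets{-w^{\prime} - w^2} = 0,
\end{equation}
which matches precisely the $\varepsilon = -1$ version of the Lenard denominator in Theorem \ref{BTthm}. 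Next, the \Backlund transformation $s^{(n)}$ with $\varepsilon = 1$ applied to $-w_0$ has numerator $2\brackets{-1/2} - 1 = -2 \neq 0$ and denominator $2\sum_l t_l \mathcal{L}_l\LieBrackets{w^{\prime} - w^2}$ evaluated on $-w_0$, which equals $2\sum_l t_l \mathcal{L}_l\LieBrackets{-w_0^{\prime}-w_0^2}$. This is a different Lenard combination from the one annihilated by $I_{1/2}^{(n)}$ and is therefore generically nonzero, so $s^{(n)}$ is well defined and produces $w\brackets{z;3/2}$ together with its defining rational identity, which I would take as $I_{3/2}^{(n)}$.

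Iterating the translation $T^{(n)} = s^{(n)} r^{(n)}$ (which shifts $\alpha_n$ by $+1$) and its inverse $\brackets{T^{(n)}}^{-1} = r^{(n)} s^{(n)}$ then sweeps out every half-integer value $\alpha_n = k/2$, $k$ odd, yielding the full ladder of special integrals. At each rung the correct sign $\varepsilon = \pm 1$ is forced by the requirement that $2\alpha_n - \varepsilon \neq 0$, and the generator $r^{(n)}$ is interleaved so that the denominator ever involves the opposite Lenard combination from the one annihilated by the current integral.

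The main obstacle I anticipate is precisely this non-singularity check at each rung: one must show that no two consecutive Lenard sums $\sum_l t_l \mathcal{L}_l\LieBrackets{\pm w^{\prime}-w^2}$ vanish simultaneously on a genuine non-trivial solution of \eqref{nonstatPIIhier}, lest the iteration stall at a $0/0$ indeterminacy. I expect this to follow by inspecting the difference of the two Lenard sums, whose top-order terms are odd in $w^{\prime}$ under $w^{\prime} \mapsto -w^{\prime}$ and cannot vanish identically unless $w$ collapses to a trivial (constant or algebraic) branch. Once this alternation property is secured, the corollary follows by a finite induction on $|k|$ from Theorem \ref{BTthm}.
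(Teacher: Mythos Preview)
The paper gives no proof of this corollary; it is stated as an immediate consequence of Theorem~\ref{BTthm} and the group structure of the \Backlund generators. Your argument is precisely the mechanism the paper has in mind: the translation $T^{(n)}=s^{(n)}r^{(n)}$ and its inverse shift $\alpha_n$ by $\pm 1$, so iterating from $\alpha_n=1/2$ (after one application of $r^{(n)}$ to escape the fixed point of $s^{(n)}$) reaches every half-integer value, and the resulting rational identities are by definition the $I_{k/2}^{(n)}$.

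Where you go beyond the paper is the non-singularity analysis in your final paragraph. The corollary, as stated, concerns the special integrals as \emph{equations}, and the \Backlund transformations act on them as formal rational substitutions; the paper does not track individual solutions or worry about $0/0$ degenerations. Your observation that the two Lenard sums $\sum_l t_l \mathcal{L}_l\LieBrackets{\pm w^{\prime}-w^2}$ cannot vanish simultaneously on a generic solution is correct and can be made precise via Lemma~\ref{LenardProp1}, which shows that their difference has a nontrivial top-order part $-2\,\dfrac{d}{dz}\brackets{\dfrac{d}{dz}+2w}\mathcal{L}_{n-1}\LieBrackets{w^{\prime}-w^2}$. But this level of care is not required for the corollary as the paper intends it, and your sketch already contains more detail than the paper supplies.
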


\begin{proof}[Proof of Theorem \ref{BTthm}]

Let us fix the number $n$ of the member \eqref{nonstatPIIhier}. Let $w(z; \alpha_n) = w$ be a solution of the $n$-th member of \eqref{nonstatPIIhier}. We define a function $\psi$ as
\begin{align}
    \label{eq:psidef}
    w 
    &= \frac12 (\ln \psi')'.
\end{align}
Then $w' - w^2 = \tfrac12 S (\psi)$, where $S(\psi)$ is the Schwartz derivative which is invariant under the \Mobius transformation $\psi \mapsto - \psi^{-1}$. Due to this fact, we take a function $\tilde{w}$ related to the solution $w$ by the \Mobius transformation:
\begin{align} \label{eq:tdw}
    \tilde{w}
    &= \frac12 \brackets{\ln (- \psi^{-1})'}'
    = \frac12 \psi'' \, (\psi')^{-1}
    - \psi' \, \psi^{-1}
    = w 
    - \psi' \, \psi^{-1} 
    .
\end{align}

Let us show that $\tilde w$ is a solution of \eqref{nonstatPIIhier} with another parameter $\tilde \alpha_n \neq \alpha_n$:
\begin{align}
    &&
    \brackets{\dfrac{d}{dz} + 2 \Tilde{w}} 
    \sum_{l = 0}^{n} t_l \mathcal{L}_l \LieBrackets{\Tilde{w}^{\prime} - \Tilde{w}^2}
    &= \tilde{\alpha}_n - \frac12,
    &
    t_0
    &= - z.
    &&
\end{align}
Substituting \eqref{eq:tdw} into the equation above and using the fact that $w$ is a solution of \eqref{nonstatPIIhier} with $\alpha_n$ and the invariance of the Schwartz derivative under the \Mobius transformation, we obtain:
\begin{gather}
    \brackets{\dfrac{d}{dz} 
    + 2 {w} - 2 \psi' \, \psi^{-1}
    } 
    \sum_{l = 0}^{n} t_l \mathcal{L}_l \LieBrackets{{w}^{\prime} - {w}^2}
    = \tilde \alpha_n - \frac12,
    \\[1mm]
    \alpha_n - \tilde{\alpha}_n
    = 2 \psi' \, \psi^{-1} 
    \sum_{l = 0}^{n} t_l \mathcal{L}_l \LieBrackets{{w}^{\prime} - {w}^2}.
\end{gather}
The r.h.s. is equal to $2 \alpha_n - 1$, since $w$ is a solution of \eqref{nonstatPIIhier} and the function $\psi$ is defined up to a shift. Indeed, with the help of \eqref{eq:psidef}, equation \eqref{nonstatPIIhier} can be rewritten as
\begin{gather}
    \dfrac{d}{dz} 
    \sum_{l = 0}^{n} t_l \mathcal{L}_l \LieBrackets{{w}^{\prime} - {w}^2}
    + \psi'' \, (\psi')^{-1} 
    \sum_{l = 0}^{n} t_l \mathcal{L}_l \LieBrackets{{w}^{\prime} - {w}^2}
    = {\alpha}_n - \frac12,
\end{gather}
or, equivalently,
\begin{gather}
    \dfrac{d}{dz} \brackets{
    \psi' \sum_{l = 0}^{n} t_l \mathcal{L}_l \LieBrackets{{w}^{\prime} - {w}^2}
    }
    = \dfrac{d}{dz} \brackets{
    \brackets{{\alpha}_n - \frac12}
    \psi
    }.
\end{gather}
Integrating the latter w.r.t. $z$ and shifting the function $\psi$, we get
\begin{gather}
    \psi'  
    \sum_{l = 0}^{n} t_l \mathcal{L}_l \LieBrackets{{w}^{\prime} - {w}^2}
    = \brackets{\alpha_n - \frac12} \, \psi.
\end{gather}
Therefore, $\tilde w$ is a solution of \eqref{nonstatPIIhier} for $\tilde \alpha_n = 1 - \alpha_n$ and two solutions $w (z; \alpha_n)$ and $\tilde w (z; 1 - \alpha_n)$ of the $n$-th member of \eqref{nonstatPIIhier} are related to each other by the following transformation
\begin{gather}
    \tilde w
    = w 
    - \dfrac{2 {\alpha}_n - 1}{2\sum_{l = 0}^{n} t_l 
    \mathcal{L}_l \LieBrackets{{w}^{\prime} - {w}^2}}.
\end{gather}
Since the $n$-th member of \eqref{nonstatPIIhier} has the discrete symmetry $w \brackets{z; - \alpha_n} \mapsto - w \brackets{z; \alpha_n}$, we arrive at the statement of our theorem. 
\end{proof}
    
\begin{rem}
    The proof uses the same ideas as in the derivation of \Backlund transformations in the stationary case $\bar{t} = 0$, which were obtained in \cite{clarkson1999backlund}.
\end{rem}

\section{Affine Weyl groups and their extensions} \label{AWsection}

The classical result of the second \Painleve equation analysis through its symmetries is that these symmetries form a special group structure associated with an affine Weyl group of type $\typeA{1}$. Let us consider two generators of \Backlund transformations \eqref{BT1}:
\begin{align} \label{BT1gen}
    \begin{aligned}
    s^{(1)}:
    &
    \qquad 
     \Tilde{w} \brackets{z; \tilde{\alpha}_1}
        = w \brackets{z; \alpha_1} 
        - \dfrac{2 \alpha_1 - 1}{2 w^{\prime} - 2 w^2 - z},
    \\
    r^{(1)}:
    &
    \qquad 
     \Tilde{w} \brackets{z; -{\alpha}_1}
        = - w\brackets{z; {\alpha}_1}.
    \end{aligned}
\end{align}

Using the Hamiltonian structure obtained in \cite{mazzocco2007hamiltonian} for the first member of \eqref{nonstatPIIhier}:
\begin{gather} \label{HS1}
    \begin{gathered}
    Q_1 = 4 w,
    \quad 
    P_1 = \dfrac{1}{2} \brackets{w^{\prime} - w^2 - \dfrac{z}{2}},
    \\
    \PoissonBrackets{Q_1, P_1} = 1,
    \quad
    \PoissonBrackets{Q_1, Q_1}
    = \PoissonBrackets{P_1, P_1}
    = 0;
    \\
    \HIIn{1} \brackets{\alpha_1}
    = 4 P_1 \brackets{P_1 + \dfrac{1}{16} Q_1^2 + \dfrac{z}{2}}
    - \dfrac{1}{4} Q_1 \brackets{2 \alpha_1 - 1};
    \end{gathered}
\end{gather}
we can rewrite actions of \Backlund generators \eqref{BT1gen} on canonical coordinates in the following way
\begin{align} \label{BT1genQP}
    \begin{aligned}
    s^{(1)} \brackets{Q_1}
    &= Q_1 - \dfrac{b_1}{P_1},
    &&&&&
    s^{(1)} \brackets{P_1}
    &= P_1,
    &&&&&
    s^{(1)} \brackets{b_1}
    &= - b_1;
    \\
    r^{(1)} \brackets{Q_1}
    &= - Q_1,
    &&&&&
    r^{(1)} \brackets{P_1}
    &= - \brackets{P_1 + \dfrac{1}{16} Q_1^2 + \dfrac{z}{2}},
    &&&&&
    r^{(1)} \brackets{b_1}
    &= - b_1 - 2.
    \end{aligned}
\end{align}

Note that
\begin{itemize}
    \item new coordinates are also canonical,
    \item $\brackets{s^{(1)}}^2 = 1$ and $\brackets{r^{(1)}}^2 = 1$,
    \item compositions $T_m^{(1)} = \brackets{r^{(1)} s^{(1)}}^m$ and $S_k^{(1)}  = \brackets{r^{(1)} s^{(1)}}^k s^{(1)}$ have a group structure and generate all \Backlund transformations \eqref{BT1}.
\end{itemize}

Therefore, \textit{compositions of \Backlund transformations \eqref{BT1} form the affine Weyl group $W^{(1)} = \angleBrackets{s^{(1)}, r^{(1)}}$ of type $\typeA{1}$ with fundamental relations $\brackets{s^{(1)}}^2 = 1$ and $\brackets{r^{(1)}}^2 = 1$}. The corresponding Cartan matrix and the Dynkin diagram are
    \begin{table}[H]
        \centering
        \begin{tabular}{cc}
        The Cartan matrix & The Dynkin diagram
        \\
        $A=
        \begin{pmatrix}
            2 & - 2\\
            - 2 & 2
        \end{pmatrix}
        \quad
        $
        & 
        \begin{tabular}{c}
        \\
        \tikzset{every picture/.style={line width=0.75pt}} 

\begin{tikzpicture}[x=0.75pt,y=0.75pt,yscale=-1,xscale=1]

\draw    (200.91,48.29) -- (240.16,48.29) ;
\draw    (200.47,51.91) -- (241.48,51.91) ;
\draw  [color={rgb, 255:red, 104; green, 193; blue, 184 }  ,draw opacity=1 ][fill={rgb, 255:red, 104; green, 193; blue, 184 }  ,fill opacity=1 ] (205.36,50) .. controls (205.36,47.46) and (203.3,45.4) .. (200.76,45.4) .. controls (198.22,45.4) and (196.16,47.46) .. (196.16,50) .. controls (196.16,52.54) and (198.22,54.6) .. (200.76,54.6) .. controls (203.3,54.6) and (205.36,52.54) .. (205.36,50) -- cycle ;
\draw   (210.05,54.33) -- (201.62,49.8) -- (210.3,45.77) ;
\draw  [color={rgb, 255:red, 251; green, 143; blue, 103 }  ,draw opacity=1 ][fill={rgb, 255:red, 251; green, 143; blue, 103 }  ,fill opacity=1 ] (244.76,50.27) .. controls (244.76,47.73) and (242.7,45.67) .. (240.16,45.67) .. controls (237.62,45.67) and (235.56,47.73) .. (235.56,50.27) .. controls (235.56,52.81) and (237.62,54.87) .. (240.16,54.87) .. controls (242.7,54.87) and (244.76,52.81) .. (244.76,50.27) -- cycle ;
\draw   (231.02,45.84) -- (239.58,50.12) -- (231.02,54.4) ;


\end{tikzpicture}
        \end{tabular}
        \end{tabular}
    \end{table}

An extended affine Weyl group $\tilde{W}^{(1)} = \angleBrackets{s_0^{(1)}, s_1^{(1)}; \pi^{(1)}}$ has generators $s_0^{(1)} = s^{(1)}$, $s_1^{(1)} =  r^{(1)} s^{(1)} r^{(1)}$, $\pi^{(1)} = r^{(1)}$ with fundamental relations
\begin{equation}
    \brackets{s_i^{(1)}}^2 
    = 1,
    \quad 
    \brackets{\pi^{(1)}}^2 
    = 1,
    \quad 
    \pi^{(1)} s_i^{(1)}
    = s_{i + 1}^{(1)} \pi^{(1)},
    \quad 
    i \in \quot{\mathbb{Z}}{2 \mathbb{Z}}.
\end{equation}

\begin{rem}
    Actions of the affine Weyl group $W^{(1)}$ on canonical coordinates and a parameter $b_1$ define the discrete dynamics  d-P$\brackets{{A_1^{\brackets{1}}}/{E_7^{\brackets{1}}}}$ \cite{kajiwara2017geometric}.
\end{rem}

In this section we find out an exact form of generators of an affine Weyl group in terms of canonical coordinates for the $n$-th member of \eqref{nonstatPIIhier}. We will work with two different generators $s^{(n)}$ and $r^{(n)}$ of \Backlund transformations \eqref{nonstatBT} defined by the following data
\begin{align} \label{BTgen}
    \begin{aligned}
    s^{(n)}:
    &
    \qquad 
     \Tilde{w} \brackets{z, \overline{t}; \tilde{\alpha}_n}
        = w \brackets{z, \overline{t}; \alpha_n} 
        - \dfrac{2 \alpha_n - 1}{2 \dsum_{l = 1}^{n} t_l \mathcal{L}_l \LieBrackets{w^{\prime} - w^2} - z},
    \\
    r^{(n)}:
    &
    \qquad 
     \Tilde{w} \brackets{z, \overline{t}; -{\alpha}_n}
        = - w\brackets{z, \overline{t}; {\alpha}_n}.
    \end{aligned}
\end{align}

\begin{rem}
    The composition $r^{(n)} s^{(n)}$ gives the \Backlund transformation \eqref{nonstatBT} with $\varepsilon = - 1$.
\end{rem}

To express \eqref{BTgen} in terms of canonical coordinates $Q_i$, $P_j$, $1 \leq i, j \leq n$, with the standard Poisson bracket
\begin{equation}
    \PoissonBrackets{Q_i, P_j} = \delta_{ij},
    \quad 
    \PoissonBrackets{Q_i, Q_j}
    = \PoissonBrackets{P_i, P_j} = 0,
\end{equation}
we will express canonical coordinates via $w \brackets{z; \alpha_n}$ and its derivatives, then rewrite actions of \eqref{BTgen} on canonical coordinates in terms of matrix entries  $b_{2i+1}^{(n)}$ and $b_{2i}^{(n)}$ of a matrix $\mathcal{A}^{(n)}$ of isomonodromic deformation problem (15) in \cite{mazzocco2007hamiltonian}. Finally, by Theorem 6.1 in \cite{mazzocco2007hamiltonian}, these matrix entries are polynomials of canonical coordinates. \footnote{Many thanks to Marta Mazzocco, who drew the author's attention to this fact!}
\subsection{Canonical coordinates as polynomials of the \texorpdfstring{$\PIIn{n}$}{PIIn} solution}

Under the definition of canonical coordinates in Theorem 5.1 in \cite{mazzocco2007hamiltonian}, we are able to express canonical coordinates through a solution of \eqref{nonstatPIIhier} and its derivatives.

\begin{thm} \label{canCoordAsPols}
Canonical coordinates $Q_i$, $P_i$, $1 \leq i \leq n$, for the $n$-th member of the non-stationary \text{\rm $\PIIn{n}$} hierarchy can be represented as polynomials of the \text{\rm $\PIIn{n}$} solution $w \brackets{z, \overline{t}; \alpha_n}$ and its derivatives in the following form
\begin{gather} \label{canCoordAsPolsEq}
    \begin{gathered}
    \begin{aligned}
    Q_n 
    &= 4^n w,
    &&&&&
    P_n
    &= \dfrac{1}{2^{2n - 1}} \sum_{l = 0}^n t_{l} \mathcal{L}_{l} \LieBrackets{w^{\prime} - w^2};
    \end{aligned}
    \\
    Q_{k}
    = 4^{k} \brackets{\dfrac{d}{dz} + 2 w} 
    \dsum_{l = 1}^n t_{l} \mathcal{L}_{l - k} \LieBrackets{w^{\prime} - w^2}
    - \sum_{l = 1}^{n - k} P_l Q_{l + k
    },
    \\
    P_k 
    = \dfrac{1}{2^{2k - 1}} 
    \sum_{l = 0}^k t_{n - l} 
    \mathcal{L}_{k - l} \LieBrackets{w^{\prime} - w^2},
    \quad 
    1 \leq k \leq n - 1,
    \end{gathered}
\end{gather}
with $t_0 = -z$, $t_n = 1$.
\end{thm}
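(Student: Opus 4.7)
The plan is to take the definition of the canonical coordinates from theorem 5.1 in \cite{mazzocco2007hamiltonian} --- which identifies $Q_i$ and $P_i$ with specific entries $b_{2i}^{(n)}$, $b_{2i+1}^{(n)}$ of the isomonodromic matrix $\mathcal{A}^{(n)}$ appearing in problem (15) of that paper --- and to compute these matrix entries explicitly in terms of $w$, $\overline{t}$ and the Lenard operators by unwinding the Lenard recursion. I would proceed by backward induction on $k$, starting from $k = n$ and descending to $k = 1$.

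For the base case $k = n$, the identification $Q_n = 4^n w$ is immediate, since $w$ sits in the top (``leading'') slot of $\mathcal{A}^{(n)}$. The corresponding $P_n$-slot houses exactly the full Lenard stack $\sum_{l=0}^n t_l \mathcal{L}_l \LieBrackets{w' - w^2}$ (rescaled by $1/2^{2n-1}$ to absorb the Mazzocco--Mo normalisation); the consistency of this identification is verified by applying $(d/dz + 2w)$ to this slot and recovering precisely the hierarchy equation \eqref{nonstatPIIhier}.

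For the inductive step, I would exploit the fact that passing from the $k$-th level of $\mathcal{A}^{(n)}$ to the $(k-1)$-th level corresponds to one application of the Lenard recursion $\partial_z \mathcal{L}_{l+1} \LieBrackets{u} = (\partial_z^3 + 4u \partial_z + 2 \partial_z u) \mathcal{L}_l \LieBrackets{u}$ with $u = w' - w^2$. A direct reindexing of this recursion then yields the formula for $P_k$ (using the top $k+1$ times $t_{n-k}, \dots, t_n$). The formula for $Q_k$ is obtained by applying $(d/dz + 2w)$ to the partial Lenard stack $\sum_l t_l \mathcal{L}_{l-k} \LieBrackets{w' - w^2}$ (using the top $n - k + 1$ times $t_k, \dots, t_n$) and then subtracting the bilinear tail $\sum_{l=1}^{n - k} P_l Q_{l + k}$, which accounts for the nonlinear contributions in the factorisation structure of $\mathcal{A}^{(n)}$ at lower levels.

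The main obstacle will be the combinatorial bookkeeping: tracking the powers of $2$ and $4$, the index shifts between $t_l$ and $t_{n-l}$, and the relabelling between the Lenard tower, the matrix entries $b_i^{(n)}$, and the canonical pairs $(Q_i, P_i)$. A secondary point to verify is polynomiality: after applying $(d/dz + 2w)$ to the partial stack and subtracting the tail, all apparent high-order derivatives of $w$ must collapse via the Lenard recursion into a genuine polynomial expression in $w, w', w'', \dots$. This follows inductively once one notes $\mathcal{L}_0 \LieBrackets{u} = 1/2$ and that each $\mathcal{L}_l \LieBrackets{u}$ is polynomial in $u$ and its derivatives.
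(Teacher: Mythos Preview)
Your inductive outline is plausible, but it diverges from the paper's argument in one structural way and glosses over the point where the real work happens.

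First, a small inaccuracy: the Mazzocco--Mo theorem 5.1 does \emph{not} simply identify $Q_i$, $P_i$ with individual matrix entries $b_{2i}^{(n)}$, $b_{2i+1}^{(n)}$. The $P_k$ are quotients $(a_{2(n-k)+1}^{(n)} + b_{2(n-k)+1}^{(n)})/a_{2n+1}^{(n)}$, and the $Q_k$ are defined through the coefficients of the \emph{formal power series inverse} $\bigl(\sum_i P_i \gamma^{2i}\bigr)^{-1}$ paired against the $b_{2j}^{(n)}$. Your proposal never engages with this inverse, and that is exactly where the paper invests its effort.

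The paper's proof is not inductive. For the $P_k$ it is a direct one-line substitution of the $A$ and $B$ entries into the Mazzocco--Mo quotient. For the $Q_k$, the paper rewrites the claimed recursion as the triangular linear system $\sum_{l=0}^{n-k} P_l Q_{l+k} = -b_{2k}^{(n)}$, i.e.\ $PQ = B$ with $P$ unipotent upper-triangular, and then proves a separate lemma (Lemma~\ref{upperTrMat}) identifying the entries of the matrix inverse $P^{-1}$ with the coefficients of the formal power series inverse in the Mazzocco--Mo definition. That lemma is the crux; once you have it, the formula for $Q_k$ drops out with no induction and no Lenard recursion at all.

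Your backward induction amounts to solving $PQ = B$ by forward substitution row-by-row, which is of course equivalent --- but you would still owe the reader the identification of the Mazzocco--Mo power-series definition of $Q_k$ with the matrix equation $Q = P^{-1}B$, and your sketch attributes this step to ``one application of the Lenard recursion'', which it is not. The Lenard recursion governs the $b_{2k}^{(n)}$ themselves, not the triangular inversion that extracts $Q_k$ from them. If you pursue your route, isolate that inversion step explicitly; otherwise the induction hypothesis has nothing concrete to bite on.
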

\begin{proof}
\phantom{}

\begin{itemize}
    \item for $P_n$:
    \begin{align}
        P_n
        &= \dfrac{a_{1}^{(n)} + b_{1}^{(n)}}{a_{2n+1}^{(n)}}
        = \dfrac{1}{4^n}
        \dsum_{l = 1}^n t_l 
        \brackets{A_{1}^{(l)} + B_{1}^{(l)} - z}
        \\
        &= \dfrac{1}{2^{2n - 1}}
        \sum_{l = 1}^n t_{l} 
        \mathcal{L}_{l} \LieBrackets{w^{\prime} - w^2}
        - \dfrac{z}{4^n}
        = \dfrac{1}{2^{2n - 1}}
        \sum_{l = 0}^n t_{l} 
        \mathcal{L}_{l} \LieBrackets{w^{\prime} - w^2}
        .
    \end{align}
    
    \item for $P_k$, $1 \leq k < n$,  we have
    \begin{align}
        P_k 
        &= \dfrac{a_{2(n-k)+1}^{(n)} + b_{2(n-k)+1}^{(n)}}{a_{2n+1}^{(n)}}
        = \dfrac{1}{4^n}
        \dsum_{l = 1}^n t_l 
        \brackets{A_{2 (n - k) + 1}^{(l)} + B_{2 (n - k) + 1}^{(l)}}
        \\
        &= \dfrac{1}{2} \dfrac{4^{n - k + 1}}{4^n} 
        \sum_{l = 1}^n t_l \mathcal{L}_{l - n + k}
        \LieBrackets{w^{\prime} - w^2}
        = \dfrac{1}{2^{2k - 1}} 
        \brackets{
        \mathcal{L}_{k} \LieBrackets{w^{\prime} - w^2}
        + 
        \sum_{l = 1}^k t_{n - l} 
        \mathcal{L}_{k - l} \LieBrackets{w^{\prime} - w^2}}
        .
    \end{align}
    
    \item for $Q_n$:
    \begin{align}
        Q_n 
        &= - \sum_{j = 1}^n b_{2j}^{(n)} 
        \LieBrackets{\brackets{\sum_{i = 0}^n P_i \gamma^{2i}}^{-1}}_{2j - 2n}
        = - b_{2n}^{(n)}.
    \end{align}
    
    \item To prove the last formula for $Q_k$, $1 \leq k \leq n - 1$, we rewrite it in the following way
    \begin{align}
        Q_k
        &= - b_{2k}^{(n)}
        - \sum_{l = 1}^{n - k} P_l Q_{l + k}
        \\
        \sum_{l = 0}^{n - k} P_l Q_{l + k} 
        &= - b_{2k}^{(n)}
        ,
    \end{align}
    where we set $P_0 = 1$.
    
    The last relation can be represented in the matrix form
    \begin{equation}
        Q 
        = P^{-1} \cdot B,
    \end{equation}
    where 
    \begin{equation}
        P 
        = 
        \begin{pmatrix}
        P_0 & P_1 & P_2 & \dots & P_{n - 1}
        \\
        0 & P_0 & P_1 & \dots & P_{n - 2}
        \\
        0 & 0 & P_0 & \dots & P_{n - 3}
        \\
        \vdots & \vdots & \vdots & \ddots & \vdots
        \\
        0 & 0 & 0 & \dots & P_0
        \end{pmatrix}
        = \brackets{P_{j - k}}_{j > k; j, k = 0}^n
        ,
        \quad 
        Q 
        = 
        \begin{pmatrix}
        Q_1
        \\
        Q_2
        \\
        Q_3
        \\
        \vdots
        \\
        Q_n
        \end{pmatrix},
        \quad 
        B
        = 
        \begin{pmatrix}
        - b_2^{(n)}
        \\
        - b_4^{(n)}
        \\
        - b_6^{(n)}
        \\
        \vdots
        \\
        - b_{2n}^{(n)}
        \end{pmatrix}
        .
    \end{equation}
    
    \begin{lem} \label{upperTrMat}
        Let $P$ be the $n \times n$ matrix defined above. Then entries of its inverse matrix coincide with corresponding coefficients of the sequence $\brackets{\dsum_{i = 0}^{n - 1} P_i \gamma^{2i}}^{-1}$ in the given way 
        \begin{equation}
            \brackets{P^{-1}}_{j,k}
            = \LieBrackets{\brackets{\sum_{i = 0}^{n - 1} P_i \gamma^{2i}}^{-1}}_{2j - 2k}.
        \end{equation}
    \end{lem}
    \begin{proof}[Proof of Lemma \ref{upperTrMat}]
    On the one hand, the given sequence can be represented as
    \begin{align}
        \brackets{\dsum_{i = 0}^{n - 1} P_i {\gamma}^{2i}}^{-1}
        &= \brackets{1 + \dsum_{i = 1}^{n - 1} P_i {\gamma}^{2 i}}^{-1}
        \\
        \label{seqExpr}
        &= \sum_{m \geq 0} 
        \brackets{- 1}^m 
        \brackets{\dsum_{i = 1}^{n - 1} P_i {\gamma}^{2i}}^m.
    \end{align}
    
    On the other hand, the inverse matrix is
    \begin{align} 
        P^{-1}
        &= \brackets{\mathbb{E} + N}^{-1}
        = \sum_{m = 0}^{n - 1} \brackets{-1}^m N^m
        \\
        \label{P^(-1)Expr}
        &= \sum_{m = 0}^{n - 1} \brackets{-1}^m
        \brackets{\sum_{i = 1}^{n - 1} P_i J^i}^m
        .
    \end{align}
    where $N$ is a nilpotent matrix with index $n$ and $J$ is a Jordan block
    \begin{equation}
        N 
        = 
        \begin{pmatrix}
        0 & P_1 & P_2 & \dots & P_{n - 1}
        \\
        0 & 0 & P_1 & \ddots & P_{n - 2}
        \\
        \vdots & \vdots & \ddots & \ddots & \vdots
        \\
        0 & 0 & 0 & \dots & P_1
        \\
        0 & 0 & 0 & \dots & 0
        \end{pmatrix}
        ,
        \quad
        J 
        = 
        \begin{pmatrix}
        0 & 1 & 0 & \dots & 0
        \\
        0 & 0 & 1 & \ddots & 0
        \\
        \vdots & \vdots & \ddots & \ddots & \vdots
        \\
        0 & 0 & 0 & \dots & 1
        \\
        0 & 0 & 0 & \dots & 0
        \end{pmatrix}
        .
    \end{equation}
    
    Comparing \eqref{seqExpr} with \eqref{P^(-1)Expr}, we have the lemma statement.
    \end{proof}
\end{itemize}
\end{proof}

\begin{exmp}
\phantom{}

For the case $n = 1$, we have formulas \eqref{HS1}:
\begin{equation}
    Q_1 
    = 4 w, 
    \quad 
    P_1 
    = \dfrac{1}{2} \brackets{w^{\prime} - w^2 - \dfrac{z}{2}}.
\end{equation}

For the case $n = 2$, canonical coordinates in terms of $w \brackets{z, \overline{t}; \alpha_2}$ and its derivatives are
\begin{gather}
    Q_2 = 16 w, 
    \quad
    P_2 = \dfrac{1}{16} \brackets{ - z + 6 w^4 - 12 w^2 w^{\prime} + 2 {w^{\prime}}^2 - 4 w w^{\prime\prime} + 2 w^{\prime\prime\prime}
    + 2 t_1 \brackets{w^{\prime} - w^2}};
    \\
    Q_1 = - 8 w w^{\prime} + 4 w^{\prime\prime}, 
    \quad 
    P_1 = \dfrac{1}{2} \brackets{w^{\prime} - w^2 + \dfrac{t_1}{2}}.
\end{gather}
\end{exmp}
\subsection{Generators of affine Weyl groups}

In the current subsection we will express actions of generators \eqref{BTgen} on expressions of canonical coordinates defined in Theorem \ref{canCoordAsPols} in terms of matrix entries $b_{2 i + 1}^{(n)}$ and $b_{2 i}^{(n)}$. Since we know how these entries depend on  canonical coordinates, we find out generators of an affine Weyl group in terms of canonical coordinates $Q_i$, $P_i$, $i = 1, \dots, n$.

\begin{thm} \label{WeilConnection}
An affine Weyl group $W^{(n)}$ associated with the $n$-th member of \eqref{nonstatPIIhier} has a type $A_1^{(1)}$, is~given~as 
\begin{equation}
    W^{(n)} = \angleBrackets{s^{(n)}, r^{(n)}}, 
\end{equation}
where generators $s^{(n)}$ and $r^{(n)}$ act on canonical coordinates and a parameter $b_n = 2 \alpha_n - 1$ by following rules
\begin{gather} 
    \label{sGenW}
    \tag{$s^{(n)}$}
    \begin{gathered}
    s^{(n)} \brackets{Q_k} 
    = Q_k, 
    \quad 
    1 \leq k \leq n - 1,
    \qquad
    s^{(n)} \brackets{Q_n} 
    = Q_n - \dfrac{b_n}{P_n}, 
    \\
    s^{(n)} \brackets{P_k} 
    = P_k, 
    \quad 
    1 \leq k \leq n,
    \qquad
    s^{(n)} \brackets{b_n} 
    = - b_n;
    \end{gathered}
    \\
    \label{rGenW}
    \tag{$r^{(n)}$}
    \begin{gathered}
    r^{(n)} \brackets{Q_k} 
    = b_{2k}^{(n)} 
    - \sum_{l = 1}^{n - k} 
    r^{(n)} \brackets{P_l} r^{(n)} \brackets{Q_{l + k}}, 
    \quad 
    1 \leq k \leq n - 1,
    \qquad
    r^{(n)} \brackets{Q_n} 
    = - Q_n, 
    \\
    r^{(n)} \brackets{P_k} 
    = P_k - \dfrac{1}{2^{2n - 1}} b_{2 \brackets{n - k} + 1}^{(n)}, 
    \quad 
    1 \leq k \leq n,
    \qquad
    r^{(n)} \brackets{b_n} 
    = - b_n - 2;
    \end{gathered}
\end{gather}
with fundamental relations $\brackets{s^{(n)}}^2 = 1$, $\brackets{r^{(n)}}^2 = 1$, and preserves the canonicity of variables.
\end{thm}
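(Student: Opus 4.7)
\emph{Proof proposal.}

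The plan is to substitute the B\"acklund formulas \eqref{BTgen} into the polynomial expressions of Theorem \ref{canCoordAsPols} for the canonical coordinates, and then to reorganise the resulting expressions using the matrix entries $b_{2k}^{(n)}$ and $b_{2(n-k)+1}^{(n)}$ of $\mathcal{A}^{(n)}$, for which Theorem 6.1 of \cite{mazzocco2007hamiltonian} supplies explicit polynomial formulas in $Q_{i},P_{j}$.

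The generator $s^{(n)}$ is handled through the Schwarzian-type identity
\begin{equation}
    \tilde{w}' - \tilde{w}^{2} = w' - w^{2},
\end{equation}
implicit in the proof of Theorem \ref{BTthm}: with $w = \tfrac{1}{2}\varphi''/\varphi' - \varphi'/\varphi$ and $\tilde{w} = \tfrac{1}{2}\varphi''/\varphi'$, the combination $w'-w^{2} = \tfrac{1}{2}S[\varphi]$ is invariant under the M\"obius substitution $\psi \mapsto -1/\varphi$. Consequently every $\mathcal{L}_{l}[w'-w^{2}]$ is $s^{(n)}$-invariant, immediately giving $s^{(n)}(P_{k}) = P_{k}$ for all $k$. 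Writing $2\sum_{l=0}^{n}t_{l}\mathcal{L}_{l}[w'-w^{2}] = 4^{n}P_{n}$, one reads off $s^{(n)}(Q_{n}) = 4^{n}\tilde{w} = Q_{n} - b_{n}/P_{n}$ directly from $Q_{n}=4^{n}w$. For $k<n$, $s^{(n)}(Q_{k}) = Q_{k}$ follows by downward induction: the $Q_{k}$ formula depends on $w$ only through $(d/dz + 2w)$ applied to $\Sigma_{k} = \sum_{l=k}^{n} t_{l} \mathcal{L}_{l-k}[w'-w^{2}]$; the shift $w \mapsto w - b_{n}/(4^{n}P_{n})$ produces the correction $-(b_{n}/P_{n}) \cdot (2\Sigma_{k}/4^{n-k})$, which equals $-(b_{n}/P_{n}) P_{n-k}$ after relabelling, and is then cancelled by the recursive term $P_{n-k} \cdot (s^{(n)}(Q_{n}) - Q_{n})$.

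For the generator $r^{(n)}$, substituting $w \mapsto -w$ sends $w'-w^{2}$ to $-w'-w^{2}$. The key observation is that every matrix entry $b_{j}^{(n)}$ of $\mathcal{A}^{(n)}$ is an odd polynomial in $w$ and its derivatives (each monomial having odd total degree in $w, w', w'', \dots$), so that $r^{(n)}(b_{j}^{(n)}) = -b_{j}^{(n)}$; this parity follows from the explicit polynomial structure recorded in \cite{mazzocco2007hamiltonian}. The correction $\mathcal{L}_{l}[-w'-w^{2}] - \mathcal{L}_{l}[w'-w^{2}]$ in $r^{(n)}(P_{k}) - P_{k}$ is then identified with $-\frac{1}{2^{2n-1}}b_{2(n-k)+1}^{(n)}$ by matching polynomial expressions term by term, yielding the stated $r^{(n)}(P_{k})$. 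For $r^{(n)}(Q_{k})$ with $k<n$, I apply $r^{(n)}$ to the identity $\sum_{l=0}^{n-k} P_{l} Q_{l+k} = -b_{2k}^{(n)}$ from the proof of Theorem \ref{canCoordAsPols}, use $r^{(n)}(b_{2k}^{(n)}) = -b_{2k}^{(n)}$, and solve the resulting triangular system from the base case $r^{(n)}(Q_{n}) = -Q_{n}$; the announced recursive formula emerges. The relation $r^{(n)}(b_{n}) = -b_{n}-2$ is immediate from $b_{n} = 2\alpha_{n} - 1$ and $r^{(n)}(\alpha_{n}) = -\alpha_{n}$.

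Finally, $(s^{(n)})^{2} = \mathrm{id}$ is transparent because only $Q_{n}$ moves and $b_{n}/P_{n}$ is linear in $b_{n}$; $(r^{(n)})^{2} = \mathrm{id}$ reduces, after one iteration, to the two parity identities on the matrix entries. Canonicity for $s^{(n)}$ is immediate from $\{1/P_{n}, P_{j}\} = 0$ and $\{Q_{n} - b_{n}/P_{n}, P_{n}\} = 1$; for $r^{(n)}$ it is verified using the polynomial dependence of $b_{2k}^{(n)}$ and $b_{2(n-k)+1}^{(n)}$ on $Q_{i}, P_{j}$ and a direct Poisson-bracket computation. The main obstacle is the $r^{(n)}$ step — specifically, establishing the odd-parity claim for the matrix entries of $\mathcal{A}^{(n)}$ and matching the Lenard-polynomial correction to $b_{2(n-k)+1}^{(n)}$; everything else is routine bookkeeping built around identities already available in \cite{mazzocco2007hamiltonian}.
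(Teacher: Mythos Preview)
Your approach is essentially the paper's own, and you have located the real difficulty correctly. The paper resolves the $r^{(n)}(P_k)$ step not by ``matching polynomial expressions term by term'' but via a clean inductive identity on the Lenard operators (stated there as a lemma):
\[
\mathcal{L}_{k}\bigl[-w'-w^{2}\bigr] \;=\; \mathcal{L}_{k}\bigl[w'-w^{2}\bigr] \;-\; 2\,\frac{d}{dz}\Bigl(\frac{d}{dz}+2w\Bigr)\mathcal{L}_{k-1}\bigl[w'-w^{2}\bigr],
\]
proved by induction from the Lenard recursion. Substituting this into the expression for $P_k$ from Theorem~\ref{canCoordAsPols} and summing produces exactly $r^{(n)}(P_k)=P_k-\tfrac{1}{2^{2n-1}}\,b_{2(n-k)+1}^{(n)}$, once the second term is recognised as the relevant entry of $\mathcal{A}^{(n)}$. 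So the missing ingredient is this Lenard identity; with it in hand, the ``matching'' you describe becomes a one-line computation rather than an obstacle. The odd-parity claim $r^{(n)}(b_j^{(n)})=-b_j^{(n)}$ the paper also isolates as a lemma, but derives it directly from the discrete symmetry $w\mapsto -w$ of the hierarchy rather than by inspecting the polynomial structure in $Q_i,P_j$.

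Two minor remarks. Your argument for $s^{(n)}(Q_k)=Q_k$ when $k<n$ is more careful than the paper's: the paper simply asserts invariance follows from that of $w'-w^{2}$, glossing over the explicit $w$-dependence in the factor $(d/dz+2w)$; your inductive cancellation $-\tfrac{b_n}{P_n}P_{n-k}$ against the recursive term is the honest computation. Conversely, your derivation of $r^{(n)}(b_n)=-b_n-2$ directly from $b_n=2\alpha_n-1$ and $\alpha_n\mapsto-\alpha_n$ is more transparent than the paper's route through an auxiliary differential identity for $b_1^{(n)}$. For canonicity under $r^{(n)}$ the paper invokes the explicit Poisson brackets among the matrix entries $a_{2k+1}^{(n)},b_j^{(n)}$ recorded in \cite{mazzocco2007hamiltonian}, which is what your ``direct Poisson-bracket computation'' would amount to.
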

\begin{proof}
\phantom{}

\textbf{1. Expressions of canonical coordinates after actions of generators on them.}

The proof is constructed on generators \eqref{BTgen}. Since $Q_n = 4^n w$, generators $r^{(n)}$ and $s^{(n)}$ define two base transformations
\begin{equation}
    s^{(n)} \brackets{Q_n} = Q_n - \dfrac{b_n}{P_n},
    \qquad
    r^{(n)} \brackets{Q_n} = - Q_n.
\end{equation}

Using this formulas, Theorem 5.1 in \cite{mazzocco2007hamiltonian}, and Theorem \ref{canCoordAsPols} in the current paper, we find out actions of generators on remaining canonical coordinates.

\vspace{0.5cm}
\textbf{$s^{(n)}$-transformation.} 

To prove formulas for $s^{(n)}$, the following lemma is helpful.
\begin{lem} \label{sInvariance}
Under the $s^{(n)}$-transformation the expression $Q_n^{\prime} - Q_n^2$ is invariant.
\end{lem}
\begin{proof}[Proof of Lemma \ref{sInvariance}]

This follows from the fact that $w^{\prime} - w^2$ is invariant under \Backlund transformation~\eqref{nonstatBT}.
\end{proof}

Hence, $s^{(n)}$ acts trivially on all canonical coordinates except of $Q_n$.

To find how $b_n$ transforms, we use the Hamiltonian defined by Theorem 7.2 in the work \cite{mazzocco2007hamiltonian}. The Hamiltonian system contains only one equation which depends on a parameter $b_n$:
\begin{equation}
    P_n^{\prime} 
    = - \dfrac{1}{2^{2n - 1}} Q_n P_n 
    + \dfrac{1}{4^n} b_n,
\end{equation}
one can obtain this just rewriting the $n$-th member \eqref{nonstatPIIhier} via canonical coordinates.

Then
\begin{align}
    \brackets{s^{(n)} \brackets{P_n}}^{\prime}
    = P_n^{\prime}
    &= - \dfrac{1}{2^{2n - 1}} Q_n P_n 
    + \dfrac{1}{4^n} b_n
    \\
    &= - \dfrac{1}{2^{2n - 1}} \brackets{Q_n - \dfrac{b_n}{P_n}} P_n
    - \dfrac{1}{4^n} b_n
    \\
    &= - \dfrac{1}{2^{2n - 1}} s^{(n)} \brackets{Q_n} s^{(n)} \brackets{P_n}
    + \dfrac{1}{4^n} s^{(n)} \brackets{b_n},
\end{align}
where $s^{(n)} \brackets{b_n} = - b_n$.

\begin{rem}
Note that actions of the generators and the derivation commute.
\end{rem}

\vspace{0.5cm}
\textbf{$r^{(n)}$-transformation.}

First of all, let us notice the following fact.
\begin{lem} \label{LenardProp1}
$\mathcal{L}_k \LieBrackets{- w^{\prime} - w^2}
    = \mathcal{L}_k \LieBrackets{w^{\prime} - w^2}
    - 2 \dfrac{d}{dz} 
    \brackets{\dfrac{d}{dz} + 2 w} 
    \mathcal{L}_{k - 1} \LieBrackets{w^{\prime} - w^2}, 
    \quad 1 \leq k \leq n$.
\end{lem}
\begin{proof}[Proof of Lemma \ref{LenardProp1}]
By mathematical induction and the Lenard recursion relation.
\end{proof}

Then $P_k$, $k = 1, \dots, n$, transforms as
\begin{align}
    r^{(n)} \brackets{P_k}
    &= \dfrac{1}{2^{2k - 1}} 
    \sum_{l = 1}^{n} t_l 
    \mathcal{L}_{l - \brackets{n - k}} \LieBrackets{- w^{\prime} - w^2}
    \\
    &= P_k 
    - \dfrac{1}{2^{2k}} \sum_{l = 1}^n t_l 
    \dfrac{d}{dz} \brackets{\dfrac{d}{dz} + 2 w}
    \mathcal{L}_{l - \brackets{n - k} - 1} \LieBrackets{w^{\prime} - w^2}
    \\
    &= P_k - \dfrac{1}{2^{2n - 1}} b_{2 \brackets{n - k} + 1}^{(n)}.
\end{align}

For $Q_k$, $k = 1, \dots, n - 1$, we have
\begin{align}
    r^{(n)} \brackets{Q_k}
    & = - r^{(n)} \brackets{b_{2k}^{(n)}} 
    - \sum_{l = 1}^{n - k} r^{(n)} \brackets{P_l}  r^{(n)} \brackets{Q_{l + k}}.
\end{align}

\begin{lem} \label{rTransOf_b_k^n}
$r^{(n)} \brackets{b_{k}^{(n)}}
    = - b_{k}^{(n)}$.
\end{lem}
\begin{proof}[Proof of Lemma \ref{rTransOf_b_k^n}]
It follows from the discrete symmetry $w \brackets{z, \overline{t}; - \alpha_n} \mapsto - w \brackets{z, \overline{t}; \alpha_n}$ of \eqref{nonstatPIIhier}.
\end{proof}

Therefore,
\begin{align}
    r^{(n)} \brackets{Q_k}
    & = b_{2k}^{(n)}
    - \sum_{l = 1}^{n - k} r^{(n)} \brackets{P_l}  r^{(n)} \brackets{Q_{l + k}}.
\end{align}

Action on a parameter $b_n$ is a consequence of the following
\begin{lem} \label{derivativeOf_b_1^n}
$    \dfrac{d}{dz} b_1^{(n)}
    = 2 Q_n P_n 
    - \dfrac{1}{2^{2n - 1}} Q_n b_1^{(1)}
    - \brackets{b_n - \dfrac{1}{2}}$.
\end{lem}
\begin{proof}[Proof of Lemma \ref{derivativeOf_b_1^n}]
Follows straightforward.
\end{proof}

\vspace{0.5cm}
\textbf{2. Fundamental relations.}

\vspace{0.5cm}
\textbf{$s^{(n)}$-transformation.}

To prove fundamental relations for the $s^{(n)}$-generator, we should compute its double action on $Q_n$ and $b_n$ only. So,
\begin{align}
    \brackets{s^{(n)} \brackets{Q_n}}^2
    &= s^{(n)} \brackets{Q_n - \dfrac{b_n}{P_n}}
    = s^{(n)} \brackets{Q_n}
    - \dfrac{s^{(n)}\brackets{b_n}}{s^{(n)} \brackets{P_n}}
    = Q_n, 
    \\
    \brackets{s^{(n)} \brackets{b_n}}^2
    &= - s^{(n)} \brackets{b_n}
    = b_n.
\end{align}

\vspace{0.5cm}
\textbf{$r^{(n)}$-transformation.}

Similarly, one can prove that the $r^{(n)}$-generator satisfies the fundamental relation $\brackets{r^{(n)}}^2 = 1$ for $Q_i$, $P_i$, $i = 1, \dots, n$, and $b_n.$

\vspace{0.5cm}
\textbf{3. The canonicity of new coordinates.}

The canonicity of new coordinates defined as a result of actions of generators $s^{(n)}$ and $r^{(n)}$ on origin canonical coordinates follows from the given non-zero Poisson brackets \cite{mazzocco2007hamiltonian}
\begin{align}
    \PoissonBrackets{a_{2k + 1}^{(n)}, b_{2 l + 1}^{(n)}}
    &= - b_{2 (k + l + 1)}^{(n)},
    & 
    0& \leq k \leq n, 
    &
    0& \leq l \leq n - 1,
    & 
    k& + l \leq n - 1,
    \\
    \PoissonBrackets{a_{2k + 1}^{(n)}, b_{2 l}^{(n)}}
    &= - b_{2 (k + l) + 1}^{(n)},
    & 
    0& \leq k \leq n, 
    &
    1& \leq l \leq n,
    & 
    k& + l \leq n - 1,
    \\
    \PoissonBrackets{b_{2k}^{(n)}, b_{2 l + 1}^{(n)}}
    &= a_{2 (k + l) + 1}^{(n)},
    & 
    1& \leq k \leq n, 
    &
    0& \leq l \leq n - 1,
    & 
    k& + l \leq n.
\end{align}
\end{proof}

\begin{corl}
An extended affine Weyl group $\tilde{W}^{(n)}$ associated with the $n$-th member of \text{\rm $\PIIn{n}$} has following generators
\begin{equation}
    \tilde{W}^{(n)} = \angleBrackets{s_0^{(n)}, s_1^{(n)}; \pi^{(n)}},
    \quad 
    s_0^{(n)} = s^{(n)}, \, s_1^{(n)} = r^{(n)}s^{(n)}r^{(n)}, \, \pi^{(n)} = r^{(n)},
\end{equation}
with fundamental relations $\brackets{s_i^{(n)}}^2 = 1$, $\brackets{\pi^{(n)}}^2 = 1$, $\pi^{(n)} s_i^{(n)} = s_{i+1}^{(n)} \pi^{(n)}$, $i \in \quot{\mathbb{Z}}{2 \mathbb{Z}}$.
\end{corl}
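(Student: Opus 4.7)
The plan is to deduce the corollary from the fundamental relations $(s^{(n)})^2 = 1$ and $(r^{(n)})^2 = 1$ established in Theorem \ref{WeilConnection}, by pure word manipulation in the free group on $s^{(n)}$ and $r^{(n)}$. No new properties of the canonical coordinates $Q_i$, $P_i$ or of the Lenard operators $\mathcal{L}_l$ enter at this stage, so the corollary is genuinely a group-theoretic consequence of the previous theorem rather than a new dynamical statement.

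I would carry out three short checks in order. First, the involutivity of the new generators: $\pi^{(n)} = r^{(n)}$ and $s_0^{(n)} = s^{(n)}$ are already known involutions, and for $s_1^{(n)} = r^{(n)} s^{(n)} r^{(n)}$ I would collapse the inner $r^{(n)} \cdot r^{(n)}$ using $(r^{(n)})^2 = 1$, then the resulting $s^{(n)} \cdot s^{(n)}$, then the outer $r^{(n)} \cdot r^{(n)}$. Second, the braid-like relation $\pi^{(n)} s_i^{(n)} = s_{i + 1}^{(n)} \pi^{(n)}$: for $i = 0$ both sides reduce to $r^{(n)} s^{(n)}$, and for $i = 1$ (so $i + 1 = 0$ in $\quot{\mathbb{Z}}{2\mathbb{Z}}$) both sides reduce to $s^{(n)} r^{(n)}$, again using only $(r^{(n)})^2 = 1$. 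Third, the generation statement: because $s_1^{(n)} = \pi^{(n)} s_0^{(n)} \pi^{(n)}$ and $\pi^{(n)} = r^{(n)}$, the group $\angleBrackets{s_0^{(n)}, s_1^{(n)}; \pi^{(n)}}$ contains $s^{(n)}$ and $r^{(n)}$, so it contains $W^{(n)}$; conversely every generator $s_0^{(n)}$, $s_1^{(n)}$, $\pi^{(n)}$ is manifestly a word in $s^{(n)}$, $r^{(n)}$, so $\tilde W^{(n)} = W^{(n)}$ as an abstract group, with $\pi^{(n)}$ realizing the Dynkin diagram automorphism.

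Conceptually, the only point worth flagging is that $\pi^{(n)}$ implements the unique nontrivial automorphism of the $\typeA{1}$ Dynkin diagram, swapping the two simple reflections $s_0^{(n)}$ and $s_1^{(n)}$ and exchanging the two walls $\alpha_n = 0$ and $\alpha_n = \tfrac{1}{2}$ of the affine $W^{(n)}$-action on the parameter line. I do not anticipate any genuine obstacle: given Theorem \ref{WeilConnection}, the corollary is a direct rewriting exercise, and the only thing to verify carefully is that the index shift $i \mapsto i + 1$ in the braid-like relation is read modulo two, which is automatic from the cyclic symmetry $s^{(n)} \leftrightarrow r^{(n)} s^{(n)} r^{(n)}$ induced by conjugation with $r^{(n)}$.
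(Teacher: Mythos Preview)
Your proposal is correct and matches the paper's intent: the paper states this corollary without proof, treating it as an immediate group-theoretic consequence of the relations $(s^{(n)})^2 = 1$ and $(r^{(n)})^2 = 1$ from Theorem~\ref{WeilConnection}, which is exactly the word-manipulation argument you spell out.
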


\begin{exmp}
\phantom{}

For the case $n = 1$, by Theorem 6.1 in \cite{mazzocco2007hamiltonian}, some entries of a matrix $\mathcal{A}^{(1)}$ are
\begin{equation}
    b_1^{(1)}
    = 4 P_1 + \dfrac{1}{8} Q_1^2 + z,
\end{equation}
and, by Theorem \ref{WeilConnection}, we obtain formulas \eqref{BT1genQP}.

For the case $n = 2$, some entries are
\begin{gather}
b_1^{(2)}
    = 16 P_2 - 8 P_1^2 + \dfrac{1}{16} Q_1 Q_2 + \dfrac{1}{32} P_1 Q_2^2 + z + t_1 \brackets{4 P_1 - \dfrac{t_1}{2}}, 
    \quad
    b_3^{(2)}
    = 16 P_1 + \dfrac{1}{32} Q_2^2 - 4 t_1;
    \\ 
    b_2^{(2)}
    = - Q_1 - P_1 Q_2,
\end{gather}
thus, generators of an affine Weyl group $W^{(2)}$ are given as
\begin{gather}
    s^{(2)} \brackets{Q_1}
    = Q_1,
    \quad 
    s^{(2)} \brackets{Q_2}
    = Q_2 - \dfrac{b_2}{P_2},
    \quad 
    s^{(2)} \brackets{P_1}
    = P_1,
    \quad 
    s^{(2)} \brackets{P_2}
    = P_2,
    \quad 
    s^{(2)} \brackets{b_2}
    = - b_2;
    \\
    r^{(2)} \brackets{Q_1}
    = - Q_1 - Q_2 \brackets{2 P_1 + \dfrac{1}{2^8} Q_2^2 - \dfrac{t_1}{2}},
    \quad 
    r^{(2)} \brackets{Q_2}
    = - Q_2,
    \\ 
    r^{(2)} \brackets{P_1}
    = - \brackets{P_1 + \dfrac{1}{2^8} Q_2^2 - \dfrac{t_1}{2}},
    \quad 
    r^{(2)} \brackets{P_2}
    = - P_2 + P_1^2 - \dfrac{1}{2^7} Q_1 Q_2 - \dfrac{1}{2^8} P_1 Q_2^2 - \dfrac{z}{8} - \dfrac{t_1}{16} \brackets{4 P_1 - \dfrac{t_1}{2}},
    \\ 
    r^{(2)} \brackets{b_2}
    = - b_2 - 2.
\end{gather}

The canonicity of new variables and fundamental relations can be verified by a straightforward computation.
\end{exmp}

\begin{rem}
The generator construction process is usually based on an explicit Hamiltonian system. However, in our case, this method is complicated, since we should substitute canonical coordinates derivatives into formulas \eqref{canCoordAsPolsEq} after actions of generators on them iteratively. For instance, in the case $n = 1$, $r^{(1)}$ transforms $Q_1$ to $\tilde{Q}_1 = - \frac{1}{4} w$, and then $r^{(1)} \brackets{P_1} = \tilde{P}_1$ is
\begin{align}
    \tilde{P}_1
    &= \dfrac{1}{2} \brackets{- w^{\prime} - w^2 - \dfrac{z}{2}}
    = - \dfrac{1}{8} Q_1^{\prime} 
    - \dfrac{1}{8} Q_1^2 
    - \dfrac{z}{4}
    \\
    &= - \dfrac{1}{8} \brackets{8 P_1 + \dfrac{1}{4} Q_1^2 + 2 z} 
    - \dfrac{1}{8} Q_1^2 
    - \dfrac{z}{4}
    = - \brackets{P_1 + \dfrac{1}{16} Q_2^2 + \dfrac{z}{2}}.
\end{align}
\end{rem}
\subsection{Variables of a symmetric form for the \texorpdfstring{$\PIIn{n}$}{PIIn} hierarchy}

The symmetric form of the second \Painleve equation \eqref{PII1} is constructed by an extension generator $\pi^{(1)}$ of the extended affine Weyl group $\Tilde{W^{(1)}}$. Let us consider new variables defined as
\begin{equation}
    f_0^{(1)}
    = P_1, 
    \quad 
    f_1^{(1)}
    = \pi^{(1)} \brackets{P_1}
    = - P_1 - \dfrac{1}{16} Q_1^2 - \dfrac{z}{2}.
\end{equation}

Since actions of $\tilde{W}^{(1)}$ generators commute with the derivation, the Hamiltonian system constructed by $\HIIn{1}$ \eqref{HS1} is
\begin{gather}
    \label{symFormPII1}
    \begin{cases}
        {f_0^{(1)}}^{\prime} 
        &= - \dfrac{1}{2} Q_1 f_0^{(1)} 
        + \dfrac{1}{4} \beta_0^{(1)}, \\
        {f_1^{(1)}}^{\prime} 
        &= \dfrac{1}{2} Q_1 f_1^{(1)} 
        + \dfrac{1}{4} \beta_1^{(1)}, \\
        Q_1^{\prime} 
        &= 4 \brackets{f_0^{(1)} - f_1^{(1)}},
    \end{cases}
\end{gather}
with $\beta_0^{(1)} + \beta_1^{(1)} = - 2$.

System \eqref{symFormPII1} is called \textit{the symmetric form for the second \Painleve equation} due to the fact that actions of $\pi^{(1)}$ preserve it.
\begin{table}[H]
    \centering
    \begin{tabular}{c||cc|ccc}
      & 
      $\beta_0^{(1)}$ &  
      $\beta_1^{(1)}$ & 
      $f_0^{(1)}$ & 
      $f_1^{(1)}$ &
      $Q_1$ \\
      \hline
      \hline
      $s_0^{(1)} \vsp$ &  
      $- \beta_0^{(1)}$ & 
      $2 \beta_0^{(1)} + \beta_1^{(1)}$ & 
      $f_0^{(1)}$ &
      $f_1^{(1)} + \dfrac{1}{8} Q_1 \dfrac{\beta_0^{(1)}}{f_0^{(1)}} - \dfrac{1}{16} \dfrac{{\beta_0^{(1)}}^2}{{f_0^{(1)}}^2}$ &
      $Q_1 - \dfrac{\beta_0^{(1)}}{f_0^{(1)}}$ \\
      $s_1^{(1)} \vsp$ & 
      $\beta_0^{(1)} + 2 \beta_1^{(1)}$ & 
      $- \beta_1^{(1)}$ & 
      $f_0^{(1)} - \dfrac{1}{8} Q_1 \dfrac{\beta_1^{(1)}}{f_1^{(1)}} - \dfrac{1}{16} \dfrac{{\beta_1^{(1)}}^2}{{f_1^{(1)}}^2}$ &
      $f_1^{(1)}$ &
      $Q_1 + \dfrac{\beta_1^{(1)}}{f_1^{(1)}}$ \\
      \hline
      $\pi^{(1)} \vsp$ & 
      $\beta_1^{(1)}$ &  
      $\beta_0^{(1)}$ & 
      $f_1^{(1)}$ & 
      $f_0^{(1)}$ &
      $- Q_1$
    \end{tabular} 
\caption{\Backlund transformations of the symmetric form \eqref{symFormPII1} for $\PIIn{1} \LieBrackets{ w  \brackets{z; \alpha_1}}$}
\end{table}

Note that if we consider the standard Poisson structure for variables $\brackets{Q_1, P_1, z}$, \Backlund transformations of the symmetric form are represented in the following way
\begin{align}
    s_i^{(1)} \brackets{\varphi}
    &= \exp{\brackets{\dfrac{\beta_i^{(1)}}{f_i^{(1)}} \ad_{\PoissonBrackets{,}} \brackets{f_i^{(1)}} \brackets{\varphi}}},
    \\
    \ad_{\PoissonBrackets{,}} \brackets{\varphi}
    &= \PoissonBrackets{\varphi, \cdot}.
\end{align}

This construction carries over to the case of an arbitrary member number $n$ of \eqref{nonstatPIIhier}.

\begin{thm} \label{SFthm}
A symmetric form for the $n$-th member of the non-stationary \text{\rm $\PIIn{n}$} hierarchy is defined in the following variables
\begin{align} \label{symVar}
    \begin{aligned}
    f_{n - k + 1,0}^{(n)}
    &= P_k,
    &
    f_{n - k + 1,1}^{(n)}
    &= \pi^{(1)} \brackets{P_k},
    \quad 
    1 \leq k \leq n,
    &
    \pi^{(n)} \brackets{f_{k, i}^{(n)}}
    &= f_{k, i + 1},
    & \quad
    &
    \\
    g_{n - k,0}^{(n)}
    &= Q_k,
    &
    g_{n - k,1}^{(n)}
    &= \pi^{(1)} \brackets{Q_k},
    \quad 
    1 \leq k \leq n - 1,
    &
    \pi^{(n)} \brackets{g_{k, i}^{(n)}}
    &= g_{k, i + 1},
    & \quad
    i &\in \quot{\mathbb{Z}}{2 \mathbb{Z}},
    \\
    \beta_0^{(n)}
    &= b_n,
    &
    \beta_1^{(n)}
    &= \pi^{(1)} \brackets{b_n},
    \quad 
    &
    \pi^{(n)} \brackets{\beta_i^{(n)}}
    &=\beta_{i + 1}^{(n)},
    & \quad
    &
    \end{aligned}
\end{align}
with the additional equation
\begin{equation} \label{symAdEq}
    Q_n^{\prime}
    = 4^n \brackets{f_{n,0}^{(n)} - f_{n,1}^{(n)}},
\end{equation}
and the condition $\beta_0^{(n)} + \beta_1^{(n)} = - 2$.
\end{thm}
\begin{proof}
Actions of an affine Weyl group commute with the derivation. Hence, variables \eqref{symVar} represent $2n$ equations of a Hamiltonian system in a symmetric form. To get from this system the $n$-th member of \eqref{nonstatPIIhier}, we should use additional equation \eqref{symAdEq}. The form of this equation follows from the corresponding Hamiltonian equation, i.e.
\begin{equation}
    Q_n^{\prime} 
    = 4^n \brackets{2 P_1 + \dfrac{1}{4^{2n}} Q_n^2 - \dfrac{t_{n - 1}}{2}},
    \quad
    t_0 = - z.
\end{equation}

The expression in parentheses equals $\brackets{f_{n,0}^{(n)} - f_{n,1}^{(n)}}$.
\end{proof}

\begin{corl}
\Backlund transformations of the symmetric form for the $n$-th member of the non-stationary \text{\rm $\PIIn{n}$} hierarchy can be represented as
\begin{align}
    s_i^{(n)} \brackets{\varphi}
    &= \exp{\brackets{\dfrac{\beta_i^{(n)}}{f_{1,i}^{(n)}} \ad_{\PoissonBrackets{,}} \brackets{f_{1,i}^{(n)}} \brackets{\varphi}}},
    \quad
    i \in \quot{\mathbb{Z}}{2 \mathbb{Z}},
\end{align}
where $\ad_{\PoissonBrackets{,}} \brackets{\varphi} = \PoissonBrackets{\varphi, \cdot}$.
\end{corl}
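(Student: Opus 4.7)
The plan is to verify the formula by direct series expansion, exploiting the near-nilpotency of the derivation $A_i \defEq \ad_{\PoissonBrackets{,}} \brackets{f_{1,i}^{(n)}}$ on the dynamical variables. Since $\PoissonBrackets{f_{1,i}^{(n)}, f_{1,i}^{(n)}} = 0$, the scalar $\beta_i^{(n)}/f_{1,i}^{(n)}$ Poisson-commutes with $f_{1,i}^{(n)}$ and hence commutes as a multiplier with $A_i$, so the exponential, applied to a variable $\varphi$, is literally the formal series
\begin{equation}
    \exp{\brackets{ \dfrac{\beta_i^{(n)}}{f_{1,i}^{(n)}} A_i }} \brackets{\varphi}
    = \sum_{k \geq 0} \dfrac{1}{k!} \brackets{ \dfrac{\beta_i^{(n)}}{f_{1,i}^{(n)}} }^{k} A_i^{k} \brackets{\varphi}.
\end{equation}

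For $i = 0$, where $f_{1,0}^{(n)} = P_n$ and $\beta_0^{(n)} = b_n$, I would use the canonical relations $\PoissonBrackets{P_n, Q_k} = - \delta_{nk}$ and $\PoissonBrackets{P_n, P_k} = 0$ to observe that $A_0$ acts on any polynomial in the canonical coordinates as the differential operator $- \partial / \partial Q_n$. Because the coefficient $b_n/P_n$ is independent of $Q_n$, Taylor's theorem identifies the above series with the substitution $Q_n \mapsto Q_n - b_n / P_n$, which by Theorem \ref{WeilConnection} is precisely the rule defining $s^{(n)} = s_0^{(n)}$. Since every symmetric variable $f_{k,0}^{(n)}$, $g_{k,0}^{(n)}$ is a polynomial expression in $\brackets{Q_i, P_i}$, the identity propagates automatically to all of them.

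For $i = 1$, I would exploit the relation $s_1^{(n)} = \pi^{(n)} s_0^{(n)} \pi^{(n)}$ from Theorem \ref{SFthm}. Because $\pi^{(n)} = r^{(n)}$ preserves the Poisson bracket (as asserted by the canonicity statement of Theorem \ref{WeilConnection}), conjugation by $\pi^{(n)}$ intertwines the adjoint action, $\pi^{(n)} \circ \ad_{\PoissonBrackets{,}} \brackets{X} \circ \pi^{(n)} = \ad_{\PoissonBrackets{,}} \brackets{\pi^{(n)} \brackets{X}}$, and in particular conjugates $A_0$ into $A_1$. Combined with $\pi^{(n)} \brackets{\beta_0^{(n)}} = \beta_1^{(n)}$, this turns the $i = 0$ identity into the desired one for $i = 1$.

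The main technical obstacle I anticipate is the rigorous justification of the Poisson-automorphism property of $\pi^{(n)}$ in full: Theorem \ref{WeilConnection} guarantees canonicity among the $Q_i, P_i$, but the intertwining step also requires compatibility with the parameter $b_n$ entering $\beta_1^{(n)} = \pi^{(n)} \brackets{b_n}$. I would resolve this either by extracting the relevant brackets from the list at the end of the proof of Theorem \ref{WeilConnection}, or by verifying $\pi^{(n)} \PoissonBrackets{\varphi, \psi} = \PoissonBrackets{\pi^{(n)} \brackets{\varphi}, \pi^{(n)} \brackets{\psi}}$ directly on generators. Once this is in hand, the corollary reduces to the $i = 0$ computation.
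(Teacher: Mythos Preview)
Your proposal is correct and is precisely a detailed unpacking of the paper's one-line proof, which merely states that the corollary ``follows from theorems \ref{WeilConnection} and \ref{SFthm}.'' You make explicit the Taylor-shift mechanism for $i=0$ and the conjugation by $\pi^{(n)}$ for $i=1$, both of which are the implicit content of those two theorems; the technical caveat you flag about the parameter $b_n$ is harmless since $b_n$ is central in the Poisson algebra.
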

\begin{proof}
This fact follows from Theorems \ref{WeilConnection} and \ref{SFthm}.
\end{proof}

\begin{exmp}
For the case $n = 2$, variables of a symmetric form are
\begin{align}
    f_{1,0}^{(2)}
    &= P_2, 
    & 
    f_{1,1}^{(2)}
    &= \pi^{(2)} \brackets{P_2} 
    = P_1^2 - P_2 - \dfrac{1}{2^7} Q_1 Q_2 - \dfrac{1}{2^8} P_1 Q_2^2 - \dfrac{1}{8} z - \dfrac{1}{2} t_1 P_1 + \dfrac{1}{16} t_1^2
    ; 
    \\
    f_{2,0}^{(2)}
    &= P_1, 
    & 
    f_{2,1}^{(2)}
    &= \pi^{(2)} \brackets{P_1} 
    = - \brackets{P_1 + \dfrac{1}{2^8} Q_2^2 - \dfrac{t_1}{2}}
    ; 
    \label{generatorsSymFormPII2}
    \\
    g_{1,0}^{(2)}
    &= Q_1, 
    & 
    g_{1,1}^{(2)}
    &= \pi \brackets{Q_1} 
    = - 2 P_1 Q_2 - Q_1 - \dfrac{1}{2^8} Q_2^3 + \dfrac{1}{2}t_1 Q_2
    .
\end{align}

Therefore, the symmetric form for $\PIIn{2}$ is given by the following system
\begin{equation} \label{symFormPII2}
    \begin{cases}
        {f_{1,0}^{(2)}}^{\prime}
        &= - \dfrac{1}{8} Q_2 f_{1,0}^{(2)} 
        + \dfrac{1}{16} \beta_0^{(2)}, 
        \vspace{0.3cm}
        \\
        {f_{1,1}^{(2)}}^{\prime}
        &= \dfrac{1}{8} Q_2 f_{1,1}^{(2)} 
        + \dfrac{1}{16} \beta_1^{(2)}, 
        \vspace{0.3cm}
        \\
        {f_{2,0}^{(2)}}^{\prime}
        &= \dfrac{1}{8} g_{1,0}^{(2)}, 
        \vspace{0.3cm}
        \\
        {f_{2,1}^{(2)}}^{\prime}
        &= \dfrac{1}{8} g_{1,1}^{(2)}, 
        \vspace{0.3cm}
        \\
        {g_{1,0}^{(2)}}^{\prime} 
        &= - 48 {f_{2,0}^{(2)}}^2 
        + 32 f_{1,0}^{(2)}
        + 16 f_{2,0}^{(2)} t_1 
        - t_1^2 
        + 2 z , 
        \vspace{0.3cm}
        \\
        {g_{1,1}^{(2)}}^{\prime} 
        &= - 48 {f_{2,1}^{(2)}}^2 
        + 32 f_{1,1}^{(2)} 
        + 16 f_{2,1}^{(2)} t_1 
        - t_1^2 
        + 2z, 
        \vspace{0.3cm}
        \\
        Q_2^{\prime} 
        &= 16 \brackets{f_{2,0}^{(2)} - f_{2,1}^{(2)}};
    \end{cases}
\end{equation}
with $\beta_0^{(2)} + \beta_1^{(2)} = -2$.
\end{exmp}

\section{Yablonskii-Vorobiev-type polynomials} \label{YVsection}

The Yablonskii-Vorobiev polynomials concern with rational solutions of the second \Painleve equation. First of all, we will find its rational solutions by actions of \Backlund transformations compositions $T_m^{(1)} = \brackets{r^{(1)}s^{(1)}}^m$ and $S_k^{(1)} = \brackets{r^{(1)}s^{(1)}}^k s^{(1)}$ on the unique trivial solution $w^{(1)} \brackets{z; 0} = 0$ which corresponds to $\brackets{Q_1, P_1; b_1} = \brackets{0, - \frac{z}{4}; -1}$. One can see that if $m = - b_1 - k$ holds, rational solutions obtained by actions of $T_m^{(1)}$ and $S_k^{(1)}$ on $\brackets{Q_1, P_1; b_1} = \brackets{0, - \frac{z}{4}; -1}$ are the same. Therefore, use one of the operators $T_m^{(1)}$, $S_k^{(1)}$ is sufficient to generate unique rational solutions of $\PIIn{1} \LieBrackets{w\brackets{z;\alpha_1}}$.
\begin{table}[H]
    \centering
    \begin{tabular}{c||cccccc}
      & $\cdots$ & $T_{-1}^{(1)}$ & $T_{0}^{(1)}$ & $T_{1}^{(1)}$ & $T_{2}^{(1)}$ & $\cdots$ \\
      \hline
      \hline
      $Q_1 \vsp$ 
      & $\cdots$ & $\dfrac{4}{z}$ & $0$ & $- \dfrac{4}{z}$ & $- 8 \dfrac{z^3 - 2}{z \brackets{z^3 + 4}}$ & $\cdots$ \\
      $P_1 \vsp$ 
      & $\cdots$ & $- \dfrac{z^3 + 4}{4 z^2}$ & $- \dfrac{z}{4}$ & $- \dfrac{z}{4}$ & $- \dfrac{z^3 + 4}{4 z^2}$ & $\cdots$ \\
      \hline
      $b_1$ & $\cdots$ & $-3$ & $-1$ & $1$ & $3$ & $\cdots$ \\
    \end{tabular} 
\caption{Actions of $T_m^{(1)}$ on $\brackets{Q_1, P_1; b_1} = \brackets{0, - \dfrac{z}{4}; -1}$}
\label{tab:ratSolPII1}
\end{table}

Polynomials appeared in enumerators and denominators satisfy the difference-differential equation obtained in \cite{yablonskii1959rational, vorobiev1965rational}
\begin{align}
    u_{m + 1}^{\brackets{1}} u_{m - 1}^{\brackets{1}}
    &= z \brackets{u_m^{\brackets{1}}}^2 - 4  \brackets{u_m^{\brackets{1}}
    \brackets{u_m^{\brackets{1}}}^{\prime\prime}
    - \brackets{\brackets{u_m^{\brackets{1}}}^{\prime}}^2},
\end{align}  
with $u_0^{\brackets{1}}  = 1$, $u_1^{\brackets{1}} = z$, and $m \coloneqq \alpha_1 \in \mathbb{Z}$.

Some of them are given as
\begin{gather}
    u_2^{(1)} 
    = 4+z^3,
    \quad
    u_3^{(1)} 
    = -80+20 z^3+z^6,
    \quad
    u_4^{(1)} 
    = z (11200+60 z^6+z^9).
\end{gather}

Hence, rational solutions of \eqref{PII1} are expressed via Yablonskii-Vorobiev polynomials:
\begin{gather}
    w_1^{(1)} (z) 
    = \dfrac{d}{dz} \ln \brackets{\dfrac{u_{0}^{(1)}}{u_1^{(1)}}}
    = -\dfrac{1}{z},
    \quad 
    \alpha_1 = 1;
    \quad
    w_2^{(1)} (z) 
    = \dfrac{d}{dz} \ln \brackets{\dfrac{u_{1}^{(1)}}{u_2^{(1)}}}
    = \dfrac{4-2 z^3}{z^4+4 z},
    \quad 
    \alpha_1 = 2;
    \\
    w_3^{(1)} (z) 
    = \dfrac{d}{dz} \ln \brackets{\dfrac{u_{2}^{(1)}}{u_3^{(1)}}}
    = -\dfrac{3 z^2 \left(z^6+8 z^3+160\right)}{z^9+24 z^6-320},
    \quad 
    \alpha_1 = 3.
\end{gather}

We can see that these solutions coincide with rational solutions in Table \ref{tab:ratSolPII1}, i.e. $w_m^{(1)} \brackets{z} = \frac{1}{4} T_m^{(1)} \brackets{Q_1}$.

In this section we find a generalization of the Yablonskii-Volobiev polynomials and define them for an arbitrary member of the non-stationary second \Painleve hierarchy \eqref{nonstatPIIhier}.

\begin{thm} \label{YVthm}
    A difference-differential recurrent equation for Yablonskii-Vorobiev-type polynomials associated with the $n$-th member of the non-stationary \text{\rm $\PIIn{n}$} hierarchy is the following equation
    \begin{gather} \label{nonstatYB}
        u_{m + 1}^{\brackets{n}} u_{m - 1}^{\brackets{n}}
        = - 2 \brackets{u_m^{\brackets{n}}}^2 
        \sum_{l = 0}^{n} t_l 
        \mathcal{L}_l \LieBrackets{2 \dfrac{d^2}{d z^2} \brackets{\ln{u_m^{\brackets{n}}}}},
        \quad 
        t_0 = -z,
        \,\,
        t_n = 1,
    \end{gather}
    with $u_0^{\brackets{n}} = 1$ and $u_1^{\brackets{n}} = z$.
\end{thm}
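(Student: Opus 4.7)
The plan is to derive \eqref{nonstatYB} by combining the Bäcklund transformations of Theorem~\ref{BTthm} with the logarithmic representation $w_m^{(n)} \brackets{z} = \tfrac{d}{dz}\ln \brackets{u_{m-1}^{(n)}/u_m^{(n)}}$ of the rational solutions set up earlier in this section. First I would compute the two shift relations obtained by acting on $w_m^{(n)}$ with the compositions $r^{(n)} s^{(n)}$ (which sends $\alpha_n \mapsto \alpha_n - 1$, hence $w_m^{(n)} \mapsto w_{m-1}^{(n)}$) and $s^{(n)} r^{(n)}$ (which sends $\alpha_n \mapsto \alpha_n + 1$, hence $w_m^{(n)} \mapsto w_{m+1}^{(n)}$), using the explicit form of $s^{(n)}$, the sign-flipping action of $r^{(n)}$, and the discrete symmetry $w_{-k}^{(n)} = - w_k^{(n)}$. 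This yields the sum formulas
\begin{equation}
    w_m^{(n)} + w_{m \pm 1}^{(n)} = \dfrac{2 m \pm 1}{2\sum_{l=0}^n t_l \mathcal{L}_l \LieBrackets{\mp \brackets{w_m^{(n)}}^\prime - \brackets{w_m^{(n)}}^2}}.
\end{equation}

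Next, substituting the logarithmic form into the identity $w_m^{(n)} - w_{m+1}^{(n)} = \brackets{\ln \brackets{u_{m+1}^{(n)} u_{m-1}^{(n)} / \brackets{u_m^{(n)}}^2}}^\prime$ and combining with the $(+)$-shift sum formula above, I observe that both sides of the alleged identity $u_{m+1}^{(n)} u_{m-1}^{(n)} / \brackets{u_m^{(n)}}^2 = -2 \sum_l t_l \mathcal{L}_l\LieBrackets{- \brackets{w_m^{(n)}}^\prime - \brackets{w_m^{(n)}}^2}$ have the same logarithmic derivative, the right-hand side satisfying a Riccati-type relation inherited from $-w_m^{(n)}$ being a solution of $\PIIn{n}$ at parameter $-m$. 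Fixing the overall constant by matching leading behavior as $z\to\infty$ together with the initial data $u_0^{(n)} = 1$, $u_1^{(n)} = z$ then produces the preliminary bilinear equation
\begin{equation}
    u_{m+1}^{(n)} u_{m-1}^{(n)} = -2 \brackets{u_m^{(n)}}^2 \sum_{l=0}^n t_l \mathcal{L}_l \LieBrackets{- \brackets{w_m^{(n)}}^\prime - \brackets{w_m^{(n)}}^2},
\end{equation}
which matches the target \eqref{nonstatYB} except that the Lenard operators still act on $-\brackets{w_m^{(n)}}^\prime - \brackets{w_m^{(n)}}^2$ rather than on $2\brackets{\ln u_m^{(n)}}^{\prime\prime}$.

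The crucial remaining step — and the one I expect to be the main obstacle — is to establish the pointwise identity $-\brackets{w_m^{(n)}}^\prime - \brackets{w_m^{(n)}}^2 = 2 \brackets{\ln u_m^{(n)}}^{\prime\prime}$. Expanding with $w_m^{(n)} = \brackets{\ln u_{m-1}^{(n)}}^\prime - \brackets{\ln u_m^{(n)}}^\prime$ reduces this to the Hirota-type bilinear relation
\begin{equation}
    \brackets{u_{m-1}^{(n)}}^{\prime\prime} u_m^{(n)} - 2 \brackets{u_{m-1}^{(n)}}^\prime \brackets{u_m^{(n)}}^\prime + u_{m-1}^{(n)} \brackets{u_m^{(n)}}^{\prime\prime} = 0
\end{equation}
between consecutive Yablonskii-Vorobiev-type polynomials. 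I would prove this auxiliary identity by strong induction on $m$: the base cases $m = 0, 1$ follow immediately from $u_0^{(n)} = 1$ and $u_1^{(n)} = z$, while the inductive step uses the provisional recurrence to express $u_{m+1}^{(n)}$ through $u_m^{(n)}, u_{m-1}^{(n)}$ and then invokes the Lenard recursion for $\mathcal{L}_l$ together with the $s^{(n)}$-invariance of $\sum_l t_l \mathcal{L}_l \LieBrackets{w^\prime - w^2}$ (see Lemma~\ref{sInvariance} of Section~\ref{AWsection}) to propagate the bilinear relation from level $m$ to level $m+1$. Substituting the established identity into the preliminary bilinear equation then produces exactly \eqref{nonstatYB}, after which I would close by verifying that for $n = 1$ the recurrence reduces to the classical Yablonskii-Vorobiev relation $u_{m+1}^{(1)} u_{m-1}^{(1)} = z \brackets{u_m^{(1)}}^2 - 4\brackets{u_m^{(1)} \brackets{u_m^{(1)}}^{\prime\prime} - \brackets{\brackets{u_m^{(1)}}^\prime}^2}$ and by explicitly computing the first few $t_1$-dependent polynomials for $n = 2$.
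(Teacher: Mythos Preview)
Your overall strategy coincides with the paper's: use the B\"acklund step $-m\mapsto m+1$ to get $w_m+w_{m+1}=(2m+1)/G$ with $G=2\sum_l t_l\mathcal{L}_l[-w_m'-w_m^2]$, then replace $-w_m'-w_m^2$ by $2(\ln u_m^{(n)})''$ via the Hirota relation $D_z^2\,u_{m-1}^{(n)}\cdot u_m^{(n)}=0$. However, the middle step, where you pass from the sum formula to the ``preliminary bilinear equation'' $u_{m+1}^{(n)}u_{m-1}^{(n)}=-(u_m^{(n)})^2 G$ by matching logarithmic derivatives, contains a genuine gap.

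Concretely, the $\PIIn{n}$ equation for $-w_m$ at parameter $-m$ gives $G'=2w_m G-2m$, so $(\ln G)'=2w_m-2m/G$. On the other hand, $(\ln(u_{m+1}^{(n)}u_{m-1}^{(n)}/(u_m^{(n)})^2))'=w_m-w_{m+1}=2w_m-(2m+1)/G$ by your own sum formula. The two logarithmic derivatives differ by exactly $-1/G$, so they are \emph{not} equal, and no choice of multiplicative constant fixes this; integrating the discrepancy produces a factor $(u_{m+1}^{(n)}/u_{m-1}^{(n)})^{1/(2m+1)}$, which is not constant. The Riccati relation you invoke is not enough. What closes this gap in the paper is a \emph{second} bilinear identity you do not mention, namely $D_z\,u_{m+1}^{(n)}\cdot u_{m-1}^{(n)}=(2m+1)(u_m^{(n)})^2$. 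It follows from the $s^{(n)}$-invariance $-w_m'-w_m^2=w_{m+1}'-w_{m+1}^2$ (this is the content behind Lemma~\ref{sInvariance}), which rewrites as $(w_{m+1}+w_m)'=(w_{m+1}+w_m)(w_{m+1}-w_m)$, integrates to $(\ln(u_{m-1}^{(n)}/u_{m+1}^{(n)}))'=(2m+1)(u_m^{(n)})^2/(u_{m-1}^{(n)}u_{m+1}^{(n)})$, and then combines \emph{algebraically} with $w_m+w_{m+1}=(2m+1)/G$ to give the preliminary bilinear equation directly.

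A second, smaller issue: your proposed inductive proof of $D_z^2\,u_{m-1}^{(n)}\cdot u_m^{(n)}=0$ uses the ``provisional recurrence'' to express $u_{m+1}^{(n)}$, but that recurrence is precisely what you have not yet established. The paper avoids this circularity by deriving the $D_z^2$ relation directly from the bilinear reformulation of $\PIIn{n}$ (writing $w=v/u$ with $uu''=(u')^2-v^2$ and taking $u=u_{m-1}^{(n)}u_m^{(n)}$, $v=(u_{m-1}^{(n)})'u_m^{(n)}-u_{m-1}^{(n)}(u_m^{(n)})'$).
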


\begin{proof}
The proof of Theorem \ref{YVthm} is based on Lemmas \ref{solViaYV} and \ref{propYV}, which we will establish after the theorem's proof.
    \begin{lem} \label{solViaYV}
    Rational solutions of \eqref{nonstatPIIhier} for an integer parameter $\alpha_n \coloneqq m \in \mathbb{Z}$ are unique and represented in the given forms
    \begin{equation} \label{solViaYVform}
        w_m^{(n)} \brackets{z} = \dfrac{d}{dz} \ln \brackets{\dfrac{u_{m-1}^{(n)}}{u_m^{(n)}}},
        \quad 
        w_{-m}^{(n)} \brackets{z} = - w_m^{(n)} \brackets{z},
    \end{equation}
    where $u_k^{(n)}$ are the Yablonskii-Vorobiev-type polynomials.
    \end{lem}

    \begin{lem} \label{propYV}
    Yablonskii-Vorobiev-type polynomials satisfy the following recurrent relations
    \begin{itemize}
        \item 
        $D_z u_{m + 1}^{(n)} \cdot u_{m - 1}^{(n)} = (2 m + 1 ) \brackets{u_m^{(n)}}^2$,
        
        \item
        $D_z^2 u_{m + 1}^{(n)} \cdot u_{m}^{(n)} = 0$,
    \end{itemize}
    where $D_z$ is the Hirota operator, i.e.
    $
    D_z f \brackets{z} \cdot g \brackets{z} 
        = \left. \brackets{\dfrac{d}{dz_1} - \dfrac{d}{dz_2}}\brackets{f \brackets{z_1} g \brackets{z_2}} \right|_{z_1=z_2=z}
    $.
    \end{lem}
    
    Let us consider \Backlund transformation \eqref{nonstatBT} with $\varepsilon = 1$ and transform a parameter $-m$ to $m + 1$:
    \begin{equation} \label{proof2}
        w_{m+1}^{(n)}
        = - w_m^{(n)}
        +
        \dfrac{2 m + 1}{2 \dsum_{l = 0}^{n} t_l \mathcal{L}_l \LieBrackets{- {w_m^{(n)}}^{\prime} - {w_m^{(n)}}^2}}.
    \end{equation}
    
    Then, by Lemma \ref{solViaYV}, we express $w_{m+1}^{(n)} \brackets{z}$ and $w_m^{(n)} \brackets{z}$ in terms of the Yablonskii-Vorobiev-type polynomials, using properties in Lemma \ref{propYV}:
    \begin{align} \label{proof31}
        w_{m+1}^{(n)} + w_m^{(n)}
        &= 
        \left(
        \dfrac{\brackets{u_{m}^{(n)}}^{\prime}}{u_{m}^{(n)}}
        -
        \dfrac{\brackets{u_{m+1}^{(n)}}^{\prime}}{u_{m+1}^{(n)}}
        \right)
        +
        \left(
        \dfrac{\brackets{u_{m - 1}^{(n)}}^{\prime}}{u_{m - 1}^{(n)}}
        -
        \dfrac{\brackets{u_{m}^{(n)}}^{\prime}}{u_{m}^{(n)}}
        \right)
        =
        - \dfrac{D_z u_{m + 1}^{(n)} \cdot u_{m - 1}^{(n)}}{u_{m + 1}^{(n)} u_{m - 1}^{(n)}},
    \end{align}
    \begin{align} 
        \label{proof32}
        - {w_m^{(n)}}^{\prime} - {w_m^{(n)}}^2
        &= 
        - 2 \brackets{\dfrac{\brackets{u_{m}^{(n)}}^{\prime}}{u_{m}^{(n)}}}^2
        + 2 \dfrac{\brackets{u_{m}^{(n)}}^{\prime\prime}}{u_{m}^{(n)}}
        - \dfrac{\brackets{u_{m - 1}^{(n)}}^{\prime\prime}}{u_{m - 1}^{(n)}}
        - \dfrac{\brackets{u_{m}^{(n)}}^{\prime\prime}}{u_{m}^{(n)}}
        + 2 \dfrac{\brackets{u_{m - 1}^{(n)}}^{\prime}\brackets{u_{m}^{(n)}}^{\prime}}{u_{m - 1}^{(n)}u_{m}^{(n)}}
        \\
        &=
        2 \dfrac{d^2}{dz^2} \ln \brackets{u_m^{(n)}} 
        - \dfrac{D_z^2 u_{m}^{(n)} \cdot u_{m - 1}^{(n)}}{u_{m}^{(n)} u_{m - 1}^{(n)}}
        = 2 \dfrac{d^2}{dz^2} \ln \brackets{u_m^{(n)}}.
    \end{align}
    
    Substituting \eqref{proof31}, \eqref{proof32} into \eqref{proof2}, we have the statement of our theorem.
\end{proof}

\begin{corl}
Rational solutions $u \brackets{x, t_{2n + 1}}$ of the $n$-th member of the Korteveg-de Vriez hierarchy are given by the Yablonskii-Vorobiev-type polynomials as
\begin{equation}
    u \brackets{x, t_{2n + 1}}
    = 2 \dfrac{\partial^2}{\partial x^2} \PoissonBrackets{\ln \LieBrackets{u_m^{(n)} \brackets{\dfrac{x}{\brackets{\brackets{2 n + 1} t_{2 n + 1}}^{\frac{1}{2n + 1}}}}}}.
\end{equation}
\end{corl}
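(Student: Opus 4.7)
The plan is to chain together three maps whose composition sends rational $\PIIn{n}$ solutions to rational KdV solutions, namely: the bijection from Lemma \ref{solViaYV} expressing $w_m^{(n)}$ through Yablonskii-Vorobiev-type polynomials, the inverse of the Virasoro-free time-reduction $X_n$ recorded in the table in the introduction (sending a $\PIIn{n}$ solution $w(z)$ to an mKdV solution $v(x,t_{2n+1})$), and finally the Miura transformation $u = v_x - v^2$ that converts mKdV solutions to KdV solutions. The output must collapse to the claimed closed form in $\ln u_m^{(n)}$, and the key technical input for that collapse is the identity already extracted inside the proof of Theorem \ref{YVthm}.

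First I would set $\lambda \defEq \brackets{(2n+1) t_{2n+1}}^{1/(2n+1)}$ and write out the inverse of the reduction, i.e. $v(x,t_{2n+1}) = w(z)/\lambda$ with $z = x/\lambda$. A direct chain-rule computation gives $v_x = \lambda^{-2} w'(z)$ and $v^2 = \lambda^{-2} w^2(z)$, so
\begin{equation}
u(x, t_{2n+1}) \;=\; v_x - v^2 \;=\; \frac{1}{\lambda^2}\brackets{w'(z) - w^2(z)}.
\end{equation}
Now I specialize $w$ to the rational solution $w_m^{(n)}(z) = \tfrac{d}{dz}\ln\brackets{u_{m-1}^{(n)}/u_m^{(n)}}$ provided by Lemma \ref{solViaYV}. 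Because the $\PIIn{n}$ hierarchy has the discrete symmetry $w_{-m}^{(n)} = - w_m^{(n)}$, replacing $m$ by $-m$ interchanges the roles of $+w'-w^2$ and $-w'-w^2$ without leaving the family of rational solutions, so I may apply formula \eqref{proof32} from the proof of Theorem \ref{YVthm} in the form
\begin{equation}
w'(z) - w^2(z) \;=\; 2 \frac{d^2}{dz^2} \ln u_m^{(n)}(z).
\end{equation}

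Combining the two displays and using the chain rule $\tfrac{d^2}{dz^2} = \lambda^2 \tfrac{\partial^2}{\partial x^2}$, the explicit $\lambda^{-2}$ factor is cancelled exactly, giving
\begin{equation}
u(x,t_{2n+1}) \;=\; 2 \frac{\partial^2}{\partial x^2} \ln u_m^{(n)}\!\brackets{x/\lambda},
\end{equation}
which is the statement after substituting back the value of $\lambda$.

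The routine parts are the chain-rule bookkeeping and the substitution of the Yablonskii-Vorobiev-type formula, both of which are essentially one-line computations given the results of Sections 3--4. The main subtlety to treat carefully is the sign convention issue: equation \eqref{proof32} is written for $-w'-w^2$, so I must justify the switch to $+w'-w^2$ by invoking the discrete symmetry $w_{-m}^{(n)} = -w_m^{(n)}$ (equivalently, that the Miura map is available with both sign choices, since $u = -\tilde v_x - \tilde v^2$ with $\tilde v = -v$ also produces a KdV solution). Once that sign issue is settled, the formula follows and the corollary is proved.
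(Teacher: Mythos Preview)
Your proof is correct and matches the paper's intended reasoning. The paper does not supply a separate proof for this corollary; it is stated as an immediate consequence of Theorem \ref{YVthm} together with the time-reduction and Miura transformation already recalled in the introduction. Your write-up simply makes those implicit steps explicit: invert the reduction $z=x/\lambda$, apply Miura, and invoke the identity $-w_m' - w_m^2 = 2\,\tfrac{d^2}{dz^2}\ln u_m^{(n)}$ from \eqref{proof32}, handling the sign via the discrete symmetry $w_{-m}^{(n)} = -w_m^{(n)}$ (equivalently, one could note that the companion computation with the roles of $u_{m-1}^{(n)}$ and $u_m^{(n)}$ swapped gives $w_m' - w_m^2 = 2\,\tfrac{d^2}{dz^2}\ln u_{m-1}^{(n)}$, which lands on the same family of polynomials). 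There is no substantive difference in approach.
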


\begin{proof}[Proof of Lemma \ref{solViaYV}]
According to \cite{gromak2008painleve}, it is possible to assign a meromorphic function to each member of the stationary second \Painleve hierarchy, which we will do in the non-stationary case. Let us determine a meromorphic function $W \brackets{z}$ such that $W^{\prime} \brackets{z} = w^2$ and an entire function $u \brackets{z}$ such that $u^{\prime} = - W u$. Define a function $v \brackets{z}$ such that $v \brackets{z} \coloneqq u \brackets{z} w \brackets{z}$. Then, on the one hand, 
\begin{equation}
    u^{\prime\prime} 
    = - W^{\prime} u + W^2 u.
\end{equation}

On the other hand,
\begin{align}
    \dfrac{u^{\prime\prime}}{u}
    &= \brackets{\dfrac{u^{\prime}}{u}}^{\prime}
    + \brackets{\dfrac{u^{\prime}}{u}}^2
    = - W^{\prime} 
    + \brackets{\dfrac{u^{\prime}}{u}}^2
    \\
    &= - w^2 
    + \brackets{\dfrac{u^{\prime}}{u}}^2
    = - \brackets{\dfrac{v}{u}}^2 
    + \brackets{\dfrac{u^{\prime}}{u}}^2,
    \\
    u u^{\prime\prime}
    &= \brackets{u^{\prime}}^2 - v^2.
\end{align}

Therefore, we have the following representation of \eqref{nonstatPIIhier}
\begin{equation} \label{bilinSystem}
    \begin{cases}
    u u^{\prime\prime}
    = \brackets{u^{\prime}}^2 - v^2,
    \vspace{0.2cm}
    \\
    \brackets{\dfrac{d}{dz} + 2 v u^{-1}}
    \dsum_{l = 0}^n t_l \mathcal{L}_l \LieBrackets{v^{\prime} u^{-1} - v u^{\prime} u^{-2}}
    = \alpha_n.
    \end{cases}
\end{equation}

If the initial condition $u \brackets{z_0} \neq 0$ holds, then system \eqref{bilinSystem} with initial conditions
\begin{align}
    u \brackets{z_0}
    &= u_0,
    &
    u^{\prime} \brackets{z_0}
    &= u_0^{\prime}, 
    &&&&
    \\
    v \brackets{z_0}
    &= v_0,
    &
    v^{\prime} \brackets{z_0}
    &= v_0^{\prime},
    &
    \dots
    &&
    v^{\brackets{2n - 1}} \brackets{z_0}
    &= v_0^{\brackets{2n - 1}},
\end{align}
has a unique non-singular solution. Obviously, system \eqref{bilinSystem} has a solution 
\begin{equation}
    \brackets{v, u; \alpha_n}
    = \brackets{0, e^{a z + b}; 0},
    \quad 
    a, b \in \mathbb{C}.
\end{equation}

As in \cite{gromak2008painleve}, any meromorphic solution of \eqref{nonstatPIIhier} can be represented in the form $w \brackets{z} = v \brackets{z}/u\brackets{z}$, where functions $v \brackets{z}$ and $u \brackets{z}$ are polynomials defined by system \eqref{bilinSystem} up to the factor $e^{az + b}$. If we set $p = \deg v \brackets{z}$ and $q = \deg u \brackets{z}$, then from the first equation of \eqref{bilinSystem} follows the degrees relation 
\begin{equation}
    2 p = 2 \brackets{q - 1},
    \quad 
    \Rightarrow 
    \quad 
    p = q - 1,
\end{equation}
i.e. the degree of the polynomial in the denominator is more by one of the polynomial degree in the nominator.

Hence, the Laurent expansion of the rational solution of \eqref{nonstatPIIhier} at a pole $z = z_0$ is
\begin{equation}
    w \brackets{z}
    = \dfrac{a_{-1}}{z - z_0} 
    + \varphi \brackets{z},
\end{equation}
where $\varphi \brackets{z}$ is a holomorphic function in a neighborhood of $z_0$.

On the other hand, the Laurent expansion of the rational solution of \eqref{nonstatPIIhier} at $z = \infty$ is given as
\begin{equation}
    w \brackets{z}
    = \dfrac{\alpha_n}{z} 
    + \psi \brackets{z},
\end{equation}
where $\psi \brackets{z}$ is a holomorphic function in a neighborhood of infinity.

So, we would like to apply the total sum of residues theorem to establish that $\alpha_n \in \mathbb{Z}$. 

\begin{prop}\label{CnCoeff}
Values of the parameter $a_{-1}$ are roots of the following equation
\begin{equation} \label{CnEq}
    c_n \brackets{a_{-1} - n} = 0,
\end{equation}
and equal $0$, $\pm 1$, $\dots$, $\pm n$, where 
\begin{equation} \label{cnExpr}
    c_n
    = \brackets{- 1}^n
    \binom{2 n + 1}{n + 1} 
    \prod_{k = 0}^{n - 1}
    \brackets{a_{-1} \brackets{a_{-1} + 1} - k \brackets{k + 1}},
    \quad
    n \geq 1,
\end{equation}
and $n$ is a number of the member \eqref{nonstatPIIhier}.
\end{prop}
\begin{proof}[Proof of Proposition \ref{CnCoeff}]
Without loss of generality, we set $z_0 = 0$. To find $a_{-1}$, we should substitute the Laurent expansion of $w \brackets{z}$ at $z = 0$ in \eqref{nonstatPIIhier} and collect coefficients at corresponding terms. Note that $a_{-1}$ is fully defined by the term of degree $- \brackets{2n + 1}$. Hence, we will look at the leading term in the Laurent expansion, i.e. 
\begin{equation}
    w\brackets{z}
    = \dfrac{a_{-1}}{z}.
\end{equation}

Therefore, the equation for the parameter $a_{-1}$ is
\begin{equation} \label{CnPrepEq}
    \PoissonBrackets{\brackets{\dfrac{d}{dz} + 2 \dfrac{a_{-1}}{z}} \mathcal{L}_n \LieBrackets{- \dfrac{a_{-1} \brackets{a_{-1} + 1}}{z^2}}} \cdot z^{2n + 1} = 0.
\end{equation}

To find an exact form of $\mathcal{L}_n \LieBrackets{- \dfrac{a_{-1} \brackets{a_{-1} + 1}}{z^2}}$, we suppose that this operator equals $\dfrac{c_n}{z^{2n}}$. Then, by the Lenard recursive relation, we have
\begin{align}
    \dfrac{d}{dz} 
    \mathcal{L}_{n + 1} \LieBrackets{- \dfrac{a_{-1} \brackets{a_{-1} + 1}}{z^2}}
    &=
    \dfrac{d}{dz} \dfrac{c_{n + 1}}{z^{2n + 2}}
    = - 2 \brackets{n + 1} \dfrac{c_{n + 1}}{z^{2n + 3}}
    \\
    &= \brackets{
    \dfrac{d^3}{dz^3} - 4 \dfrac{a_{-1} \brackets{a_{-1} + 1}}{z^2} \dfrac{d}{dz}
    + 4 \dfrac{a_{-1} \brackets{a_{-1} + 1}}{z^3}
    } \mathcal{L}_{n} \LieBrackets{- \dfrac{a_{-1} \brackets{a_{-1} + 1}}{z^2}}
    \\
    &=
    \brackets{
    \dfrac{d^3}{dz^3} - 4 \dfrac{a_{-1} \brackets{a_{-1} + 1}}{z^2} \dfrac{d}{dz}
    + 4 \dfrac{a_{-1} \brackets{a_{-1} + 1}}{z^3}
    } \dfrac{c_n}{z^{2n}}.
\end{align}

So, the recursive relation for $c_n$ is the following one
\begin{equation} \label{CnRecRel}
        c_{n + 1}
        = - \dfrac{2 \brackets{2 n + 1} \PoissonBrackets{a_{-1} \brackets{a_{-1} + 1} - n \brackets{n + 1}}}{n + 1} c_n,
        \quad 
        c_0 
        = \dfrac{1}{2},
\end{equation}
by which we are able to find general form \eqref{cnExpr} of coefficients $c_n$. 

If we substitute $\mathcal{L}_n \LieBrackets{- \dfrac{a_{-1} \brackets{a_{-1} + 1}}{z^2}} = \dfrac{c_n}{z^{2n}}$ into \eqref{CnPrepEq}, we obtain equation \eqref{CnEq}. Roots of $c_n$ are $k$ and $\brackets{- k - 1}$, where $k = 0, \dots, n - 1$.
\end{proof}

Since Proposition \ref{CnCoeff} holds, any rational solution of \eqref{nonstatPIIhier} can be represented as
\begin{equation}
    w \brackets{z}
    = \sum_{k = 0}^{n} k \brackets{\dfrac{v_k^{\prime}}{v_k} - \dfrac{u_k^{\prime}}{u_k}},
    \quad 
    k \coloneqq a_{-1}.
\end{equation}

The uniqueness follows from actions of \Backlund transformations on the \textbf{unique} trivial solution $\brackets{w \brackets{z}; \alpha_n} = \brackets{0; 0}$, and we are done.
\end{proof}

\begin{proof}[Proof of Lemma \ref{propYV}]
\phantom{}

\begin{itemize}
    \item 
    Recall that
    \begin{equation}
        \varepsilon w^{\prime} \brackets{z} 
        - w^2 \brackets{z}
        = \varepsilon \Tilde{w}^{\prime} \brackets{z}
        - \Tilde{w}^2 \brackets{z}, 
        \quad 
        \varepsilon = \pm 1,
    \end{equation}
    under the transformation $w \brackets{z} = \dfrac{1}{2} \dfrac{d}{dz} \ln \psi^{\prime} \brackets{z}$ (see Section \ref{BTSection}). Hence, the map $- m \mapsto m + 1$ corresponds the following identity 
    \begin{equation}
        - w_m^{\prime} \brackets{z} - w_m^2 \brackets{z}
        = w_{m + 1}^{\prime} \brackets{z} - w_{m + 1}^2 \brackets{z},
    \end{equation}
    or, equivalently,
    \begin{equation} \label{rel1}
        w_{m + 1}^{\prime} \brackets{z} 
        +  w_m^{\prime} \brackets{z}
        = 
        \brackets{w_{m+1} \brackets{z} + w_m \brackets{z}}
        \brackets{w_{m+1} \brackets{z} - w_m \brackets{z}}.
    \end{equation}
    
    Since $w_m \brackets{z} = \dfrac{d}{dz} \ln \dfrac{u_{m - 1}^{(n)}}{u_{m}^{(n)}} = \dfrac{1}{2} \dfrac{d}{dz} \ln \psi_m^{\prime}$, we can rewrite relation \eqref{rel1} in the following way
    \begin{align}
        2 \dfrac{d^2}{dz^2} 
        \ln 
        \psi_{m+1}^{\prime} \psi_m^{\prime}
        &= \dfrac{d}{dz} \ln 
        \psi_{m+1}^{\prime} \psi_m^{\prime}
        \cdot 
        \dfrac{d}{dz} \ln 
        \dfrac{\psi_{m+1}^{\prime}}{\psi_{m}^{\prime}},
        \\
        \dfrac{d^2}{dz^2}
        \ln 
        \dfrac{u_{m - 1}^{(n)}}{u_{m + 1}^{(n)}}
        &= \dfrac{d}{dz}
        \ln 
        \dfrac{u_{m - 1}^{(n)}}{u_{m + 1}^{(n)}}
        \cdot
        \dfrac{d}{dz}
        \ln 
        \dfrac{\brackets{u_{m}^{(n)}}^2}{u_{m - 1}^{(n)} u_{m + 1}^{(n)}}.
    \end{align}
    
    After integrating and choosing a convenient constant of integration, we have
    \begin{equation}
        \dfrac{d}{dz} \ln \dfrac{u_{m - 1}^{(n)}}{u_{m + 1}^{(n)}}
        = \brackets{2 m + 1} 
        \dfrac{\brackets{u_{m}^{(n)}}^2}{u_{m - 1}^{(n)} u_{m + 1}^{(n)}},
    \end{equation}
    or, equivalently,
    \begin{equation}
        D_z u_{m + 1}^{(n)} \cdot u_{m - 1}^{(n)}
        = \brackets{2 m + 1} \brackets{u_{m}^{(n)}}^2.
    \end{equation}
    
    \item
    Let us compare relations
    \begin{equation}
        w \brackets{z}
        = \dfrac{v \brackets{z}}{u \brackets{z}},
        \quad
        w_m \brackets{z} 
        = \dfrac{d}{dz} \ln \dfrac{u_{m-1}^{(n)}}{u_m^{(n)}}.
    \end{equation}
    
    If we set $v \brackets{z} = \brackets{u_{m-1}^{(n)}}^{\prime} u_m^{(n)} - u_{m-1}^{(n)} \brackets{u_{m}^{(n)}}^{\prime}$, $u \brackets{z} = u_{m-1}^{(n)} u_{m}^{(n)}$ and substitute them into the first equation of system \eqref{bilinSystem}, we have
    \begin{align}
        \brackets{u_{m-1}^{(n)}}^{\prime\prime} u_{m}^{(n)}
        + 2 \brackets{u_{m-1}^{(n)}}^{\prime} \brackets{u_{m}^{(n)}}^{\prime}
        + u_{m-1}^{(n)} \brackets{u_{m}^{(n)}}^{\prime\prime}
        &= 4 \brackets{u_{m-1}^{(n)}}^{\prime} \brackets{u_{m}^{(n)}}^{\prime},
        \\
        D_z^2 u_{m-1}^{(n)} \cdot u_{m}^{(n)} 
        &= 0.
    \end{align}
\end{itemize}
\end{proof}

\begin{corl}
A relation between the Yablonskii-Vorobiev-type polynomials and compositions of \Backlund transformation is given as
\begin{equation}
    \dfrac{d}{dz} \ln \brackets{\dfrac{u_{m - 1}^{(n)}}{u_m^{(n)}}}
    = \frac{1}{4^n} T_m^{(n)} \brackets{Q_n},
    \quad
    T_m^{(n)}
    = \brackets{r^{(n)}s^{(n)}}^m.
\end{equation}
\end{corl}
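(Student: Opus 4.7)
The strategy is to reduce the identity to a direct combination of Lemma \ref{solViaYV} and the formula $Q_n = 4^n w$ from Theorem \ref{canCoordAsPols}, by interpreting $T_m^{(n)}\brackets{Q_n}$ as the value of the canonical coordinate $Q_n$ on the rational solution obtained by iterating \Backlund transformations starting from the trivial seed.

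First, I would fix the unique trivial solution $w\brackets{z, \overline{t}; 0} \equiv 0$ of \eqref{nonstatPIIhier}, which corresponds to the initial canonical data $\brackets{Q_n, P_n, b_n} = \brackets{0, P_n^{(0)}, -1}$ with $P_n^{(0)}$ prescribed by Theorem \ref{canCoordAsPols}. From the generator formulas \eqref{sGenW} and \eqref{rGenW} one computes $\brackets{r^{(n)} s^{(n)}}\brackets{b_n} = b_n - 2$, so the composition $T_m^{(n)} = \brackets{r^{(n)} s^{(n)}}^m$ shifts $b_n$ by $-2m$ and thereby carries the trivial data to the canonical data of a rational solution with integer parameter $\alpha_n$ linear in $m$.

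Next, by Lemma \ref{solViaYV} any rational solution of \eqref{nonstatPIIhier} for an integer value of $\alpha_n$ is unique and is expressed via Yablonskii-Vorobiev-type polynomials, so the solution produced by $T_m^{(n)}$ must coincide with $w_m^{(n)}\brackets{z} = \dfrac{d}{dz}\ln\brackets{\dfrac{u_{m-1}^{(n)}}{u_m^{(n)}}}$, up to the discrete symmetry $r^{(n)}$, which only flips the sign of the parameter. Applying the identity $Q_n = 4^n w$ to this solution then yields
\begin{equation}
    T_m^{(n)}\brackets{Q_n}
    = 4^n\, w_m^{(n)}\brackets{z}
    = 4^n \dfrac{d}{dz} \ln\brackets{\dfrac{u_{m-1}^{(n)}}{u_m^{(n)}}},
\end{equation}
which rearranges to the claimed identity.

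The main obstacle is the bookkeeping step of verifying that the abstract Weyl-group composition $\brackets{r^{(n)} s^{(n)}}^m$ really produces the $m$-th rational solution in the Yablonskii-Vorobiev indexing, rather than, say, the $\brackets{-m}$-th one. This amounts to chasing signs through the action of $s^{(n)}$ and $r^{(n)}$ on $b_n$ starting from $b_n = -1$, matching the resulting orbit with the parameter labels used in Theorem \ref{YVthm}, and invoking the discrete symmetry $w_{-m}^{(n)} = - w_m^{(n)}$ from Lemma \ref{solViaYV} to reconcile the two indexings. Once this identification is pinned down, the corollary is an immediate consequence of Lemma \ref{solViaYV} and the relation $Q_n = 4^n w$.
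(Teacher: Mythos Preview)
Your approach is essentially the same as the paper's, which gives no separate proof of this corollary and treats it as an immediate consequence of Lemma \ref{solViaYV} together with $Q_n = 4^n w$ from Theorem \ref{canCoordAsPols}. One small correction to your bookkeeping: composing the formulas \eqref{sGenW} and \eqref{rGenW} gives $\brackets{r^{(n)} s^{(n)}}\brackets{b_n} = r^{(n)}\brackets{-b_n} = -\brackets{-b_n - 2} = b_n + 2$, so $T_m^{(n)}$ shifts $b_n$ by $+2m$ (not $-2m$), which directly sends the seed $b_n = -1$ to $b_n = 2m - 1$, i.e.\ $\alpha_n = m$, with no need to invoke the discrete symmetry $w_{-m}^{(n)} = -w_m^{(n)}$.
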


\begin{exmp}
Let us consider the case $n = 2$.
A trivial rational solution of $\PIIn{2} \LieBrackets{w \brackets{z; \alpha_2}}$ in terms of canonical coordinates is
\begin{equation} \label{rationalSolutionPII2}
    \brackets{Q_1, P_1, Q_2, P_2; b_2} 
    = \brackets{0, \dfrac{t_1}{4}, 0, - \dfrac{z}{4^2}; -1}.
\end{equation}

As in the case $n = 1$, we will consider actions of $T_m^{(2)} = \brackets{r^{(2)} s^{(2)}}^m$ on this rational solution \eqref{rationalSolutionPII2}. These actions are in Table \ref{tab:ratSolPII2}.
\begin{table}[H]
\centering
    \begin{tabular}{c||ccccccc}
      & $\cdots$
      & $T_{-2}^{(2)}$ 
      & $T_{-1}^{(2)}$ 
      & $T_{0}^{(2)}$
      & $T_{1}^{(2)}$
      & $T_{2}^{(2)}$
      & $\cdots$
      \\
      \hline
      \hline
      $Q_1 \vsp$
      & $\cdots$
      & $\dfrac{48 \left(16 t_1^2-28 t_1 z^3+z^6\right)}{\left(4 t_1+z^3\right)^3}$ 
      & $\dfrac{16}{z^3}$ 
      & $0$
      & $0$
      & $\dfrac{16}{z^3}$
      & $\cdots$
      \\
      $P_1 \vsp$
      & $\cdots$
      & $\dfrac{16 t_1^3+8 t_1^2 z^3+t_1 z^6+96 t_1 z-12 z^4}{4 \left(4 t_1+z^3\right)^2}$ 
      & $\dfrac{t_1 z^2-4}{4 z^2}$ 
      & $\dfrac{t_1}{4}$
      & $\dfrac{t_1}{4}$
      & $\dfrac{t_1 z^2 - 4}{4 z^2}$
      & $\cdots$
      \\
      $Q_2 \vsp$
      & $\cdots$
      & $-\dfrac{32 \left(2 t_1-z^3\right)}{z \left(4 t_1+z^3\right)}$ 
      & $\dfrac{16}{z}$ 
      & $0$
      & $- \dfrac{16}{z}$
      & $\dfrac{32 \brackets{2 t_1 - z^3}}{z \brackets{4 t_1 + z^3}}$
      & $\cdots$
      \\ 
      $P_2 \vsp$
      & $\cdots$
      & $\dfrac{z \left(80 t_1^2-20 t_1 z^3-z^6+144 z\right)}{16 \left(4 t_1+z^3\right)^2}$ 
      & $-\dfrac{4 t_1+z^3}{16 z^2}$ 
      & $- \dfrac{z}{16}$
      & $- \dfrac{z}{16}$
      & $- \dfrac{4 t_1 + z^3}{16 z^2}$
      & $\cdots$
      \\
      \hline 
      $b_2$
      & $\cdots$
      & $-5$ 
      & $-3$ 
      & $-1$ 
      & $1$
      & $3$
      & $\cdots$
    \end{tabular}
    \caption{Actions of $T_m^{(2)}$ on $\brackets{Q_1, P_1, Q_2, P_2; b_2} = \brackets{0, \dfrac{t_1}{4}, 0, - \dfrac{z}{4^2}; -1}$}
    \label{tab:ratSolPII2}
\end{table}

According to Theorem \ref{nonstatYB}, the recurrent equation for Yablonskii-Vorobiev-type polynomials is
\begin{align} \label{nonstatYB2}
    \begin{aligned}
    u_{m + 1}^{\brackets{2}} u_{m - 1}^{\brackets{2}}
    = z \brackets{u_m^{\brackets{2}}}^2
    &- 4 \brackets{u_m^{\brackets{2}}
    \brackets{u_m^{\brackets{2}}}^{\prime\prime\prime\prime}
    - 4 \brackets{u_m^{\brackets{2}}}^{\prime} \brackets{u_m^{\brackets{2}}}^{\prime\prime\prime}
    + 3 \brackets{\brackets{u_m^{\brackets{2}}}^{\prime\prime}}^2}
    \\
    &  \quad 
    - 4 t_1  \brackets{u_m^{\brackets{2}}
    \brackets{u_m^{\brackets{2}}}^{\prime\prime}
    - \brackets{\brackets{u_m^{\brackets{2}}}^{\prime}}^2},
    \quad 
    u_0^{\brackets{2}} = 1, 
    \quad
    u_1^{\brackets{2}} = z.
    \end{aligned}
\end{align}

Using recurrent difference-differential equation \eqref{nonstatYB2}, we can find Yablonskii-Vorobiev-type polynomials depending on a parameter $t_1$. Some of them are given as
\begin{gather}
    u_2^{(2)} 
    = 4 t_1+z^3,
    \quad
    u_3^{(2)} 
    = -80 t_1^2+20 t_1 z^3+z^6-144 z,
    \\
    u_4^{(2)} 
    = 11200 t_1^3 z+60 t_1 \left(z^5+336\right) z^2+z^{10}-1008 z^5-48384.
\end{gather}

Rational solutions of $\PIIn{2}$ are expressed via Yablonskii-Vorobiev-type polynomials by Lemma \ref{solViaYV}:
\begin{gather}
    w_1^{(2)} (z) 
    = \dfrac{d}{dz} \ln \brackets{\dfrac{u_{0}^{(2)}}{u_1^{(2)}}}
    = -\dfrac{1}{z},
    \quad 
    \alpha_2 = 1;
    \quad
    w_2^{(2)} (z) 
    = \dfrac{d}{dz} \ln \brackets{\dfrac{u_{1}^{(2)}}{u_2^{(2)}}}
    = \dfrac{4 t_1-2 z^3}{4 t_1 z+z^4},
    \quad 
    \alpha_2 = 2,
    \\
    w_3^{(2)} (z) 
    = \dfrac{d}{dz} \ln \brackets{\dfrac{u_{2}^{(2)}}{u_3^{(2)}}}
    = \dfrac{3 \left(160 t_1^2 z^2+8 t_1 \left(z^5-24\right)+\left(z^5+96\right) z^3\right)}{\left(4 t_1+z^3\right) \left(80 t_1^2-20 t_1 z^3-z^6+144 z\right)},
    \quad 
    \alpha_2 = 3.
\end{gather}

These rational solutions coincide with rational solutions in Table \ref{tab:ratSolPII2}, i.e.
\begin{equation}
    w_m^{(2)} \brackets{z} = \dfrac{1}{16} T_m^{(2)} \brackets{Q_2}.
\end{equation}

\begin{rem}
We have observed that for big values of parameters $|t_i|$ shapes of plots of roots "reduce" to plots of some young series of Yablonskii-Vorobiev-type polynomials (see Appendix \ref{YBplotsSection}).
\end{rem}
\end{exmp}

\section{Polynomial \texorpdfstring{$\tau$}{tau}-functions and their determinant representation} \label{TFsection}

One of the remarkable facts in the \Painleve theory is that rational solutions of the second \Painleve equation have two determinant representations. This means that there exists polynomial $\tau$-functions concerned with the Yablonskii-Vorobiev polynomials. We will work with the Jacobi-Trudi determinant representation.

Commonly known \cite{kajiwara1996determinant} that rational solutions of \eqref{PII1} have the Jacobi-Trudi determinant representation via the polynomial $\tau$-functions of the form
\begin{equation} \label{tau1Function}
    \tau_m^{(1)} \brackets{z}
    = 
    \begin{vmatrix}
    h_{m}^{(1)} \brackets{z}
    &
    h_{m + 1}^{(1)} \brackets{z}
    &
    \dots 
    &
    h_{2m - 1}^{(1)} \brackets{z}
    \vspace{0.2cm}
    \\
    h_{m - 2}^{(1)} \brackets{z}
    &
    h_{m - 1}^{(1)} \brackets{z}
    &
    \dots 
    &
    h_{2m - 3}^{(1)} \brackets{z}
    \vspace{0.2cm}
    \\
    \vdots
    &
    \vdots
    &
    \ddots
    &
    \vdots
    \vspace{0.2cm}
    \\
    h_{- m + 2}^{(1)} \brackets{z}
    &
    h_{- m + 3}^{(1)} \brackets{z}
    &
    \dots 
    &
    h_{1}^{(1)} \brackets{z}
    \end{vmatrix},
\end{equation}
where $h_k^{(1)} \brackets{z}$ ($h_k^{(1)} \brackets{z} = 0$, $k < 0$) are defined by a generating function in $\lambda$
\begin{align}
    \sum_{k = 0}^{\infty} 
    h_k^{(1)} \brackets{z} \lambda^k
    &= \exp{\brackets{z \lambda - \dfrac{4}{3} \lambda^3}}
    = \sum_{k = 0}^{\infty}
    \dfrac{1}{k!} \brackets{z \lambda - \dfrac{4}{3} \lambda^3}^k.
\end{align}

The first few polynomials $h_k^{(1)} \brackets{z}$ are
\begin{gather}
    h_0^{(1)} \brackets{z}
    = 1,
    \quad
    h_1^{(1)} \brackets{z}
    = z,
    \quad
    h_2^{(1)} \brackets{z}
    = \dfrac{1}{2!} z^2,
    \quad
    h_3^{(1)} \brackets{z}
    = \dfrac{1}{3!} \left(z-2\right) \left(z^2+2 z+4\right),
    \\
    h_4^{(1)} \brackets{z}
    = \dfrac{1}{4!} z \left(z^3-32\right),
    \quad
    h_5^{(1)} \brackets{z}
    = \dfrac{1}{5!} z^2 \left(z^3-80\right),
    \quad
    h_6^{(1)} \brackets{z}
    = \dfrac{1}{6!} \brackets{z^6-160 z^3+640};
\end{gather}
and polynomial $\tau$-functions are
\begin{gather}
    \tau_1^{(1)} \brackets{z}
    = 
    \begin{vmatrix}
    h_1^{(1)} \brackets{z}
    \end{vmatrix}
    = z
    = u_1^{(1)} \brackets{z},
    \quad
    \tau_2^{(1)} \brackets{z}
    = 
    \begin{vmatrix}
    h_2^{(1)} \brackets{z} & 
    h_3^{(1)} \brackets{z}
    \vspace{0.2cm}
    \\
    h_0^{(1)} \brackets{z} &
    h_1^{(1)} \brackets{z}
    \end{vmatrix}
    = \dfrac{1}{3} \brackets{z^3 + 4}
    = \dfrac{1}{3} u_2^{(1)} \brackets{z},
    \\
    \tau_3^{(1)} \brackets{z}
    = \dfrac{1}{45} \brackets{z^6 + 20 z^3 - 80}
    = \dfrac{1}{45} u_3^{(1)} \brackets{z},
    \quad
    \tau_4^{(1)} \brackets{z}
    = \dfrac{1}{4725} z \left( z^9 + 60 z^6 + 11200 \right)
    = \dfrac{1}{4725} u_4^{(1)} \brackets{z}.
\end{gather}

One can see that exists a relation between polynomial $\tau$-functions and the Yablonskii-Vorobiev polynomials:
$
        \tau_m^{(1)} \brackets{z}
        = c_m u_m^{(1)} \brackets{z},
        \, 
        c_m = \prod_{k = 1}^m \brackets{2 k + 1}^{k - m}
$. Hence, some rational solutions of \eqref{PII1} are
\begin{gather}
    w_1^{(1)} (z) 
    = \dfrac{d}{dz} \ln \brackets{\dfrac{\tau_{0}^{(1)}}{\tau_1^{(1)}}}
    = -\dfrac{1}{z},
    \quad 
    \alpha_2 = 1;
    \quad
    w_2^{(1)} (z) 
    = \dfrac{d}{dz} \ln \brackets{\dfrac{\tau_{1}^{(1)}}{\tau_2^{(1)}}}
    = \dfrac{4-2 z^3}{z^4+4 z},
    \quad 
    \alpha_2 = 2;
    \\
    w_3^{(1)} (z) 
    = \dfrac{d}{dz} \ln \brackets{\dfrac{\tau_{2}^{(1)}}{\tau_3^{(1)}}}
    = -\dfrac{3 z^2 \left(z^6+8 z^3+160\right)}{z^9+24 z^6-320},
    \quad 
    \alpha_2 = 3.
\end{gather}

Below we found out a generalization of polynomial $\tau$-functions to an arbitrary number $n$ in \eqref{nonstatPIIhier}.

\begin{thm} \label{tauFuncThm}
Rational solutions of the $n$-th member of the non-stationary \text{\rm $\PIIn{n}$} hierarchy \eqref{nonstatPIIhier} are given as
\begin{equation} \label{solViaTauFunc}
    w_m^{(n)} \brackets{z} 
    = \dfrac{d}{dz} \PoissonBrackets{\ln \LieBrackets{\dfrac{\tau_{m - 1}^{(n)}}{\tau_m^{(n)}}}},
    \quad
    w_{- m}^{(n)} \brackets{z} 
    = - w_m^{(n)} \brackets{z},
    \quad 
    m \geq 1,
\end{equation}
where polynomial $\tau$-functions $\tau_m^{(n)} \brackets{z}$ are $m \times m$ determinants
\begin{equation} \label{tauFunction}
    \tau_m^{(n)} \brackets{z}
    = 
    \begin{vmatrix}
    h_{m}^{(n)} \brackets{z}
    &
    h_{m + 1}^{(n)} \brackets{z}
    &
    \dots 
    &
    h_{2m - 1}^{(n)} \brackets{z}
    \vspace{0.2cm}
    \\
    h_{m - 2}^{(n)} \brackets{z}
    &
    h_{m - 1}^{(n)} \brackets{z}
    &
    \dots 
    &
    h_{2m - 3}^{(n)} \brackets{z}
    \vspace{0.2cm}
    \\
    \vdots
    &
    \vdots
    &
    \ddots
    &
    \vdots
    \vspace{0.2cm}
    \\
    h_{- m + 2}^{(n)} \brackets{z}
    &
    h_{- m + 3}^{(n)} \brackets{z}
    &
    \dots 
    &
    h_{1}^{(n)} \brackets{z}
    \end{vmatrix},
\end{equation}
which entries are the complete homogeneous polynomials $h_k^{(n)} \brackets{z}$, $h_k^{(n)} \brackets{z} = 0$ for $k < 0$ defined by a generating function in a formal parameter $\lambda$
\begin{equation} \label{pkPolynomials}
    \sum_{k = 0}^{\infty} 
    h_k^{(n)} \brackets{z} \lambda^k
    = \exp{\brackets{- \sum_{l = 0}^n \dfrac{4^l}{2 l + 1} t_{l} \lambda^{2 l + 1}}}, 
    \quad 
    t_0 = - z,
    \quad 
    t_n = 1.
\end{equation}
\end{thm}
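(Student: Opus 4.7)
The plan is to reduce the theorem to the Corollary stated immediately after it, i.e.\ to prove the proportionality $\tau_m^{(n)}(z) = c_m\, u_m^{(n)}(z)$ with $c_m = \prod_{k=1}^m (2k+1)^{k-m}$. Granting this identity, formula \eqref{solViaTauFunc} is immediate from Lemma \ref{solViaYV}, since the logarithmic derivative is insensitive to the constant ratio $c_{m-1}/c_m$.

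First I would record the elementary shift rules satisfied by the polynomials $h_k^{(n)}$ that follow by direct differentiation of the generating function \eqref{pkPolynomials}. Since $t_0 = -z$ appears in the exponent as $+z\lambda$, one has $\tfrac{d}{dz} h_k^{(n)} = h_{k-1}^{(n)}$; differentiating in each $t_l$ produces a shift of the index by $2l+1$ with weight $-\tfrac{4^l}{2l+1}$. These identities are enough to recognize \eqref{tauFunction} as the Schur polynomial $s_{(m,m-1,\dots,1)}$ of staircase shape evaluated at the Miwa-type coordinates $T_{2l+1} = -\tfrac{4^l}{2l+1} t_l$ (with all even flows switched off), which is precisely the polynomial $\tau$-function setup coming from the modified KdV reduction described in the introduction of this paper.

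Next I would derive the nonlinear bilinear equation satisfied by $\tau_m^{(n)}$. Applied to the Jacobi-Trudi minors, the Pl\"ucker relations on the Sato Grassmannian give the standard Toda-chain bilinear identities $D_z\, \tau_{m+1}^{(n)} \cdot \tau_{m-1}^{(n)} \propto (\tau_m^{(n)})^2$ and $D_z^2\, \tau_{m+1}^{(n)}\cdot \tau_m^{(n)}=0$, which are the $\tau$-function counterparts of the two relations in Lemma \ref{propYV}. To recover the full equation \eqref{nonstatYB} I would use the $t_l$-flows: because the $h_k^{(n)}$ are the characters of the $(2l+1)$-st mKdV flows, the derivation $\partial_{t_l} \ln \tau_m^{(n)}$ reproduces precisely $\mathcal{L}_l\bigl[2 (\ln \tau_m^{(n)})''\bigr]$ via the Lenard recursion built into \eqref{pkPolynomials}. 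Weighting by $t_l$ and summing over $l=0,\dots,n$ gives the operator $\sum_l t_l \mathcal{L}_l[\,\cdot\,]$ on the right-hand side of \eqref{nonstatYB}, up to the scalar $c_{m+1} c_{m-1}/c_m^2$ which is exactly what forces $c_m = \prod_{k=1}^m (2k+1)^{k-m}$.

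Finally I would check the base cases: $\tau_0^{(n)} = 1$ (the empty determinant) and $\tau_1^{(n)} = h_1^{(n)} = z$, which match $u_0^{(n)} = 1$ and $u_1^{(n)} = z$ with $c_0 = c_1 = 1$; the induction step is then secured by comparing the normalized bilinear equation for $\tau_m^{(n)}$ with the recursion \eqref{nonstatYB} of Theorem \ref{YVthm}. The hard part is the middle step: converting the Pl\"ucker/Jacobi-Trudi identity into the Lenard-operator form \eqref{nonstatYB}. This requires carefully identifying the $t_l$-flow action on the staircase Schur function with the action of $\mathcal{L}_l$ on $2(\ln\tau_m^{(n)})''$, i.e.\ making explicit the statement that the combination $\sum_l t_l \mathcal{L}_l$ is exactly the non-stationary reduction of the mKdV hierarchy flows selected by the Virasoro symmetry of \cite{mazzocco2007hamiltonian}. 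Once this correspondence is spelled out, tracking the constants $c_m$ is a routine bookkeeping computation.
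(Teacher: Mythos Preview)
Your approach differs substantially from the paper's. The paper gives essentially a two-line argument: it cites the classical fact (\cite{jimbo1983solitons}, \cite{kac2019polynomial}) that polynomial $\tau$-functions of the KP hierarchy are Schur polynomials in the complete homogeneous polynomials with generating function $\exp\bigl(\sum_k x_k \lambda^k / k\bigr)$, and then observes that the substitution $x_{2k+1} = -t_k$, $x_{2k} = 0$ together with the rescaling $\lambda \mapsto 2^{2/3}\lambda$ produces exactly \eqref{pkPolynomials}. The connection to rational solutions of \eqref{nonstatPIIhier} is left implicit in the mKdV/KP reduction discussed in the introduction, and the Corollary $\tau_m^{(n)} = c_m u_m^{(n)}$ is stated afterward without a separate proof.

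You instead invert the logic: prove the Corollary first by showing directly that $\tau_m^{(n)}$ obeys the Yablonskii--Vorobiev recursion \eqref{nonstatYB} up to the constants $c_m$, and then deduce the Theorem from Lemma \ref{solViaYV}. This is more self-contained and in principle viable, but the step you correctly flag as the hard part---identifying the $t_l$-flow $\partial_{t_l}\ln\tau_m^{(n)}$ with $\mathcal{L}_l\bigl[2(\ln\tau_m^{(n)})''\bigr]$---is precisely the KdV $\tau$-function dictionary that the paper outsources to the literature. In other words, your plan amounts to re-deriving in this special instance what \cite{jimbo1983solitons} establishes in general; as written you have only asserted this correspondence, not proved it. If you want a genuinely independent argument you would have to verify explicitly that the staircase Schur determinant satisfies the Hirota bilinear form of the full non-stationary recursion \eqref{nonstatYB}, not merely the Toda-type relations $D_z\,\tau_{m+1}^{(n)}\cdot\tau_{m-1}^{(n)}\propto(\tau_m^{(n)})^2$ and $D_z^2\,\tau_{m+1}^{(n)}\cdot\tau_m^{(n)}=0$, which by themselves do not determine the Lenard operators $\mathcal{L}_l$ appearing on the right-hand side.
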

\begin{proof}
It is well-known that the polynomial $\tau$-function of the Kadomtsev–Petviashvili hierarchy is expressed via the Schur polynomials \cite{jimbo1983solitons, kac2019polynomial}. Let us define the complete homogeneous polynomials through a generating function
\begin{equation}
    \sum_{k \geq 0} h_k \brackets{x} \lambda^k
    = \prod_{i = 1}^n \dfrac{1}{1 - x_i \lambda}
    = \exp {\brackets{\dsum_{k = 1}^n \dfrac{x_k}{k} \lambda^k}},
    \quad
    x = \brackets{x_1, \dots, x_n}.
\end{equation}

The statement of Theorem \ref{tauFunction} can be established by the following substitution 
\begin{equation}
    x_{2k + 1} = - t_{k},
    \quad
    x_{2k} = 0,
\end{equation}
and the formal parameter rescaling $\lambda \mapsto 2^{2/3} \lambda$.
\end{proof}

\begin{corl}
    A relation between the polynomial $\tau$-functions and the Yabloskii-Vorobiev-type polynomials does not depend on the member number $n$ of \eqref{nonstatPIIhier} and is given as
    \begin{equation} \label{tauFuncViaYBpol}
        \tau_m^{(n)} \brackets{z}
        = c_m u_m^{(n)} \brackets{z},
        \quad 
        c_m = \prod_{k = 1}^m \brackets{2 k + 1}^{k - m}.
    \end{equation}
\end{corl}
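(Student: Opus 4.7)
The plan is to establish $\tau_m^{(n)}(z) = c_m u_m^{(n)}(z)$ in two stages: first, show that the ratio $\tau_m^{(n)}/u_m^{(n)}$ is independent of $z$; then pin down its value by matching leading $z$-coefficients on both sides.

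For the first stage, I would combine the two representations of the same rational solution. By Lemma \ref{solViaYV}, $w_m^{(n)}(z) = \frac{d}{dz}\ln\brackets{u_{m-1}^{(n)}/u_m^{(n)}}$, and by Theorem \ref{tauFuncThm}, $w_m^{(n)}(z) = \frac{d}{dz}\ln\brackets{\tau_{m-1}^{(n)}/\tau_m^{(n)}}$. Subtracting gives
\begin{equation}
    \dfrac{d}{dz}\ln \dfrac{\tau_{m-1}^{(n)}\, u_m^{(n)}}{\tau_m^{(n)}\, u_{m-1}^{(n)}} = 0,
\end{equation}
so $\tau_m^{(n)}/u_m^{(n)} = K_m\,\tau_{m-1}^{(n)}/u_{m-1}^{(n)}$ for some $K_m$ independent of $z$. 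With the convention $\tau_0^{(n)} = u_0^{(n)} = 1$ (the empty determinant), iterating yields that $\tau_m^{(n)}/u_m^{(n)}$ is constant in $z$, although a priori it could depend on the parameters $t_1,\ldots,t_{n-1}$.

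For the second stage, I would compute the leading $z$-coefficient of each side separately. On the Yablonskii-Vorobiev side, the recursion of Theorem \ref{YVthm} implies that all terms $t_l\,\mathcal{L}_l\LieBrackets{2(\ln u_m^{(n)})^{\prime\prime}}$ with $l \geq 1$ are of strictly lower $z$-order than $t_0\mathcal{L}_0 = -z/2$: if $u_m^{(n)} \sim z^{d_m}$ then $(\ln u_m^{(n)})^{\prime\prime} \sim z^{-2}$, and the Lenard operator $\mathcal{L}_l$ applied to such input scales as $z^{-2l}$. The top-degree recursion therefore reduces to $u_{m+1}^{(n)} u_{m-1}^{(n)} \sim z\,(u_m^{(n)})^2$, which with $u_0^{(n)}=1$, $u_1^{(n)}=z$ forces $d_m = m(m+1)/2$ and the leading coefficient to equal $1$. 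On the $\tau$-function side, the generating function \eqref{pkPolynomials} gives $h_k^{(n)}(z) = z^k/k! + O(z^{k-2})$, so the top-$z$ part of the determinant \eqref{tauFunction} is
\begin{equation}
    [\tau_m^{(n)}]_{\mathrm{top}} = z^{m(m+1)/2} \det\brackets{\dfrac{1}{(m - 2i + j + 1)!}}_{i,j=1}^{m}.
\end{equation}

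The final step is to evaluate this determinant. The cleanest route is to recognize the matrix as the Jacobi-Trudi presentation of the Schur polynomial $s_{\delta_m}$ for the self-conjugate staircase partition $\delta_m = (m,m-1,\ldots,1)$, specialised at $h_k = z^k/k!$. Under this principal specialization one has $s_{\delta_m} = f^{\delta_m}\, z^{|\delta_m|}/|\delta_m|!$, where $f^{\delta_m}$ is the number of standard Young tableaux of shape $\delta_m$. The hook length at cell $(i,j)$ of $\delta_m$ is $h(i,j) = 2(m-i-j)+3$, and grouping cells by their common hook value gives $\prod_{(i,j)\in\delta_m} h(i,j) = \prod_{k=1}^{m-1}(2k+1)^{m-k}$. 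The hook length formula then yields
\begin{equation}
    \det\brackets{\dfrac{1}{(m - 2i + j + 1)!}}_{i,j=1}^m = \dfrac{1}{\prod_{k=1}^{m-1}(2k+1)^{m-k}} = \prod_{k=1}^{m}(2k+1)^{k-m} = c_m.
\end{equation}
Combined with the first stage, this gives $\tau_m^{(n)}/u_m^{(n)} = c_m$, as required. The main obstacle is this combinatorial evaluation of the determinant: a brute-force row/column reduction is unpleasant, whereas the Schur-polynomial interpretation together with the hook length formula for the staircase makes the identification of $c_m$ transparent.
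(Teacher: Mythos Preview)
Your argument is correct and, in fact, supplies considerably more than the paper does: the paper states this corollary without proof, presumably treating it as an immediate consequence of Theorem~\ref{tauFuncThm} together with the known $n=1$ relation. Your two-stage approach is sound. Stage~1 is exactly the right reduction: the uniqueness of rational solutions (Lemma~\ref{solViaYV}) combined with Theorem~\ref{tauFuncThm} forces $\tau_m^{(n)}/u_m^{(n)}$ to be $z$-independent. Stage~2 correctly eliminates any residual dependence on the times $t_1,\ldots,t_{n-1}$ by isolating the top-degree coefficients, which are manifestly time-independent on both sides. The identification of the Jacobi--Trudi determinant with $s_{\delta_m}$ under the exponential specialisation $h_k\mapsto z^k/k!$, followed by the hook length formula for the staircase shape, is the standard and cleanest way to evaluate the constant; your hook computation $h(i,j)=2(m-i-j)+3$ and the resulting product $\prod_{k=1}^{m-1}(2k+1)^{m-k}$ are correct, and the rewriting as $c_m$ is valid since the $k=m$ factor in $\prod_{k=1}^{m}(2k+1)^{k-m}$ is $1$.

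One minor point worth making explicit in a write-up: the inductive claim that $u_m^{(n)}$ is monic of degree $m(m+1)/2$ in $z$ relies on the fact that $\mathcal{L}_l$ applied to an input $\sim c/z^2$ produces $\sim c_l/z^{2l}$, which the paper itself establishes in Proposition~\ref{CnCoeff}. Citing that keeps the degree-drop argument fully rigorous. Otherwise the proof is complete.
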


\begin{exmp}
Let us apply Theorem \ref{tauFuncThm} to the case $n = 2$. The generating function for the polynomials $h_k^{(2)}$ is
\begin{align}
    \sum_{k = 0}^{\infty} 
    h_k^{(2)} \brackets{z} \lambda^k
    = \exp{\brackets{z \lambda - \dfrac{4}{3} t_1 \lambda^3 - \dfrac{16}{5} \lambda^5}}
    = \sum_{k = 0}^{\infty}
    \dfrac{1}{k!} \brackets{z \lambda - \dfrac{4}{3} t_1 \lambda^3 - \dfrac{16}{5} \lambda^5}^k.
\end{align}

Hence, the first several polynomials $h_k^{(2)} \brackets{z}$ are
\begin{gather}
    h_0^{(2)} \brackets{z}
    = 1,
    \quad
    h_1^{(2)} \brackets{z}
    = z,
    \quad
    h_2^{(2)} \brackets{z}
    = \dfrac{1}{2!} z^2,
    \quad 
    h_3^{(2)} \brackets{z}
    = \dfrac{1}{3!} \left(z^3-8 t_1\right),
    \\
    h_4^{(2)} \brackets{z}
    = \dfrac{1}{4!} \left(z^4-32 t_1 z\right),
    \quad
    h_5^{(2)} \brackets{z}
    = \dfrac{1}{5!} \left(-80 t_1 z^2+z^5-384\right),
    \quad
    h_6^{(2)} \brackets{z}
    = \dfrac{1}{6!} \left(640 t_1^2-160 t_1 z^3+z^6-2304 z\right);
\end{gather}
therefore, $\tau$-functions related to polynomials $h_k^{(2)} \brackets{z}$ become
\begin{gather}
    \tau_2^{(2)} \brackets{z}
    = \dfrac{1}{3} \brackets{4 t_1+z^3}
    = \dfrac{1}{3} u_2^{(2)} \brackets{z},
    \quad
    \tau_3^{(2)} \brackets{z}
    = \dfrac{1}{45} \brackets{-80 t_1^2+20 t_1 z^3-144 z +z^6}
    = \dfrac{1}{45} u_3^{(2)} \brackets{z},
    \\
    \tau_4^{(2)} \brackets{z}
    = \dfrac{1}{4725} \brackets{11200 t_1^3 z+60 t_1 \left(z^5+336\right) z^2+z^{10}-1008 z^5-48384}
    = \dfrac{1}{4725} u_4^{(2)} \brackets{z}.
\end{gather}

The relation \eqref{tauFuncViaYBpol} between polynomial $\tau$-functions and Yablonskii-Vorobiev-type polynomials appears in formulas above.

Rational solutions constructed by \eqref{solViaTauFunc} coincide with rational solutions obtained by \eqref{solViaYVform}:
\begin{gather}
    w_1^{(2)} (z) 
    = \dfrac{d}{dz} \ln \brackets{\dfrac{\tau_{0}^{(2)}}{\tau_1^{(2)}}}
    = -\dfrac{1}{z},
    \quad 
    \alpha_2 = 1;
    \quad
    w_2^{(2)} (z) 
    = \dfrac{d}{dz} \ln \brackets{\dfrac{\tau_{1}^{(2)}}{\tau_2^{(2)}}}
    = \dfrac{4 t_1-2 z^3}{4 t_1 z+z^4},
    \quad 
    \alpha_2 = 2;
    \\
    w_3^{(2)} (z) 
    = \dfrac{d}{dz} \ln \brackets{\dfrac{\tau_{2}^{(2)}}{\tau_3^{(2)}}}
    = \dfrac{3 \left(160 t_1^2 z^2+8 t_1 \left(z^5-24\right)+\left(z^5+96\right) z^3\right)}{\left(4 t_1+z^3\right) \left(80 t_1^2-20 t_1 z^3-z^6+144 z\right)},
    \quad 
    \alpha_2 = 3.
\end{gather}
\end{exmp}

\begin{rem}
Besides the Jacobi-Trudi determinant representation there exists the Hankel determinant representation of the polynomial $\tau$-function $\tau_m^{(1)}$ \cite{kajiwara1996determinant}. The latter appears since its relation with the expansion of the Airy function at infinity \cite{iwasaki2002generating}. 

Let $q_k^{(1)}$ be polynomials defined by the following recurrence relation
\begin{gather}
    q_k^{(1)} \brackets{z}
    = \brackets{- 4}^{1/3} \dfrac{d q_{k - 1} \brackets{z}}{dz}
    + \sum_{l = 0}^{k - 2} 
    q_l \brackets{z} q_{k - 2 - l} \brackets{z},
    \quad 
    q_0^{(1)} \brackets{z}
    = 1,
    \,\,
    q_1^{(1)} \brackets{z}
    = \dfrac{z}{(- 4)^{1/3}}.
\end{gather}

Then $\tilde{\tau}_m^{(1)} \brackets{z}$ be the $m \times m$ determinant 
\begin{equation}
    \tilde{\tau}_m^{(1)} \brackets{z}
    = 
    \begin{vmatrix}
    q_0^{(1)} \brackets{z}
    &
    q_1^{(1)} \brackets{z}
    &
    \dots 
    &
    q_{m - 1}^{(1)} \brackets{z}
    \vspace{0.2cm}
    \\
    q_1^{(1)} \brackets{z}
    &
    q_2^{(1)} \brackets{z}
    &
    \dots 
    &
    q_m^{(1)} \brackets{z}
    \vspace{0.2cm}
    \\
    \vdots
    &
    \vdots
    &
    \ddots
    &
    \vdots
    \vspace{0.2cm}
    \\
    q_{m - 1}^{(1)} \brackets{z}
    &
    q_{m}^{(1)} \brackets{z}
    &
    \dots 
    &
    q_{2m - 2}^{(1)} \brackets{z}
    \end{vmatrix}
\end{equation}
for $m \geq 1$.

\begin{exmp} Some members of $q_k^{(1)}$ are
\begin{align}
    q_2^{(1)} \brackets{z}
    &= -\frac{1}{2} \sqrt[3]{-\frac{1}{2}} z^2,
    &
    q_3^{(1)} \brackets{z}
    &= -2 (-1)^{2/3} \sqrt[3]{2} z,
    &
    q_4^{(1)} \brackets{z}
    &= 5-\frac{z^3}{2},
    \\
    q_5^{(1)} \brackets{z}
    &= -4 \sqrt[3]{-1} 2^{2/3} z^2,
    &
    q_6^{(1)} \brackets{z}
    &= \frac{5}{4} \left(-\frac{1}{2}\right)^{2/3} z \left(z^3-40\right),
    &
    q_7^{(1)} \brackets{z}
    &= 60-16 z^3.
\end{align}
\end{exmp}

Polynomials $q_k^{(1)} \brackets{z}$ are connected with the Airy function \cite{iwasaki2002generating}. Let $\theta \brackets{z, \lambda}$ be an entire function of two variables defined by
\begin{equation}
    \theta \brackets{z, \lambda}
    = \exp{\brackets{\dfrac{2}{3} \lambda^3}}
    \Ai{\lambda^2 - \dfrac{z}{\brackets{- 4}^{1/3}}}. 
\end{equation}
Then the following asymptotic expansion as $\lambda \to \infty$ in any subsector of the sector $\abs{\arg \lambda} < \pi/2$ is the generating function for the sequence $q_k^{(1)} \brackets{z}$
\begin{equation}
    \dfrac{\partial}{\partial \lambda} \ln{\theta \brackets{z, \lambda}}
    \sim
    \sum_{k \geq 0} q_k^{(1)} \brackets{z} \brackets{- 2 \lambda}^k.
\end{equation}
\end{rem}

\appendix
\renewcommand{\thesection}{A}
\section{} \label{Aappendix}
\subsection{Root systems and Weyl groups}
The contents of this subsection refers to the book \cite{kac1990infinite}.

Let us consider a Cartan matrix
\begin{gather}
    A = \brackets{a_{ij}}_{i, j \in I},
    \quad
    a_{ii} = 2, 
    \quad
    a_{ij} \in \mathbb{Z}_{\leq 0};
    \\ 
    a_{ij} = 0 
    \Leftrightarrow 
    a_{ji} = 0
    \quad
    \brackets{i \neq j}.
\end{gather}

\begin{defn}
    A \textit{Weyl group} $W \brackets{A}$ associated with a Cartan matrix $A$ is given by generators $s_i$, $i \in I$, and fundamental relations
    \begin{align}
        s_i^2 
        &= 1
        &
        \text{ for all }
        i \in I,
        & &
        \\
        s_i s_j 
        &= s_j s_i
        &
        \text{ if }
        &
        &
        \brackets{a_{ij}, a_{ji}} 
        &= \brackets{0, 0},
        \\
        s_i s_j s_i
        &= s_j s_i s_j
        &
        \text{ if } 
        &
        &
        \brackets{a_{ij}, a_{ji}} 
        &= \brackets{-1, -1},
        \\
        s_i s_j s_i s_j
        &= s_j s_i s_j s_i
        &
        \text{ if }
        &
        &
        \brackets{a_{ij}, a_{ji}} 
        &= \brackets{-1, -2},
        \\
        s_i s_j s_i s_j s_i s_j
        &= s_j s_i s_j s_i s_j s_i
        &
        \text{ if }
        &
        &
        \brackets{a_{ij}, a_{ji}} 
        &= \brackets{-1, -3}.
    \end{align}
\end{defn}

\begin{rem}
    The Weyl group can be realized as a group of reflecting action on a vector space.
\end{rem}

For further, we will consider only \textit{symmetrizable} cases, i.e. when matrix elements $a_{ij}$ are realized by the inner product $\angleBrackets{\cdot, \cdot} : V \times V \to \mathbb{Q}$ as
\begin{equation}
    a_{ij} = \angleBrackets{\alpha_i^*, \alpha_j},
    \quad
    \alpha_i^* = \dfrac{2 \alpha_i}{\angleBrackets{\alpha_i, \alpha_i}},
    \quad
    \angleBrackets{\alpha_i, \alpha_i} \neq 0,
\end{equation}
where $\alpha_i$ and $\alpha_i^*$ are called \textit{the simple roots} and \textit{the simple coroots}. 

For each element $\alpha \in V$ we define the reflection $r_{\alpha} : V \to V$ of $\lambda \in V$ with respect to $\alpha \in V$ by the rule
\begin{equation}
    r_{\alpha} \brackets{\lambda} 
    = 
    \lambda - 2 \dfrac{\angleBrackets{\alpha, \lambda}}{\angleBrackets{\alpha, \alpha}} \alpha
    = \lambda - \angleBrackets{\alpha^*, \lambda} \alpha.
\end{equation}

Each system of roots is associated with \textit{a Dynkin diagram} which can be constructed by folllowing rules
\begin{table}[H]
    \centering
    \begin{tabular}{c|c}
         $\brackets{a_{ij}, a_{ji}}$
         &  
         $\alpha_i
         \qquad
         \alpha_j$
         \\
         \hline
         $\brackets{0, 0}$
         & 
         $\circ \qquad \circ$
         \\
         $\brackets{-1, -1}$
         & 
         $\circ \text{\-------} \circ$
         \\
         $\brackets{-2, -2}$
         & 
         $\circ \! \Longleftrightarrow \!
         \circ$
    \end{tabular}
\end{table}

\begin{defn}
    A permutation $\sigma$ of index set $I$ such that $a_{\sigma\brackets{i}\sigma\brackets{j}}= a_{ij}$ is \textit{a Dynkin diagram automorphism} $G$.
\end{defn}

\begin{defn}
    An \textit{extented Weyl group} $\tilde{W} \brackets{A}$ is generated by the Weyl group $W \brackets{A}$ together with an additional generator $\pi_{\sigma}$, where $\sigma \in G$, such that $\pi_{\sigma} s_i = s_i \pi_{\sigma}$.
\end{defn}

\begin{rem}
    The Cartan matrices are classified into three types: \textit{finite} type, \textit{affine} type, \textit{indefinite} type. Hence, the root system and Weyl groups are classified into the same types.
\end{rem}
\subsection{Plots of roots of Yablonskii-Vorobiev-type polynomials} 
\label{YBplotsSection}

\phantom{hh}
\begin{figure}[H]
\centering
\begin{minipage}[c]{0.18\linewidth}
\centering
    \includegraphics[width=\textwidth]{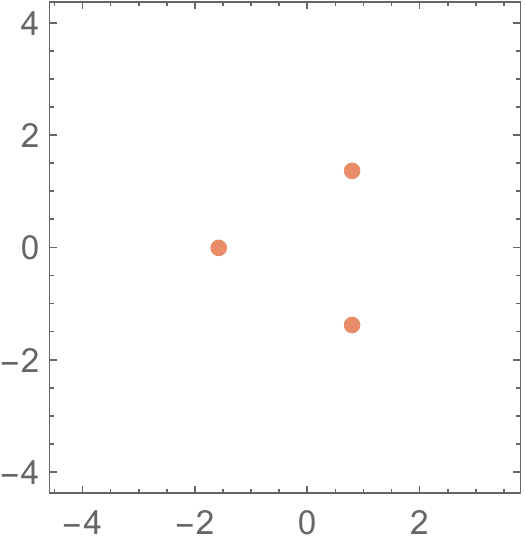}
    \text{\small $m = 2$}
\end{minipage}
\begin{minipage}[c]{0.18\linewidth}
\centering
    \includegraphics[width=\textwidth]{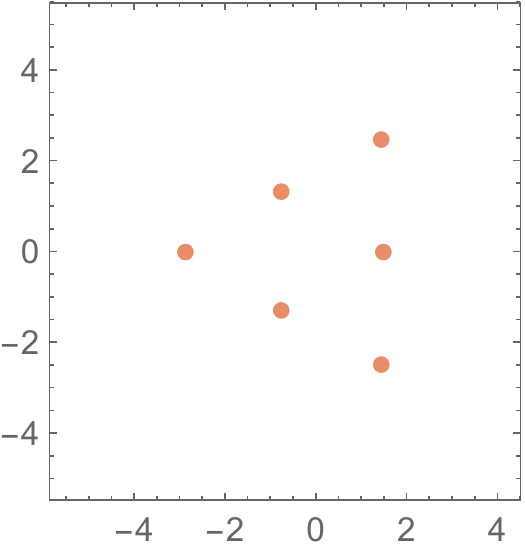}
    \text{\small $m = 3$}
\end{minipage}
\begin{minipage}[c]{0.18\linewidth}
\centering
    \includegraphics[width=\textwidth]{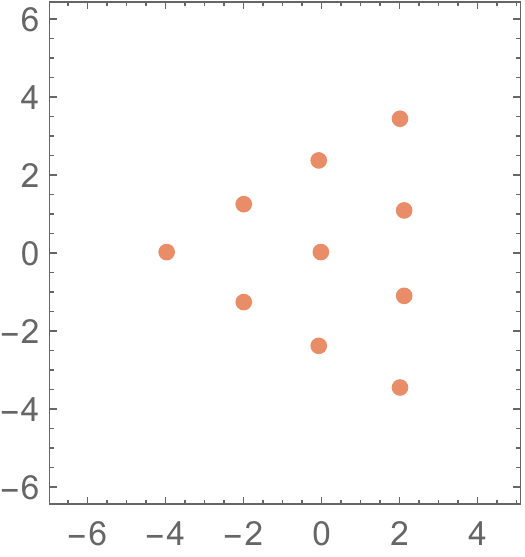}
    \text{\small $m = 4$}
\end{minipage}
\begin{minipage}[c]{0.18\linewidth}
\centering
    \includegraphics[width=\textwidth]{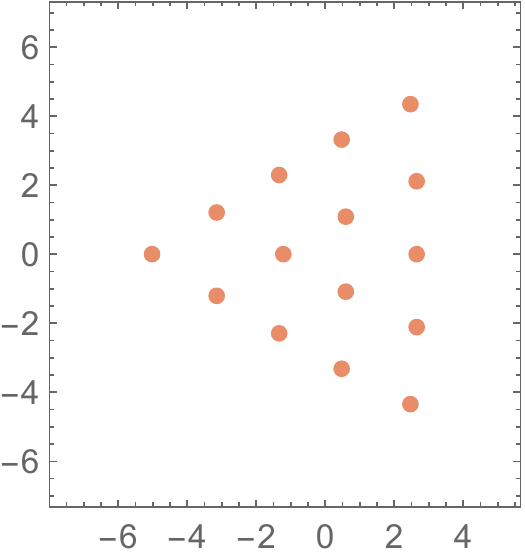}
    \text{\small $m = 5$}
\end{minipage}
\begin{minipage}[c]{0.18\linewidth}
\centering
    \includegraphics[width=\textwidth]{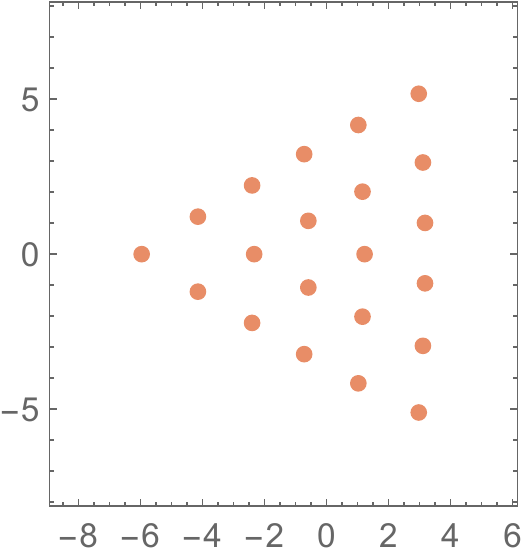}
    \text{\small $m = 6$}
\end{minipage}
\caption{\small Plots of Yablonskii-Vorobiev polynomials roots $u_m^{(1)} \brackets{z} = 0$, $m = 2, 3, 4, 5, 6$.}
\label{fig:rootsYV1}
\end{figure}

\begin{figure}[H]
\centering
\begin{minipage}[c]{  0.18\linewidth}
\centering
    \includegraphics[width=\textwidth]{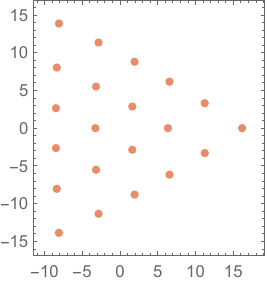}
    \text{\small $t_1 = -20$}
\end{minipage}
\begin{minipage}[c]{  0.18\linewidth}
\centering
    \includegraphics[width=\textwidth]{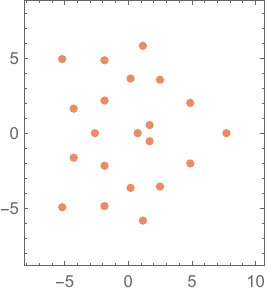}
    \text{\small $t_1 = -1$}
\end{minipage}
\begin{minipage}[c]{  0.18\linewidth}
\centering
    \includegraphics[width=\textwidth]{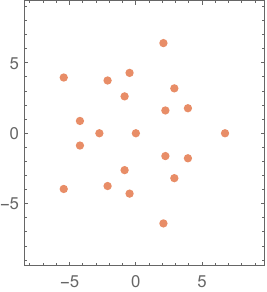}
    \text{\small $t_1 = 0$}
\end{minipage}
\begin{minipage}[c]{  0.18\linewidth}
\centering
    \includegraphics[width=\textwidth]{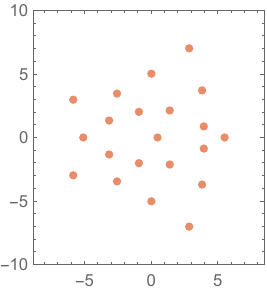}
    \text{\small $t_1 = 1$}
\end{minipage}
\begin{minipage}[c]{  0.18\linewidth}
\centering
    \includegraphics[width=\textwidth]{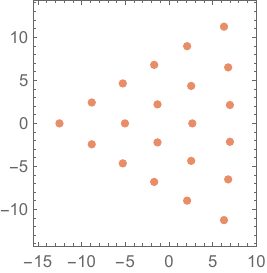}
    \text{\small $t_1 = 10$}
\end{minipage}
\caption{\small Plots of Yablonskii-Vorobiev-type polynomials roots $u_6^{(2)} \brackets{z, t_1} = 0$ at different $t_1$.}
\label{fig:rootsYV26}
\end{figure}

\begin{figure}[H]
\centering
\begin{minipage}[c]{  0.18\linewidth}
\centering
    \includegraphics[width=\textwidth]{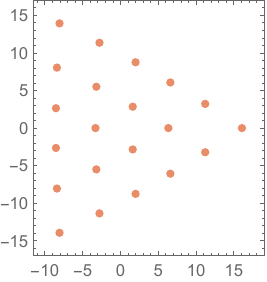}
    \text{\small $t_1 = -20$}
\end{minipage}
\begin{minipage}[c]{  0.18\linewidth}
\centering
    \includegraphics[width=\textwidth]{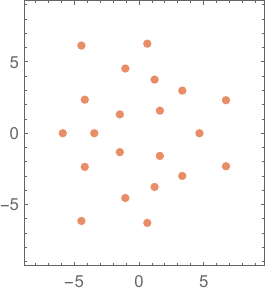}
    \text{\small $t_1 = -1$}
\end{minipage}
\begin{minipage}[c]{  0.18\linewidth}
\centering
    \includegraphics[width=\textwidth]{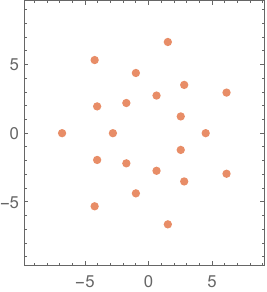}
    \text{\small $t_1 = 0$}
\end{minipage}
\begin{minipage}[c]{  0.18\linewidth}
\centering
    \includegraphics[width=\textwidth]{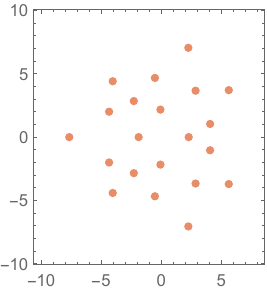}
    \text{\small $t_1 = 1$}
\end{minipage}
\begin{minipage}[c]{  0.18\linewidth}
\centering
    \includegraphics[width=\textwidth]{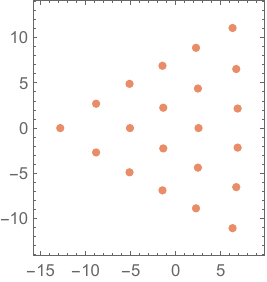}
    \text{\small $t_1 = 10$}
\end{minipage}
\caption{\small Plots of Yablonskii-Vorobiev-type polynomials roots $u_6^{(3)} \brackets{z, t_1, t_2} = 0$ at different $t_1$ and $t_2 = 0$.}
\label{fig:rootsYV63_1}
\end{figure}

\begin{figure}[H]
\centering
\begin{minipage}[c]{  0.18\linewidth}
\centering
    \includegraphics[width=\textwidth]{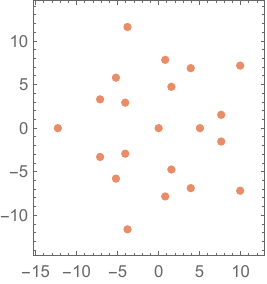}
    \text{\small $t_2 = -20$}
\end{minipage}
\begin{minipage}[c]{  0.18\linewidth}
\centering
    \includegraphics[width=\textwidth]{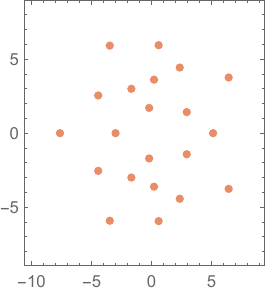}
    \text{\small $t_2 = -1$}
\end{minipage}
\begin{minipage}[c]{  0.18\linewidth}
\centering
    \includegraphics[width=\textwidth]{5_ap1/pictures/YBpol63_0_0}
    \text{\small $t_2 = 0$}
\end{minipage}
\begin{minipage}[c]{  0.18\linewidth}
\centering
    \includegraphics[width=\textwidth]{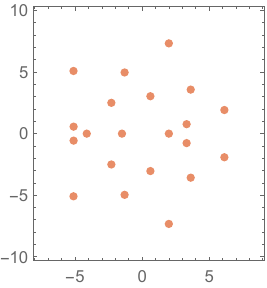}
    \text{\small $t_2 = 1$}
\end{minipage}
\begin{minipage}[c]{  0.18\linewidth}
\centering
    \includegraphics[width=\textwidth]{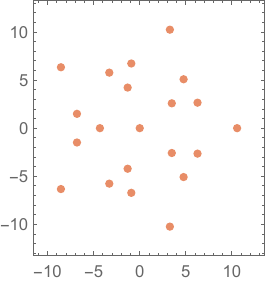}
    \text{\small $t_2 = 10$}
\end{minipage}
\caption{\small Plots of Yablonskii-Vorobiev-type polynomials roots $u_6^{(3)} \brackets{z, 0, t_2} = 0$ at $t_1 = 0$ and different $t_2$.}
\label{fig:rootsYV63_2}
\end{figure}

    \bibliographystyle{alpha}
    \bibliography{bib}
    \nocite{*}
\end{document}